\documentclass[11pt]{article}

\usepackage{amssymb,amsmath,amsthm,amsfonts,titlesec}
\usepackage[utf8]{inputenc}
\usepackage[T1]{fontenc}
\usepackage[tt=false, type1=true]{libertine}
\usepackage[varqu]{zi4}
\usepackage[libertine]{newtxmath}
\usepackage[margin=1in]{geometry}
\usepackage{enumitem}
\setlist{nosep,topsep=2pt,leftmargin=*}

\usepackage{hyperref}
\hypersetup{
    colorlinks,
    allcolors=teal
}

\usepackage[sortcites,sorting=nyt,style=alphabetic,maxbibnames=7]{biblatex}
\usepackage[ruled]{algorithm2e} 
    
    \SetAlFnt{\small}
    \SetAlCapFnt{\small}
    \SetAlCapNameFnt{\small}
    \SetAlCapHSkip{0pt} 
    \IncMargin{-\parindent}
    
\usepackage{tikz,xifthen,physics,xcolor,suffix,nicefrac,enumitem,subcaption,float,thm-restate,wrapfig}
\usetikzlibrary{shapes, arrows, positioning}
\usepackage[bb=dsserif]{mathalpha}
\usepackage[capitalize]{cleveref}
   
\renewcommand{\paragraph}[1]{\smallskip\noindent\textbf{#1.}}
\newcommand{\itparagraph}[1]{\textit{#1.}}

\newenvironment{optimization}{\crefalias{equation}{optimization}\equation\aligned}{\endaligned\endequation}
\newenvironment{optimization*}{\displaymath\aligned}{\endaligned\enddisplaymath}

\crefname{optimization}{Optimization Program}{Optimization Programs}
\creflabelformat{optimization}{(#2#1#3)}

\DeclareMathOperator*{\E}{\mathbb{E}}
\DeclareMathOperator*{\PR}{\mathbb{P}}
\newcommand{\Ex}[2][]{\E_{\substack{#1}}\left[\mathchoice{\big.}{}{}{}#2\right]}

\renewcommand{\Pr}[2][]{\PR_{\substack{#1}}\left[\mathchoice{\big.}{}{}{}#2\right]}

\newcommand{\One}[1]{\mathbb{1}_{#1}}
\newcommand{\orderT}[1]{\tilde{\mathcal{O}}\qty(#1)}
\WithSuffix\newcommand\orderT*[1]{\tilde{\mathcal O}(#1)}
\DeclareMathOperator*{\argmax}{arg\;\!max}
\DeclareMathOperator*{\argmin}{arg\;\!min}
\newcommand{\m}[1]{\begin{bmatrix}#1\end{bmatrix}}

\newcommand{\R}{\mathbb{R}}
\newcommand{\Rz}{\R_{\ge 0}}

\newcommand{\calA}{\mathcal{A}}
\newcommand{\calB}{\mathcal{B}}
\newcommand{\floor}[1]{\left\lfloor{#1}\right\rfloor}
\WithSuffix\newcommand\floor*[1]{\lfloor{#1}\rfloor}
\newcommand{\ceil}[1]{\left\lceil{#1}\right\rceil}
\WithSuffix\newcommand\ceil*[1]{\lceil{#1}\rceil}

\let\e\varepsilon
\let\l\lambda
\let\d\delta
\let\sub\subseteq

\newcommand{\conv}{\textrm{cnvx}}
\newcommand{\opt}{\mathtt{OPT}}
\newcommand{\hatvO}{\hat{v}_{\smash{\mathtt{O}}}}
\newcommand{\hatVO}{\hat{V}_{\smash{\mathtt{O}}}}

\newcommand{\vO}{v_{\smash{\mathtt{O}}}}
\newcommand{\vL}{v_{\smash{\mathtt{L}}}}
\newcommand{\UO}{U_{\smash{\mathtt{O}}}}
\newcommand{\UL}{U_{\smash{\mathtt{L}}}}
\newcommand{\PO}{P_{\smash{\mathtt{O}}}}
\newcommand{\PL}{P_{\smash{\mathtt{L}}}}
\newcommand{\rO}{\rho_{\smash{\mathtt{O}}}}
\newcommand{\rL}{\rho_{\smash{\mathtt{L}}}}
\newcommand{\BR}{\textrm{BR}}
\newcommand{\gs}{g^\star}

\newcommand{\Line}[4]{%
    #1&%
    \ifthenelse{\isempty{#2}}{\phantom{=}}{#2}%
    #3%
    \ifthenelse{\isempty{#4}}{}{&&\qquad\left(\big.\substack{#4}\right)}%
}

\newtheorem{theorem}{Theorem}[section]

\newtheorem{lemma}[theorem]{Lemma}
\newtheorem{claim}[theorem]{Claim}

\theoremstyle{definition}
\newtheorem{remark}[theorem]{Remark}

\newtheorem{definition}[theorem]{Definition}

\crefname{property}{Property}{Properties}

\newtheorem*{fact*}{Fact}
\crefname{fact}{Fact}{Facts}

\crefname{assumption}{Assumption}{Assumptions}

\newcommand{\citet}[1]{\textcite{#1}}

\title{Learning vs. Optimizing Bidders in Budgeted Auctions}

\author{
Giannis Fikioris%
\thanks{Supported by the Google PhD Fellowship and the ONR MURI grant N000142412742.}\\
Cornell University\\
\texttt{gfikioris@cs.cornell.edu}
\and
Balasubramanian Sivan
\\
Google Research\\
\texttt{balusivan@google.com}
\and
\'Eva Tardos%
\thanks{Supported in part by AFOSR grant FA9550-23-1-0410, AFOSR grant FA9550-231-0068, ONR MURI grant N000142412742, and through a SPROUT Award from Cornell Engineering.}\\
Cornell University\\
\texttt{eva.tardos@cornell.edu}
}

\begin{document}

\maketitle{}
\thispagestyle{empty}

\begin{abstract}
    The study of repeated interactions between a learner and a utility-maximizing optimizer has yielded deep insights into the manipulability of learning algorithms. However, existing literature primarily focuses on independent, unlinked rounds, largely ignoring the ubiquitous practical reality of budget constraints. In this paper, we study this interaction in repeated second-price auctions in a Bayesian setting between a learning agent and a strategic agent, both subject to strict budget constraints, showing that such cross-round constraints fundamentally alter the strategic landscape.

First, we generalize the classic Stackelberg equilibrium to the \emph{Budgeted Stackelberg Equilibrium}. We prove that an optimizer's optimal strategy in a budgeted setting requires time-multiplexing; for a $k$-dimensional budget constraint, the optimal strategy strictly decomposes into up to $k+1$ distinct phases, with each phase employing a possibly unique mixed strategy (the case of $k=0$ recovers the classic Stackelberg equilibrium where the optimizer repeatedly uses a single mixed strategy). Second, we address the intriguing question of non-manipulability. We prove that when the learner employs a standard Proportional controller (the ``P'' of the PID-controller) to pace their bids, the optimizer's utility is upper bounded by their objective value in the Budgeted Stackelberg Equilibrium baseline. By bounding the dynamics of the PID controller via a novel analysis, our results establish that this widely used control-theoretic heuristic is actually strategically robust.
\end{abstract}

\clearpage
\setcounter{page}{1}
\section{Introduction} \label{sec:intro}

The study of online learning in strategic environments has blossomed over the past decade into a fertile and dynamic area of theoretical computer science. A recent central theme of this literature is understanding the long-run outcomes when a standard, no-regret learning algorithm interacts with a fully rational, utility-maximizing optimizer. In these asymmetric interactions, prior work has consistently asked: What extra utility can an optimizer extract when playing against a predictable learner, rather than a strategic opponent? Which learning algorithms are fundamentally robust against such manipulation? 

This inquiry has yielded a rich understanding of optimizer-learner dynamics. We know that against standard no-regret learners, optimizers can always earn an average objective of their Stackelberg value in the single-shot game~\cite{DSS19} (by simply playing their Stackelberg strategy in each round), and often much more than this. Preventing this manipulation requires learners to minimize stronger notions of regret---swap regret in normal form games~\cite{DSS19, MMSS22} and profile swap regret in broader polytope games~\cite{ACMMSS25}. However, all of this foundational work operates under a simplifying assumption: decisions are made round-by-round, independent of the past.

In this paper, we add to this line of work a new, practically ubiquitous, dimension: \emph{budget constraints}. In real-world repeated interactions---most notably in digital advertising and online auctions~\cite{GooglePacing, TwitterPacing}---agents rarely operate with infinite liquidity. Bidding under budget constraints is a thriving area of research in its own right, precisely because budgets fundamentally alter strategic behavior, even in traditionally truthful mechanisms like second-price auctions. In a further departure from prior work, which has overwhelmingly modeled the optimizer as an unconstrained \emph{seller} (e.g., dynamically setting reserve prices to extract revenue: see~\cite{BMSW18, KSS24}) and the learner as a \emph{buyer}, we shift this paradigm to study a setting where \emph{both} the optimizer and the learner are competing \emph{buyers}, with budget constraints. 
We study the interaction between two budget-constrained agents participating in repeated auctions (we primarily focus on second-price auctions, and also show, under some assumptions, that our results extend to first-price auctions), where item values are drawn repeatedly from a joint distribution. One agent employs a learning algorithm to compute their bids, while the opponent is a strategic optimizer. Both share the same objective: maximizing their total accumulated value subject to a budget constraint.

Budgets introduce a significant theoretical wrinkle---by linking decisions across rounds, budgets shatter the isolation of the repeated game framework. First, from the optimizer's perspective, it is unclear what the generalization of the Stackelberg strategy is when they are budget constrained. Second, from the learner's perspective, it is unclear whether they can achieve standard properties like no-regret, and best respond to the optimizer's strategy. The learner being able to do as good as their best distribution in hindsight becomes complicated because of the ``spend or save'' dilemma introduced by the budget constraints (see~\cite{ISSS22}), that forces the learner to weigh the value of spending today against the opportunity cost of having depleted funds tomorrow.
The budget constraint's introduction of coupling across rounds raises a sequence of challenging, fundamental questions: What is the baseline objective value an optimizer is guaranteed to obtain against any learner in a budgeted setting? What strategy achieves this? 
Are there strategically robust learning algorithms guaranteeing that the optimizer cannot extract anything beyond this baseline?

\paragraph{Our Contributions}
In this paper, we resolve these questions by characterizing the limits of manipulation in budgeted settings and identifying a simple and practical learning algorithm that achieves non-manipulability in second-price auctions, which, under certain assumptions, extend to first-price auctions.

\itparagraph{The Budgeted Stackelberg Equilibrium}
Our first contribution generalizes the classic notion of a Stackelberg equilibrium to accommodate cross-round constraints, introducing what we call the \emph{Budgeted Stackelberg Equilibrium} (BSE), \cref{def:defs:budgeted_stackelberg}.
In a standard repeated game, an optimizer achieves their Stackelberg value by repeatedly sampling actions from a single, optimal mixed strategy, and the learner learning to best respond to this.
When the optimizer has a budget constraint, we show that this approach is insufficient.
Because the optimizer's value as a leader is not necessarily concave as a function of their spending (see \cref{fig:defs:SE_BSE}), they can achieve strictly higher utility by modulating their spend rate over time---for instance, aggressively overspending in an initial phase of length $q T$, and underspending for the remaining $(1-q)T$ rounds.
We show that an optimizer with a $k$-dimensional budget constraint might need to use up to $k+1$ different strategies in $k+1$ distinct temporal phases (\cref{lem:defs:budgeted_stackelberg_simple,lem:defs:budgeted_stackelberg_simple_mult_constraints}).
For $k=0$, i.e., for an unbudgeted optimizer, this recovers the classic definition of Stackelberg equilibrium. 

For the optimizer to achieve this increased utility by time-multiplexing, the learner needs to react to the change in the optimizer's strategy. Even in the absence of budget constraints, employing a no-regret algorithm is insufficient for the learner to react to a time-multiplexed optimizer strategy, and we would need no-adaptive regret algorithms.
But from a \emph{budget-constrained learner's} perspective, a time-multiplexing optimizer would make it impossible for them to best respond, even against the single best distribution in hindsight.~\cite{ISSS22} shows this is indeed the case, as the learner cannot avoid linear regret in an adversarial bandits with knapsacks setting\footnote{In fact, the adversary in the counterexample from~\cite{ISSS22} looks similar to the optimizer's two-phase strategy that we use.}.
Without knowledge of when and what the optimizer would do in the future, a budget-constrained learner's reasonable strategy is to optimize their value while spending their budget evenly over time. This boils down to making the learner ``best respond'' to each of the optimizer's phases.
Existing learning algorithms for budgeted auctions, such as \cite{BG19}, do learn to optimize value in each phase and switch fast enough, even when the occurrence of that change of phase is unclear.
We discuss this further in~\cref{remark:defs:opt_achievable,rem:auctions:BSE_achievable}, and show that the PID controller learning algorithm we analyze can achieve this in repeated auction settings.

Before proceeding to whether the optimizer can do better than their BSE, we remark that in a BSE, the optimizer's budget constraint forces them to $k+1$ phases (with a unique mixed strategy in each phase), even if the learner had no budget constraint, and were able to instantly best respond to each phase immediately.
This is unlike an optimizer with no budget constraint, where, if the learner can instantly best-respond to each phase, the optimizer would find \emph{no additional utility in employing more than $1$ phase}.

\itparagraph{Strategic Robustness of simple PID Controller Algorithm}
While so far we have discussed the challenges in generalizing the Stackelberg equilibrium to set the lower bound for how much utility an optimizer should hope to get, the intriguing next step is whether the optimizer can $\Omega(T)$ more than what they can in the BSE.
For instance, can the optimizer exploit the learner's slow responses by using a super-constant ($\omega(1)$) number of phases or slowly drifting their strategy, rather than sticking to just $2$ phases as in their BSE?

Our main contribution is addressing this intriguing question of non-manipulability in budgeted auction settings.
We consider the simple and well-studied bidding algorithm referred to as pacing~\cite{BG19, DBLP:journals/ior/ConitzerKSM22, DBLP:journals/mansci/ConitzerKPSMSW22}, where a player bids $\l$ times their value, for some value $\l$ and iteratively updates $\l$.
With the right choice of $\l$, this strategy is optimal in second-price auctions, and more generally in any truthful auction~\cite{BG19}. Our learning algorithm, much like a PID controller, adjusts this pacing multiplier $\l$ proportionally to the over/under spending of each round, which converges to the optimal such $\l$ against stationary environments.
While this use of pacing multipliers is optimal only in truthful auctions, they are the go-to choice for bidding under budget constraints in very large-scale ad platforms in industry~\cite{GooglePacing, TwitterPacing, DBLP:journals/mansci/ConitzerKPSMSW22}, even when the auction is non-truthful.
Due to this, even though our main focus is second-price, we also study the above algorithm when the underlying auction is first-price. We show that our results extend to first-price auctions too, under the assumption that the learner's action space is restricted to exclusively using pacing strategies (this restriction is w.l.o.g. for second-price auctions).

Our main result (\cref{thm:main}) is that this PID controller learning algorithm is \textit{robust to strategic manipulation} when the optimizer and learner values are repeatedly drawn from a bounded joint distribution (but independent across rounds).
Specifically, we show that an optimizer cannot gain value more than their BSE value + $\order*{T^{2/3}}$ (\cref{thm:main}).
We show this under a mild distributional assumption\footnote{We require that $\mathbb{P} [0 < v_{\text{learner}} < \epsilon \cdot v_{\text{optimizer}}] = 0$ for some small $\epsilon$. On the other hand, when this is not true, the optimizer can obtain $\Omega(T^{1-\e})$ more than their BSE value by exploiting the transition periods of the learner (\cref{sec:app:dynamic:example}).}, which is satisfied when the smallest non-zero point in the distribution's support is bounded away from zero (e.g., values being in pennies). 
Our result establishes that the PID controller is not merely a practical heuristic for pacing in auctions, but, despite its predictability due to the deterministic updates, a strategically robust and non-manipulable learning algorithm.
It formally bridges the gap between the widespread use of the PID controller for bidding in large corporations~\cite{GooglePacing, TwitterPacing, DBLP:journals/mansci/ConitzerKPSMSW22}, and the rigorous non-manipulability guarantees called for in the strategic environments they get deployed in.

\paragraph{Technical Overview}
Our results require bridging techniques from online learning, control theory, and the geometry of repeated games. The technical core of the paper is divided into two parts: establishing the structural properties of the BSE (in \cref{sec:def} for general games and in \cref{sec:auctions} for auctions), and proving the non-manipulability of the PID controller learning algorithm (Sections \ref{sec:warmup} and \ref{sec:dynamic}).

\itparagraph{The Geometry of Budgeted Stackelberg Equilibria}
To maximize their total return, a budgeted optimizer must operate on the concave closure of their utility-spending curve. Geometrically, the feasible region of the optimizer's utility and payments is constrained by the learner's best responses, making it non-convex. By allowing the optimizer to time-multiplex, this feasible region expands to the convex hull of these points (see \cref{fig:defs:feasible_region}). We show that by taking the convex combination of two points on the original utility curve, the optimizer can achieve strictly more utility than with any single fixed distribution. We generalize this structural insight to establish that under $k$ different budget constraints, the optimal strategy requires at most $k+1$ distributions, leveraging Carathéodory's theorem (\cref{lem:defs:budgeted_stackelberg_simple,lem:defs:budgeted_stackelberg_simple_mult_constraints}).

\itparagraph{Bounding Manipulability in Auctions via Lagrangian Relaxation}
The primary technical challenge lies in proving that an optimizer cannot exploit the predictable, deterministic nature of this PID update rule.
Since the learner might play a different pacing multiplier in every round (since there are uncountably many), we cannot use classic notions of swap-regret that are the standard way to prove non-manipulability (see~\cite{DSS19,MMSS22}).
Since budgeted learning algorithms fail to satisfy classical learning properties (given our earlier discussion about no-regret being impossible), we propose a novel approach and show that, while the pacing multiplier varies across rounds, the optimizer cannot take advantage of such transitions. 
To this end, we analyze the recursive maximum Lagrangian reward $R_\tau(\lambda)$ the optimizer can achieve when there are $\tau$ rounds remaining and the learner is currently at pacing multiplier $\lambda$. 
We prove by induction that this reward is bounded by an affine form $R_\tau(\lambda) \le A \tau + \frac{1}{\eta}G(\lambda)$, where $A$ acts as the average per-round reward and $G(\lambda)$ bounds the optimizer's benefit from the learner occupying state $\lambda$. The crux of the proof relies on carefully constructing $G(\lambda)$ so that its derivative aligns with a Lagrangian dual variable $g^*(\lambda)$ corresponding to the learner's budget constraint for a fixed $\l$. However, a direct application fails because the optimal Lagrange dual variable might be infinite for certain $\l$. 
Relaxing our goal of using the optimal dual variable to simply coming up with a dual variable that is ``good enough'' still leads to problems: $\gs(\lambda)$ can be highly discontinuous, invalidating standard Taylor approximations. To remedy this, we introduce a novel smoothing technique, averaging $\gs(\lambda)$ over a small interval $\sigma$ to produce a strictly Lipschitz continuous surrogate $\gs_\sigma(\lambda)$. By integrating this smoothed dual variable, we calculate an appropriate $G(\l)$ and successfully bound the optimizer's benefit from the drift of the PID controller, proving that the learner adapts fast enough to cap the optimizer's total extracted value to the BSE baseline plus an additive error of $\order*{T^{2/3}}$. To build intuition, we first present a simplified proof in \cref{sec:warmup} assuming uniform value distributions among bidders, followed by the formal arguments in \cref{sec:dynamic}.

\subsection*{Related Work}  \label{sec:related}

In this section, we summarize some key past work that is related to our paper, focusing on the recent line of work on strategizing against learners.
We include a more extended presentation in \cref{sec:app:related}, where we focus on online learning with constraints and the usage of pacing multipliers.

\itparagraph{Strategizing Against Learners} A recent line of work has pursued the study of utility-maximizing responses to learning algorithms, and has asked what such a utility maximizer, referred to as the optimizer, earns in the long run, and what learning algorithms are not strategically manipulable. Initiated by the work of~\citet{BMSW18} in the setting of auctions, this has blossomed into a thriving area with such questions being asked in various settings including auctions~\cite{BMSW18,DSS19a,RZ24}, normal-form games~\cite{DSS19,CHJ20,BSV23,HPY24}, Bayesian games~\cite{MMSS22, RZ24}, polytope games~\cite{ACMMSS25}, principal-agent problems~\cite{GKSTVWW24}. Some important insights that come out of this line of work include the fact that learners using a no-swap regret algorithm will cap the optimizer's utility (up to $o(T)$ terms) at the Stackelberg value of the single-shot game~\cite{DSS19}, and that the optimizer can always earn the Stackelberg value.~\cite{MMSS22} made this connection tight by showing that no-swap regret is also necessary: i.e., there exists games where the optimizer can exceed Stackelberg value by exactly a constant multiple of the learner's swap regret. In the more general class of polytope games, introduced in~\cite{MMSS22}, that includes Bayesian games and extensive form games as special cases,~\cite{ACMMSS25} showed the right generalization of swap regret to ensure non-manipulability is the notion of profile swap regret.

\itparagraph{Learning and Bidding in Budget-Constrained Auctions}
A rich line of previous work has focused on no-regret guarantees for learning in budgeted auctions.
\cite{BG19, FPW23, AFZ25, WYDK23} learn optimal bidding strategies in first- and second-price auctions with budget and ROI constraints.
\cite{DBLP:conf/innovations/GaitondeLLLS23, DBLP:conf/colt/LucierPSZ24, DBLP:journals/mor/FikiorisT25} focus on learning optimal pacing multipliers and the implied welfare guarantees.
We refer the reader to our extended related work in \cref{sec:app:related} and the survey by \cite{AggarwalBBBDFGLLMMMMLPP24}.

\itparagraph{Learner vs Optimizer with Budgets}
To the best of our knowledge, the overlap between the two lines of work, namely, strategizing against learners and bidding under budget constraints, is empty except for one prior work.~\cite{fikioris2026robust} study the same learning algorithm as us, but in a very restricted setting where all the players' values are $1$.
Their main result is that, when all players use this algorithm, each player's number of wins is approximately proportional to their budget share, even in short intervals.
They also show that in the simple setting they consider, all players running this algorithm is an equilibrium: no player can gain significantly more wins by a different strategy.
Due to the simplicity of their value distribution of all $1$s, the BSE in their setting is devoid of all the complexities we have to deal with in our paper: there is no need for multiple action distributions, and the BSE is the same as the Second Price Pacing Equilibrium (which can be thought of as the Nash equilibrium).

\section{Budgeted Stackelberg Equilibrium} \label{sec:def}

In this section, we formally define the Budgeted Stackelberg Equilibrium (BSE), along with what it means for a learning algorithm to be manipulable in general games (see Section~\ref{sec:auctions} for the auctions setting).
Before doing so, we present a simple example to illustrate the benefit of the optimizer using multiple action distributions rather than using just one like in the classical Stackelberg Equilibrium.

\paragraph{Illustrative Example}
Here, we present a simple example of how using multiple action distributions can be strictly stronger than using only one distribution, even when the learner best responds immediately.
Switching strategies is beneficial in unbudgeted games only if the learning algorithm is slow at best responding. We define the following game between two players, the optimizer (leader) and the learner (follower), each having two actions.
For simplicity, we assume that only the optimizer has a budget constraint, unlike our main results in \cref{sec:dynamic}.
To define the game, consider the following matrices:
\begin{equation*}
    \UO
    =
    \m{3&0\\0&1}
    \qquad
    \UL
    =
    \m{3&0\\0&1}
    \qquad
    \PO
    =
    \m{3&0\\0&0}
\end{equation*}

where\footnote{We denote with $\Delta(\calA)$ the distribution over some set $\calA$. For simple cases we abuse this notation: e.g., $\Delta([k])$ is the set of $k$-dimensional nonnegative vectors whose elements sum up to $1$.}, if $x \in \Delta([2])$ and $y \in \Delta([2])$ are the action distributions of the optimizer and the learner, respectively, then $x^\top \UO y$ is the optimizer's expected utility, $x^\top \UL y$ is the learner's expected utility, and $x^\top \PO y$ is the optimizer's expected payment, which she wants to keep less than $\rho$, for some $\rho \in [0, 3]$. 
In the classic Stackelberg Equilibrium (where the optimizer has no budget constraint), we pick $x$ and $y$ so that $y$ is a best response to $x$, while maximizing the optimizer's utility.
Extending this idea, we simply add a constraint for the optimizer's budget; formally, the optimizer's value when her budget share is $\rho$ is $\textrm{SE}(\rho)$:
\begin{equation} \label[optimization]{opt:defs:SE}
    \textrm{SE}(\rho) = \quad
    \max_{x, y} \quad x^\top \UO y
    \quad
    \textrm{ such that } \quad x^\top \PO y \le \rho
    \quad \textrm{ and } \quad
    y \in \argmax_{y'} x^\top \UL y'
\end{equation}

The blue dashed line in \cref{fig:defs:SE_BSE} shows the optimal value $SE(\rho)$.

\begin{figure}[t!]
    \centering
    \begin{subfigure}[t]{.48\textwidth}
        \centering
        \includegraphics[width=\linewidth]{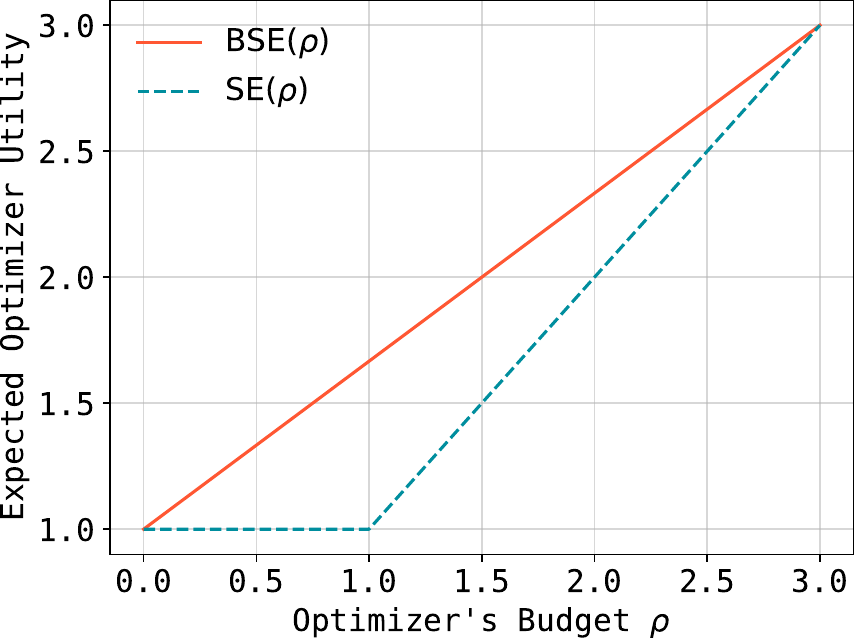}
        \caption{Optimal optimizer value by a single strategy obeying the budget (SE) and two strategies (BSE).}
        \label{fig:defs:SE_BSE}
    \end{subfigure}%
    \hfill%
    \begin{subfigure}[t]{.48\textwidth}
        \centering
        \includegraphics[width=\linewidth]{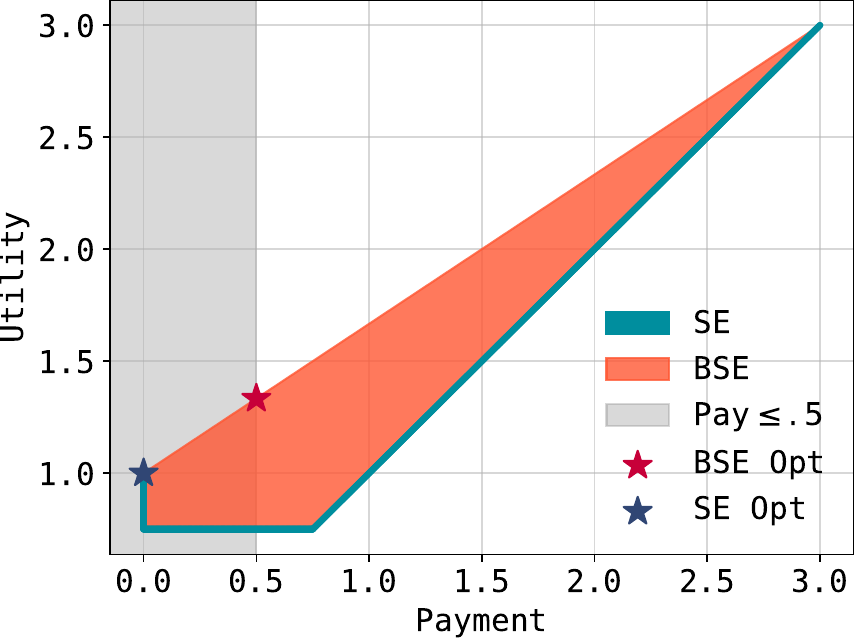}
        \caption{Possible optimizer utility and payment pairs when the learner best responds (feasible set of SE) and their convex hull (feasible set of BSE), along with example optimizer's budget restriction.}
        \label{fig:defs:feasible_region}
    \end{subfigure}
\end{figure}

Now consider the repeated setting, where the optimizer and the learner are playing the above game $T$ times, and the optimizer has a total budget of $T \rho$ with $\rho = 1/2$.
According to $SE(1/2)$, the best strategy is to always play her second action, which, assuming that the learner is approximately best responding, would result in her picking her second action in $T - o(T)$ rounds, resulting in $T - o(T)$ total utility for the optimizer.
However, the optimizer can do better by switching her action distribution once, assuming the learner will best respond appropriately.
Playing her first action $q T$ times, and then the second $(1-q)T$ times, the optimizer pays at most $3qT$, so with an average budget of $\rho = 1/2$, she can afford $q=\rho/3=1/6$.
If the Learner best responds in $T - o(T)$ rounds, this leads to $4/3T - o(T)$ utility.
We can think of this as the optimizer mixing the two strategies corresponding to $SE(0)$ and $SE(3)$.
In fact, the optimal value of the BSE for any $\rho$ is achieved by mixing these two (for the right $q$), and its value is the concave closure of the values achieved by $\textrm{SE}(\rho)$, see \cref{fig:defs:SE_BSE}.
This increase in utility comes from the BSE allowing the concave hull of optimizer utility/payment pairs allowed by SE, which can be a strictly larger set, see \cref{fig:defs:feasible_region}.

\paragraph{Formal Definition}
Next, we give our formal definition for the \textit{Budgeted Stackelberg Equilibrium} (BSE) notion that we illustrated in the previous example.
For simplicity in this section, we give the definition when the optimizer (leader) has only one budget constraint, since this will be the focus of the rest of the paper.
We define it for an optimizer with multiple constraints in \cref{sec:app:defs}.

Our definition abstracts away the learner's behavior by a best response function: for an action distribution $A$ of the optimizer, $\BR(A)$ are the learner action distributions that are best responses.
This $\BR$ function is simple to define for a learner who does not have any constraints (like in our previous example), but it is hard to universally define for a learner with a budget constraint like the one we consider later.
The standard approach for learning algorithms with budget constraints is to maximize the expected reward while trying to spend the budget evenly across time, which is what we also propose in \cref{sec:auctions}. This is optimal in Bayesian environments and the standard approach for adversarial environments \cite{BCCF24, FT23}.

\begin{definition}[Budgeted Stackelberg Equilibrium] \label{def:defs:budgeted_stackelberg}
    Consider strategy spaces of two players, $\calA$ for the optimizer (leader) and $\calB$ for the learner (follower).
    For actions $a \in \calA$ and $b \in \calB$ let $U(a, b) \in \R$ be the utility of the optimizer and $P(a, b) \in \R$ be the payment of the optimizer for a resource with average budget $\rho \in \R$.
    For an optimizer's action distribution $A \in \Delta(\calA)$, let $BR(A) \sub \Delta(\calB)$ be the learner action distributions that are best responses to $A$.
    Then, a \textit{Budgeted Stackelberg Equilibrium} (BSE) is a distribution $\nu$ over $\Delta(\calA) \times \Delta(\calB)$ (i.e., a distribution over independent distributions), that maximize the expected utility of the optimizer while satisfying the budget constraint in expectation and the learner is best responding in every pair in the support of $\nu$.
    Formally, the value of the BSE is
    \begin{optimization} \label{opt:defs:budgeted_stackelberg}
        \sup_{\nu \in \Delta\qty\big( \Delta(\calA) \times \Delta(\calB) )}
        \quad &
        \Ex[(A, B) \sim \nu]{
            \Ex[a \sim A, b\sim B]{U(a, b)}
        }
        \\
        \textrm{\textup{ such that }} \quad
        \quad &
        \Ex[(A, B) \sim \nu]{
            \Ex[a \sim A, b\sim B]{P(a, b)}
        } \le \rho
        \qquad \textrm{\textup{ and }} \qquad
        B \in \BR(A) \quad \forall (A, B) \in \textrm{supp}(\nu)
    \end{optimization}
\end{definition}

We now simplify the above definition.
Specifically, we show that the optimal distribution $\nu$ can be supported on at most $2$ points.
More generally, in \cref{sec:app:defs}, we show that when the optimizer has $k$ constraints, the optimal $\nu$ can be supported on at most $k + 1$ points.

\begin{lemma} 
\label{lem:defs:budgeted_stackelberg_simple}
    The optimum of \cref{opt:defs:budgeted_stackelberg} is obtained using a $\nu$ supported on two points only.
\end{lemma}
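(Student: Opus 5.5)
The plan is to reduce the program to a problem about points in the plane and apply Carathéodory's theorem. For each feasible pair $(A,B)$ with $B \in \BR(A)$, map it to the point
\[
\phi(A,B) = \Big( \Ex[a\sim A, b\sim B]{P(a,b)},\ \Ex[a\sim A, b\sim B]{U(a,b)} \Big) \in \R^2 .
\]
Let $S \subseteq \R^2$ be the image of all best-responding pairs. Any feasible $\nu$ induces the point $\Ex[\nu]{\phi} = (p,u)$, which lies in the convex hull $\conv(S)$. Here convex combinations may a priori be infinite, i.e., integrals. Conversely, every finite convex combination of points of $S$ comes from a finitely supported $\nu$. So the value of \cref{opt:defs:budgeted_stackelberg} equals $\sup\{u : (p,u) \in \overline{\conv}(S) \text{ reachable},\ p \le \rho\}$.

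\textbf{Step 1.} I first show that the mean of any $\nu$ can be realized by a finitely supported distribution with the same mean. I would state this as a standard fact: the barycenter of a probability measure on $S\subseteq\R^2$ lies in $\conv(S)$ (not merely its closure). This holds because the expectation of a measure supported on a set lies in the relative interior of the convex hull of its support, or in a face, by induction on dimension.

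\textbf{Step 2.} Carathéodory's theorem in $\R^2$ writes the point as a combination of at most $3$ points of $S$. This must be reduced to $2$ using the structure of the objective. Given $(p,u) \in \conv(S)$ with $p\le\rho$, consider the vertical line through $p$. The set $\conv(S)\cap\{x=p\}$ is a segment, and its upper endpoint $(p,u')$ with $u'\ge u$ lies on the boundary of $\conv(S)$, if attained. A boundary point of a planar convex hull lies on a supporting line $\ell$. It is therefore a convex combination of points of $S\cap \ell$, which is a one-dimensional set, so Carathéodory in dimension $1$ gives at most $2$ points.

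More directly, one can use the linear-programming view: the weights of a finite mixture form an LP with two equality constraints (total mass $1$ and payment equal to $p$), maximizing utility. A basic optimal solution then has at most $2$ nonzero weights, which does not decrease utility and keeps payment $\le \rho$. I will use this LP argument for the finite case, which is clean: starting from any finitely supported feasible $\nu$, solve the LP over its support to get a $2$-point $\nu'$ with at least the same value.

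\textbf{Step 3 (main obstacle).} Passing from the supremum to an attained optimum. If the optimum over $\nu$ is attained at some $\nu^\star$, Steps 1--2 immediately give a $2$-point optimizer. If the statement is read as ``the supremum can be approached, or attained whenever it is, by $2$-point $\nu$,'' the same reduction applied to each element of a maximizing sequence suffices. For genuine attainment I would argue compactness. Assuming finite action sets and bounded $U,P$, the best-response correspondence has closed graph, so the set of best-responding pairs is compact and $S$ is compact. Then $\conv(S)$ is compact by Carathéodory, and the maximum of $u$ over $\conv(S)\cap\{p\le\rho\}$ is attained at a boundary point, which the supporting-line argument writes with $2$ points of $S$. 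I expect this regularity issue to be the only delicate step. I would handle it by stating the compactness assumption explicitly, and otherwise phrasing the conclusion as equality of suprema over general $\nu$ and over $2$-point $\nu$.
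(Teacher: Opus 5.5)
Your proposal is correct and follows essentially the paper's own argument: map best-responding pairs to (utility, payment) points in $\R^2$, invoke Carath\'eodory to get at most three points, and reduce to two via a basic optimal solution of the two-constraint mixing LP (the paper's main-text sketch also uses your supporting-line/boundary view). Your Step 1 (the barycenter of a general $\nu$ lies in the convex hull itself, not just its closure) and your Step 3 (attainment only under a compactness assumption, otherwise equality of suprema) are more careful than the paper, which simply passes to $\conv(\bar F)$ for the closure $\bar F$ and in effect only establishes equality of the optimal values.
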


Let $U(A, B) = \Ex[a \sim A, b \sim B]{U(a, b)}$ and similarly for $P(A, B)$.
We prove the lemma by considering all the points $\qty\big( U(A, B), P(A, B) )$ we can generate, under the condition $B \in \BR(A)$.
The optimal distribution $\nu$ of \cref{opt:defs:budgeted_stackelberg} is a convex combination of these points (see the orange region in \cref{fig:defs:feasible_region}).
Since these points are on a $2$-dimensional space, by Carath\'eodory's theorem, any such distribution $\nu$ can be supported on $3$ points.
However, the optimal solution $\nu$ is on the boundary of the feasible region, since the objective is linear w.r.t. $\nu$, which is actually a convex combination of at most $2$ points.

We now formally state our observation from our previous example, that allowing randomization over action distributions can lead to strictly more utility for the optimizer.

\begin{theorem}
    The value of a BSE can be strictly more than the value of a Stackelberg equilibrium (the latter restricts \cref{opt:defs:budgeted_stackelberg} to a single $(A, B) \in \Delta(\calA) \times \Delta(\calB)$ pair).
\end{theorem}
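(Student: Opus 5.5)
The statement is existential, so I will prove it by exhibiting the illustrative example above and verifying two values at $\rho = 1/2$. I will show that the single-pair (Stackelberg) value equals $1$ and the BSE value is at least $4/3$. The game is $\UO = \UL = \m{3&0\\0&1}$ and $\PO = \m{3&0\\0&0}$, with $\BR(x) = \argmax_{y'} x^\top \UL y'$ for an unconstrained learner.

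\textbf{Upper bound on the single-pair value.} Let $x = (p, 1-p)$. The learner's payoff from its first action is $3p$ and from its second is $1-p$.
\begin{itemize}
    \item If $p > 1/4$, the unique best response is $y = e_1$. Then the optimizer's payment is $3p$, so the budget $3p \le 1/2$ forces $p \le 1/6$, which contradicts $p > 1/4$. No such pair is feasible.
    \item If $p < 1/4$, then $y = e_2$. The optimizer's utility is $1 - p \le 1$ and its payment is $0$.
    \item If $p = 1/4$, any $y = (s, 1-s)$ is a best response. The utility is $\tfrac34 s + \tfrac34(1-s) = 3/4$ and the payment is $\tfrac34 s$.
\end{itemize}
Hence $\textrm{SE}(1/2) = 1$, attained at $p = 0$. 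The only care needed is the tie case $p = 1/4$, where the supremum over ties must be checked. It is handled above and gives only $3/4$.

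\textbf{Lower bound on the BSE value.} Take $\nu$ to put mass $q = 1/6$ on the pair $(e_1, e_1)$ and mass $5/6$ on the pair $(e_2, e_2)$. Both pairs are best-response pairs: $e_1$ is the learner's best response to $e_1$ (payoff $3$ versus $0$), and $e_2$ is the best response to $e_2$ (payoff $1$ versus $0$). The expected payment is $\tfrac16 \cdot 3 = 1/2 \le \rho$, so the budget holds. The expected utility is $\tfrac16 \cdot 3 + \tfrac56 \cdot 1 = 4/3 > 1$.

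Therefore the value of \cref{opt:defs:budgeted_stackelberg} strictly exceeds the single-pair value, which proves the theorem. No step is a real obstacle; the only place requiring care is the tie-breaking in the best-response set in the first case analysis.
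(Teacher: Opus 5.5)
Your proposal is correct and follows the paper's own argument. It uses the same $2\times 2$ example with $\rho = 1/2$, comparing the single-pair value $\textrm{SE}(1/2) = 1$ (the optimizer always plays her second action) with the two-point $\nu$ putting mass $q = 1/6$ on the first action, which attains $4/3$. Your case analysis on $p$, including the tie $p = 1/4$, makes rigorous the value of $\textrm{SE}(1/2)$ that the paper only asserts.
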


\begin{remark} \label{remark:defs:opt_achievable}
    A reasonable question is whether the optimizer can always achieve utility close to $\opt \cdot T$, where $\opt$ is her the BSE value.
    For unbudgeted players \cite{DSS19} prove that, if the learning algorithm guarantees $o(T)$ regret, under mild conditions on the game, the optimizer can get $\opt \cdot T - o(T)$ utility, guiding the learner to the desired best response. 
    If the optimizer alone has budget constraints, we can still think of a learner who uses classical algorithms.
    Since the optimizer might need to change her action distribution, she needs the learner to adapt to this change.
    This means that classic no-regret algorithms may fail to switch their actions fast enough. However, slightly stronger properties guarantee fast enough responses, e.g., if the learning algorithm has $o(T)$ \emph{adaptive} regret \cite[Section 10.2]{Hazan16}. 
    In budgeted auction settings, we identify the corresponding property that our learning algorithm satisfies in~\cref{rem:auctions:BSE_achievable}.
\end{remark}

\paragraph{Manipulability of a learning algorithm}
Here, we formalize our definition of what it means for a learning algorithm to be manipulable.
\cite{DSS19, MMSS22, ACS24, ACMMSS25} define manipulability in games without budgets, and consider that any $\Omega(T)$ utility increase due to time-varying strategies to be manipulation.
In the absence of budgets, this is equivalent to the optimizer getting $\Omega(T)$ more than her Stackelberg utility.
However, a budgeted optimizer may want to switch strategies due to budget constraints and not to gain utility while the learner is adapting.
Due to this, we directly define the manipulability of a learning algorithm, based on the comparison between the maximum utility the optimizer can get and her BSE.

\begin{definition}[Manipulability] \label{def:defs:manipulability}
    Consider a game as defined in \cref{def:defs:budgeted_stackelberg} and let $\opt$ be the value of its Budgeted Stackelberg Equilibrium.
    A learning algorithm is called \textit{non-manipulable} if every strategy for the optimizer that satisfies her budget constraints achieves expected utility at most $\opt \cdot T + o(T)$.
\end{definition}

\section{Repeated Budgeted Auctions} \label{sec:auctions}

We consider two budget-limited players, the \textit{learner} and the \textit{optimizer}, participating in $T$ repeated auctions.
In every round $t$, the learner observes a value $\vL^{(t)} \in [0, 1]$ and the optimizer observes a value $\vO^{(t)} \in [0, 1]$, drawn from a joint distribution $\qty\big( \vL^{(t)}, \vO^{(t)} ) \sim \mathcal D$.
Each then submits a bid; the highest bid wins and makes a payment; we consider either first-price auctions (where the winner pays her bid) or second-price auctions (where the winner pays the loser's bid).
The learner's total budget is $T \rL$, and the optimizer's is $T \rO$, for some $\rL, \rO > 0$.
Each player's goal is to maximize the total value they get over the $T$ rounds, subject to their total payment being less than their budget with probability $1$.

\paragraph{Learning Algorithm for Auctions}
We consider a simple learning algorithm for the learner's bids.
Specifically, in round $t$ she uses a pacing multiplier $\l^{(t)} \ge 0$ and bids $\l^{(t)}\vL^{(t)}$.
She then observes some payment $p_{\mathtt{L}}^{(t)}$ and updates her pacing multiplier using the following PID controller
\begin{equation} \label{eq:update}
    \l^{(t+1)}
    =
    \l^{(t)}
    +
    \eta \qty\big( \rL - p_{\mathtt{L}}^{(t)} )
    .
\end{equation}
In other words, the learner increases/decreases her pacing multiplier proportional to her under/over payment compared to her per-round budget $\rL$.
This algorithm is similar to the one proposed by \cite{fikioris2026robust}, who used it in a setting similar to ours but where $\vL = \vO = 1$ in every round.
While simple, we will show that this algorithm is resistant to strategic manipulation.

Pacing multipliers have been widely studied in equilibria for budgeted auctions or as simple bidding strategies to learn in online environments.
More importantly for us, if the optimizer were to use any fixed bidding distribution that does not respond to the learner's bidding, pacing multipliers are optimal for second-price auctions, and our learning algorithm will find such a pacing multiplier. However, \cref{eq:update} aims to find a good pacing multiplier regardless of the auction format.
While using a pacing multiplier is not optimal for first-price auctions, many works try to find the optimal such multiplier and show their benefits for welfare guarantees \cite{DBLP:conf/innovations/GaitondeLLLS23, DBLP:journals/mor/FikiorisT25, DBLP:conf/colt/LucierPSZ24, DBLP:journals/mansci/ConitzerKPSMSW22}.
~\cite{BKK23} also examine equilibria in budgeted first-price auctions and show that first pacing one's values and then composing it with a symmetric unbudgeted Bayesian Nash Equilibrium (BNE) strategy results in an equilibrium.

\cref{eq:update} never produces negative pacing multipliers or runs out of budget if initialized with $\lambda^{(1)} \le \rL$:

\begin{restatable}[Similar to {\cite[Lemma 2.1]{fikioris2026robust}}]{lemma}{AlgorithmProps} \label{lem:auction:learner_budget}
    Assume the learner is bidding using a pacing multiplier, updated by \cref{eq:update}, with $\eta \in (0, 1)$ and $\l^{(1)} \ge 0$.
    Assume the auction format is either first- or second-price.
    Then, for arbitrary behavior by the optimizer, with probability $1$ it holds that $\l^{(t)} \ge 0$ and the learner's total payment by round $T$ is $\rL T + \frac{\l^{(1)} - \l^{(T+1)}}{\eta}$.
    In addition, if $\l^{(1)} \le \rL$, then the learner never runs out of budget.
\end{restatable} 

The total payment rule follows by simple manipulation of \cref{eq:update}.
We prove that if $\l^{(t)} \le \rL$, then the pacing multiplier cannot decrease in the next round, proving that $\l^{(1)} \le \rL$ leads to no budget violation.

\paragraph{Auction definitions and optimizer strategies}
Since the learning algorithm is deterministic, the pacing multiplier used by the learner is always predictable.
Therefore, instead of using bids as the optimizer's action space, we interpret her bids as using the learner's pacing multiplier, scaled by a ``fake'' value.
Specifically, her action space each round is a function $\hatvO: [0, 1] \to \Rz$ that maps her value $\vO$ to some ``fake'' value $\hatvO(\vO)$.
Then, when the learner is using pacing multiplier $\l$ (and therefore bids $\l\,\vL$), the optimizer is bidding $\l\,\hatvO(\vO)$.
This simplifies who wins (we break ties in favor of the learner): if $\hatvO(\vO) > \vL$ the optimizer wins, otherwise the learner wins.
Then, when the optimizer uses $\hatvO$ and the learner $\l$, with values $(\vL, \vO)$, we denote with $\UO(\hatvO, \vO, \vL)$ and $\l \PO(\hatvO, \vO, \vL)$ the optimizer's utility and payment; we denote with $\l \PL(\hatvO, \vO, \vL)$ the learner's payment.
For example, for second-price auctions, we have $\UO(\hatvO, \vO, \vL) = \vO \One{\hatvO(\vO) > \vL}$, $\l \PO(\hatvO, \vO, \vL) = \l \vL \One{\hatvO(\vO) > \vL}$, and $\l \PL(\hatvO, \vO, \vL) = \l \hatvO(\vO) \One{\hatvO(\vO) \le \vL}$.
For first-price the payments become $\l \PO(\hatvO, \vO, \vL) = \l \hatvO(\vO) \One{\hatvO(\vO) > \vL}$ and $\l \PL(\hatvO, \vO, \vL) = \l \vL \One{\hatvO(\vO) \le \vL}$.
This makes the learner's update in round $t$ when the optimizer uses $\hatvO^{(t)}$ to be $\l^{(t+1)} = \l^{(t)} + \eta \qty\big( \rL - \l^{(t)} \PL(\hatvO^{(t)}, \vO^{(t)}, \vL^{(t)}) )$.
We also overload these functions; for a fake value function $\hatvO \in \Rz^{[0, 1]}$ and a randomized fake value function $\hatVO \in \Delta\qty\big(\Rz^{[0, 1]})$, we define (and similarly define $\PO(\hatvO)$, $\PO(\hatVO)$, $\PL(\hatvO)$, $\PL(\hatVO)$)
\begin{equation*}
    \UO(\hatvO) = \Ex[(\vO, \vL) \sim \mathcal D]{\UO(\hatvO, \vO, \vL)}
    \quad\text{and}\quad
    \UO(\hatVO) = \Ex[\hatvO \sim \hatVO]{\UO(\hatvO)} = \Ex[\hatvO \sim \hatVO]{ \Ex[(\vO, \vL) \sim \mathcal D]{\UO(\hatvO, \vO, \vL)} }
    .
\end{equation*}

\paragraph{Budgeted Stackelberg Equilibrium in Budgeted Auctions}
We now specialize the BSE (\cref{def:defs:budgeted_stackelberg}), simplifying it using \cref{lem:defs:budgeted_stackelberg_simple}, in the context of repeated budgeted auctions, using the functions defined in the previous part.
Here, given a bidding function $\hatvO \in \Rz^{[0, 1]}$ and the learner's multiplier $\l$, the optimizer's expected utility is $\UO(\hatvO)$, her expected payment is $\l \PO(\hatvO)$, and her per-round budget is $\rO$.
In a second-price auction, a fixed pacing multiplier is the optimal bidding with respect to any static bidding by the optimizer.
In general, the higher the pacing multiplier, the more the learner's value and payment, making the optimal pacing multiplier to be $\rL / \PL(\hatvO)$.
To simplify our future results, we relax this constraint and define as $\BR(\hatvO)$ any $\l$ such that $\l \PL(\hatvO) \ge \rL$; we explain next why this does not change the value of the BSE, $\opt$, in this setting.
Formally,
\begin{optimization} \label{opt:auctions:stackelberg_eq}
    \opt = \;
    \sup_{\substack{
        \hatVO^{(1)}, \hatVO^{(2)} \in \Delta\qty\big(\Rz^{[0, 1]}),\\
        \l_1, \l_2 \in \Rz, \, 0 \le q \le 1
    }}
    \quad &
    q \, \UO\qty\big(\hatVO^{(1)}) + (1-q) \UO\qty\big(\hatVO^{(2)})
    \\
    \textrm{such that }
    \qquad &
    q \,\l_1 \PO\qty\big(\hatVO^{(1)}) + (1-q) \l_2 \PO\qty\big(\hatVO^{(2)}) \le \rO
    \quad\textrm{ and }\quad
    \lambda_i \PL\qty\big(\hatVO^{(i)}) \ge \rL
    \;\; \forall i \in [2]
    .
\end{optimization}

The reason we use an inequality rather than an equality for the learner's payment is that, for any feasible solution, we can decrease a $\l_i$ until the learner's inequality is an equality without affecting feasibility or increasing the objective value.
This simplifies some of our future results.

\begin{remark}[Optimizer's ability to get $\opt \cdot T$] \label{rem:auctions:BSE_achievable}
    We now comment on whether the optimizer can achieve always value $\opt \cdot T$ in repeated budgeted auctions.
    Consider any (near) optimal solution to \cref{opt:auctions:stackelberg_eq} with pacing multipliers $\l_1 \le \l_2$.
    In rounds $t \le T q$, the optimizer can bid $\l^{(t)} \hatvO(\vO^{(t)})$ where $\hatvO \sim \hatVO^{(1)}$; this leads to expected utility of $\UO(\hatVO^{(1)})$ for those rounds, as suggested by the equilibrium.
    In expectation, this leads to the learner's update to be 
    $
        \l^{(t+1)} - \l^{(t)}
        =
        \eta\qty\big( \rL - \l^{(t)} \PL(\hatVO^{(1)}) )
        \le
        \eta \rL \qty\big( 1 - \frac{\l^{(t)}}{\l_1} )    
    $, 
    i.e., the learner's pacing multiplier performs a random walk biased towards something less than $\l_1$.
    Therefore, it will converge within $\order*{1/\eta} = o(T)$ steps to $\l_1$.
    We can similarly examine the second phase, the last $T (1-q)$ rounds.
    This implies that the learner converges to the desired multipliers within $o(T)$ rounds, leading to at most $o(T)$ overspending by the optimizer, implying a $\opt \cdot T - o(T)$ utility for the optimizer.
\end{remark}

\paragraph{Lagrangian Relaxation}
It will be more convenient to relax the optimizer's budget constraint using a Lagrange multiplier $\mu \ge 0$.
Specifically, we show that strong duality holds with respect to this constraint.

\begin{restatable}{lemma}{StrongDuality} \label{lem:auction:strong_duality}
    Assume that both the optimizer and the learner have positive expected values.
    Then, strong duality holds: there exists\footnote{Formally, this optimal $\mu$ is the one that minimizes $R^\star(\mu) + \mu \rO$.} a $\mu \ge 0$ such that $\opt = R^\star + \mu \rO$, where $R^\star$ is the maximum Lagrangian reward the optimizer can achieve for this optimal $\mu$:
    \begin{equation}\label{opt:auctions:stackelberg_eq_lagrange}
        R^\star = \quad
        \sup_{
            \hatVO \in \Delta(\Rz^{[0, 1]}), \,
            \l \in \Rz
        }
        \quad
        \UO\qty\big(\hatVO) - \mu \l \PO\qty\big(\hatVO)
        \qquad
        \textrm{\textup{ such that }}
        \qquad
        \lambda \PL\qty\big(\hatVO) \ge \rL
        .
    \end{equation}
\end{restatable}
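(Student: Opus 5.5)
The plan is to view \cref{opt:auctions:stackelberg_eq} as the concave closure of the function that maps a spend level to the best achievable utility at that spend, and then apply a standard supporting-hyperplane argument.

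Concretely, let $\mathcal S \subseteq \R^2$ be the set of pairs $\qty\big(\l \PO(\hatVO), \UO(\hatVO))$ over all $\hatVO \in \Delta(\Rz^{[0,1]})$ and $\l \ge 0$ with $\l \PL(\hatVO) \ge \rL$. By \cref{lem:defs:budgeted_stackelberg_simple}, $\opt$ is the supremum of the second coordinate over points $(p,u) \in \conv(\mathcal S)$ with $p \le \rO$. Define $f(p) = \sup\{u : (p',u) \in \conv(\mathcal S),\ p' \le p\}$. This function is concave and nondecreasing in $p$, since the convex hull is convex and we take the sup over a downward-closed constraint. It is also bounded above by $\E[\vO] \le 1$, and $\opt = f(\rO)$.

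Weak duality is immediate. For any $\mu \ge 0$ and any feasible two-point mixture, the objective is at most $q\qty\big(U_1 - \mu\l_1 P_1) + (1-q)\qty\big(U_2 - \mu\l_2P_2) + \mu\rO \le R^\star(\mu) + \mu\rO$. Hence $\opt \le \inf_\mu (R^\star(\mu) + \mu \rO)$.

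For the reverse inequality, I would take a supergradient $\mu$ of $f$ at $\rO$. This $\mu$ is nonnegative because $f$ is nondecreasing. Then $f(p) \le f(\rO) + \mu (p - \rO)$ for all $p$ in the domain. Applying this to each point of $\mathcal S$ gives $\UO(\hatVO) - \mu \l \PO(\hatVO) \le \opt - \mu \rO$, so $R^\star(\mu) + \mu\rO \le \opt$. The same $\mu$ then attains the minimum of $R^\star(\mu) + \mu\rO$, matching the footnote.

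The main obstacle is the existence of a finite supergradient, which requires that $\rO$ lie in the interior of the domain of $f$ (or at least that $f$ not have infinite slope there). This is where the positive-expected-value assumption is needed.

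First, I would show that $p = 0$ is achievable with utility $0$. The optimizer uses $\hatvO \equiv 0$, so the learner wins every auction at price $0$ (second price) or at price $\l\vL$ (first price). Taking $\l$ large enough makes $\l\PL \ge \rL$, which is possible because $\E[\vL] > 0$ in the first-price case. In the second-price case, I would instead let $\hatvO$ be a tiny positive constant, so that the learner pays a positive amount and the optimizer's payment stays near $0$.

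Since $\rO > 0$, this puts $\rO$ strictly inside $[0,\infty)$, the domain of $f$. A concave function that is finite on a neighborhood of an interior point has a finite supergradient there.

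If instead the domain of $f$ is bounded above by some $\bar p < \rO$, then $f$ is constant on $[\bar p, \infty)$ after extending by monotonicity. In that case $\mu = 0$ works, since the budget constraint is slack.

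I would double-check the edge case where the zero-payment point requires $\l \to \infty$ and is therefore not attained. Handling it through the closure of $\mathcal S$ should suffice, because only suprema are involved.
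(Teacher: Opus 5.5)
Your proof is correct. It reaches the same conclusion as the paper by a somewhat different and more elementary route.

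The paper writes $\opt$ as $\max\{x[1] : x \in \conv(\bar F),\ x[2]\le \rO\}$ and invokes a textbook strong-duality theorem. That theorem needs a feasible point in the relative interior of $\conv(\bar F)$. The paper builds one by mixing a small bid $\e$ with a large bid $2$ (with probability $\e^2 p$), sending $\e\to 0$ to push the payment below $\rO$. It then perturbs $p$ and $\l$ to sweep out a region of positive area. This step is where it uses $\E[\vO]>0$ and the non-degeneracy of the resulting segment.

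You instead prove the supporting-line step by hand on the concave, nondecreasing value function $f$. As a result, the only constraint qualification you need is one strictly feasible point with payment below $\rO$, which a single small constant bid already supplies. What this buys you:
\begin{itemize}
\item no two-dimensional interior point is required;
\item the hypothesis $\E[\vO]>0$ becomes unnecessary;
\item the optimal $\mu$ appears explicitly as a supergradient of $f$ at $\rO$, i.e.\ the marginal value of budget, which also gives the footnote's characterization directly.
\end{itemize}
The paper's route is shorter because it outsources the separation argument; yours is self-contained and slightly more general.

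Two small points to tighten. First, the case ``the domain of $f$ is bounded above'' cannot occur: you defined $f$ with $p'\le p$, so its domain is upward-closed, and the slack-budget situation is simply a zero supergradient. Second, in the second-price case, write out the cancellation
$$\l\PO(\delta)=\rL\,\PO(\delta)/\PL(\delta)\le \rL\,\Pr{0<\vL<\delta}/\Pr{\vL\ge\delta}\to 0.$$
Here $\l=\rL/\PL(\delta)$ blows up like $1/\delta$, and only the matching factor $\delta$ in $\PO(\delta)\le \delta\,\Pr{0<\vL<\delta}$ keeps the optimizer's payment small.
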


We prove this lemma via standard convex optimization arguments: we analyze the $2$-dimensional region defined by all the possible expected utility and payments of the optimizer, constrained by the learner's best response.
Given that our final solution is a point from the convex hull of that region, this makes \cref{opt:auctions:stackelberg_eq} a convex program, and we establish strong duality via Slater's condition by finding a point in the interior of the feasible region.
We present the full proof in \cref{sec:app:auctions}.

\cref{lem:auction:strong_duality} shows that, instead of focusing on the total value accumulated by the budgeted optimizer, we can consider the total Lagrangian reward they receive.
Specifically, we show that the total Lagrangian reward is in expectation at most $R^\star T + \order*{T^{2/3}}$, which immediately proves our main theorem:

\begin{theorem} \label{thm:main}
    Fix any joint distribution of values such that $\Pr{0 < \vL < \vO\e} = 0$ for some $\e > 0$.
    Assume that the optimizer and learner have budgets $\rO T$ and $\rL T$, and they participate in $T$ repeated first- or second-price auctions.
    Let $\opt$ be the value of the BSE (\cref{opt:auctions:stackelberg_eq}), when the learner's best response function is the optimal pacing multiplier for the auction format being used.
    If the Learner bids using pacing multipliers $\l^{(t)}$, updated using \cref{eq:update} with step size $\eta \in (0, 1)$ and initial value $\l^{(1)} \le \rL$, then the Optimizer's expected total value is at most\footnote{Our $\order*{\cdot}$ terms hide terms relating to the players' average budgets or their value distributions.} $\opt \cdot T + \mathcal O\qty\big(T \sqrt \eta + \frac{1}{\eta})$.
    If $\eta = \Theta(T^{-2/3})$, this becomes $\opt \cdot T + \order*{T^{2/3}}$.
\end{theorem}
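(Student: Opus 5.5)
We plan to prove \cref{thm:main} through the Lagrangian relaxation of \cref{lem:auction:strong_duality}. Fix the optimal multiplier $\mu\ge 0$, so that $\opt = R^\star + \mu\rO$. Any optimizer strategy respecting her budget satisfies
\[
\sum_t \UO^{(t)} \le \sum_t \bigl(\UO^{(t)} - \mu\, \l^{(t)}\PO^{(t)}\bigr) + \mu \rO T .
\]
It therefore suffices to show that the expected total Lagrangian reward $\sum_t (\UO^{(t)} - \mu \l^{(t)}\PO^{(t)})$ is at most $R^\star T + \mathcal O(T\sqrt\eta + 1/\eta)$. The optimizer knows $\l^{(t)}$, so her choice of $\hatvO^{(t)}$ may depend on the whole history. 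We will model this as a dynamic program.

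Let $R_\tau(\l)$ be the maximum expected Lagrangian reward obtainable with $\tau$ rounds left when the learner sits at multiplier $\l$. The Bellman recursion is
\[
R_\tau(\l)=\sup_{\hatvO}\ \E\bigl[\UO - \mu\l\PO + R_{\tau-1}\bigl(\l+\eta(\rL-\l\PL)\bigr)\bigr].
\]
We prove by induction that $R_\tau(\l)\le (R^\star+\epsilon_\eta)\tau + \frac1\eta G(\l)$, where $\epsilon_\eta=\mathcal O(\sqrt\eta)$ and $G$ is bounded on the relevant range of $\l$. Taking $\l^{(1)}\le\rL$, and using \cref{lem:auction:learner_budget} to keep $\l$ nonnegative and to bound $\l^{(t)}$, yields the claim.

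For the inductive step we Taylor-expand $G$:
\[
\tfrac1\eta G\bigl(\l+\eta(\rL-\l\PL)\bigr)\le \tfrac1\eta G(\l) + G'(\l)(\rL-\l\PL) + \mathcal O(\eta L_{G'}),
\]
where $L_{G'}$ is the Lipschitz constant of $G'$. The step then reduces to a single-round inequality, uniform in $\l$:
\[
\sup_{\hatvO}\ \UO(\hatvO)-\mu\l\PO(\hatvO)+g(\l)\bigl(\rL-\l\PL(\hatvO)\bigr)\le R^\star+\epsilon_\eta, \qquad g=G'.
\]
Fix $\l$ and view the left-hand side as the Lagrangian relaxation of the constraint $\l\PL\ge\rL$ in \eqref{opt:auctions:stackelberg_eq_lagrange}. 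Weak duality shows that if $g(\l)=\gs(\l)$ is the optimal dual variable for that fixed-$\l$ problem, the supremum equals the constrained value at $\l$, which is at most $R^\star$. The natural choice is therefore $G(\l)=\int_0^\l \gs$.

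\textbf{The main obstacle} is that $\gs$ is badly behaved. First, it can be infinite. This happens when $\l$ is so small that even the most favorable $\hatvO$ cannot make $\l\PL\ge\rL$, so the fixed-$\l$ problem is infeasible. Second, $\gs$ can be discontinuous in $\l$, so $L_{G'}$ is unbounded.

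To handle infinite values, we cap $\gs$ at a large constant $M$. This is only ``good enough'': for $\l$ where the cap binds, the learner is underpaid, and we must show the left-hand side is still at most $R^\star$. This is exactly where $\Pr{0<\vL<\e\vO}=0$ is needed. Whenever the optimizer wins she pays $\l\vL\ge\e\l\vO$ (second-price), and under first-price her fake bid exceeds $\vL$. Hence each unit of utility costs her at least order $\e\l$ in payment, and each unit of the learner's shortfall costs the optimizer a bounded amount. This bounds the benefit of suppressing the learner, so a finite $M$ depending on $\e$, $\mu$ and the budgets suffices. Without the assumption this fails, consistent with the counterexample mentioned in the introduction.

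To handle discontinuity, we use the smoothed surrogate $\gs_\sigma(\l)=\frac1\sigma\int_\l^{\l+\sigma}\gs$, which is $M/\sigma$-Lipschitz. Replacing $\gs(\l)$ by $\gs_\sigma(\l)$ costs $\mathcal O(\sigma)$ in the single-round inequality. To see this, note that the single-round objective depends on $\l$ through the terms $\mu\l\PO$ and $g\,\l\PL$, so it is Lipschitz in $\l$ with constants controlled by $M$. Hence the optimal dual value at nearby $\l'\in[\l,\l+\sigma]$ certifies a bound at $\l$ up to $\mathcal O(\sigma)$. The Taylor error is $\mathcal O(\eta M/\sigma)$ per round, so the per-round slack is $\epsilon_\eta=\mathcal O(\sigma+\eta/\sigma)$. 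Choosing $\sigma=\sqrt\eta$ gives $\mathcal O(\sqrt\eta)$.

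The terminal term $\frac1\eta G(\l^{(1)})-\frac1\eta G(\l^{(T+1)})$ is $\mathcal O(1/\eta)$ because $G$ is bounded on the range where $\l$ lives. If needed, we control the upper range using \cref{lem:auction:learner_budget} and the fact that $\l$ decreases once $\l\PL>\rL$. Summing over the $T$ rounds gives $\opt\cdot T+\mathcal O(T\sqrt\eta+1/\eta)$, and $\eta=T^{-2/3}$ gives $\mathcal O(T^{2/3})$.
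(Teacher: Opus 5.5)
\textbf{Verdict.} There is a genuine gap. Your architecture matches the paper's: the Lagrangian relaxation, the dynamic program $R_\tau(\l)$, the potential $A\tau+\frac1\eta G(\l)$ with $G'$ a dual variable for the learner's constraint, the Taylor expansion, and smoothing over a window $\sigma=\sqrt\eta$. But the proposal rests on a false premise, namely that $\l^{(t)}$ stays in a bounded range.

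\textbf{Why $\l$ is unbounded.} In a second-price auction the optimizer can bid $\hatvO\equiv 0$. The learner then wins and pays nothing, so $\l$ grows by $\eta\rL$ per round at zero Lagrangian cost and can reach $\Theta(\eta T)$, which is $\Theta(T^{1/3})$ for $\eta=T^{-2/3}$. \cref{lem:auction:learner_budget} bounds $\l$ only from below. The observation that ``$\l$ decreases once $\l\PL>\rL$'' does not help, because the optimizer controls $\PL$.

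\textbf{Consequences.} Unless you show that $G'\equiv 0$ beyond some $\bar\l$ independent of $T$, nothing prevents your $G(\l)=\int_0^\l g$ (nonpositive and decreasing) from dropping by order $\eta T$ over the reachable range. That makes the terminal term, or equivalently the shift needed for the base case $R_0\le G/\eta$, of order $T$. The Taylor step fails for the same reason. The increment $\eta(\rL-\l\PL)$ scales with $\l$, so a global $M/\sigma$ Lipschitz bound on $G'$ gives a per-round remainder of order $\frac{M\eta}{\sigma}(\rL^2+\l^2)$, not $\mathcal{O}(\eta M/\sigma)$. The paper's $G_\sigma(\l)=-\int_\l^\infty\gs_\sigma$ is identically zero beyond $\bar\l$, so both issues arise only on $[0,\bar\l]$; see the end of the proof of \cref{lem:dynamic:usage_of_G}.

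\textbf{The missing idea.} You need to show the dual variable vanishes beyond a constant, and this is exactly where $\Pr{0<\vL<\e\vO}=0$ is needed. It is not automatic: in the example of \cref{sec:app:dynamic:example}, $\gs(\l)\neq 0$ for every $\l$, and the optimizer profits precisely by letting $\l$ drift upward. The paper takes $\gs(\l)=\sup\{g\le 0: f(\l,g)\le R^\star\}$, i.e., the admissible value closest to $0$ rather than a minimizer of $f(\l,\cdot)$. It then proves (\cref{lem:dynamic:g_becomes_zero}) that
\[
f(\l,0)\le\E[(\vO-\mu\l\vL)^+]\le\E[\vO\One{\vL=0}]+\Pr{0<\mu\l\vL<\vO}.
\]
At $\bar\l=1/(\mu\e)$ this is $\E[\vO\One{\vL=0}]\le R^\star$, and that last inequality needs its own argument.

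\textbf{Where you spend the assumption instead.} You use it for the lower cap $M$, but the cap needs no distributional assumption. The paper gets $\gs(\l)\ge\gs(0)=-(\E[\vO]-R^\star)/\rL$ from monotonicity (\cref{lem:dynamic:g_monotone}, proved via strong duality). Your heuristic that each unit of utility costs order $\e\l$ gives nothing uniform as $\l\to 0$ and ignores the $\vL=0$ events. So the existence of a finite cap also remains unproved in your sketch.

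\textbf{Smaller points.} First, bounding the supremum by the constrained value at $\l$ needs strong duality, not weak duality. Second, passing from certificates at individual $\l'\in[\l,\l+\sigma]$ to their average $\gs_\sigma(\l)$ needs convexity of $f(\l,\cdot)$ (\cref{cl:dynamic:f_convex}) and a measurable selection of $g$. The paper gets the latter from monotonicity.
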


We note that the above distributional assumption is necessary.
There are value distributions where the optimizer can extract $\opt \cdot T + \Theta(T^{1 - \e})$ for any constant $\e > 0$ if the learner has significant probability mass near 0.
We include the details of this example in \cref{sec:app:dynamic:example}.

To provide some intuition, before presenting the proof for arbitrary budget shares and distributions, we consider in the next section the example where both players have uniform distributions (even though this is not covered by our distributional assumption).

\section{Non-manipulability Warm-up Example -- Uniform Distributions} \label{sec:warmup}
In this section, we consider a simple example where the optimizer cannot get substantially more value than her BSE value.
We examine budgeted second-price auctions where both players' values $\vO$ and $\vL$ are independently drawn from the uniform $[0, 1]$ distribution and their total budgets are $T$, i.e., $\rO = \rL = 1$. 
We start by showing the optimizer's value at the Nash and Budgeted Stackelberg equilibria, and how the second one is larger.
Then we outline the proof that the PID controller learning algorithm of \cref{eq:update} is not manipulable in this case. 
In \cref{sec:dynamic}, we extend the argument to prove the general case of \cref{thm:main}.

\paragraph{Equilibria of the two bidder auction game}
To contrast with the proposed solutions of past works, we first consider the static solution of the \textit{Second Price Pacing Equilibrium} (SPPE) \cite{DBLP:journals/ior/ConitzerKSM22}\footnote{In the SPPE definition of \cite{DBLP:journals/ior/ConitzerKSM22}, players are not allowed to bid more than their value. We do not place this restriction, i.e., there is no ROI constraint. For this example, we can recover the original definition of SPPE by scaling the values up.}, where both players use pacing multipliers to bid.
By symmetry, they use the same pacing multiplier $\l$, which gets them expected value $\Ex{\vO \One{\l \vO \ge \l \vL}} = 1/3$ and they are paying $\Ex{\l\vL \One{\l \vO \ge \l \vL}} = \l / 6$, so need to set $\l = 6$.
If the learner uses $\l = 6$ to bid, doing the same is the best response for the optimizer.

\begin{figure}[t!]
    \centering
    \begin{subfigure}[t]{.48\textwidth}
        \centering
        \includegraphics[width=\linewidth]{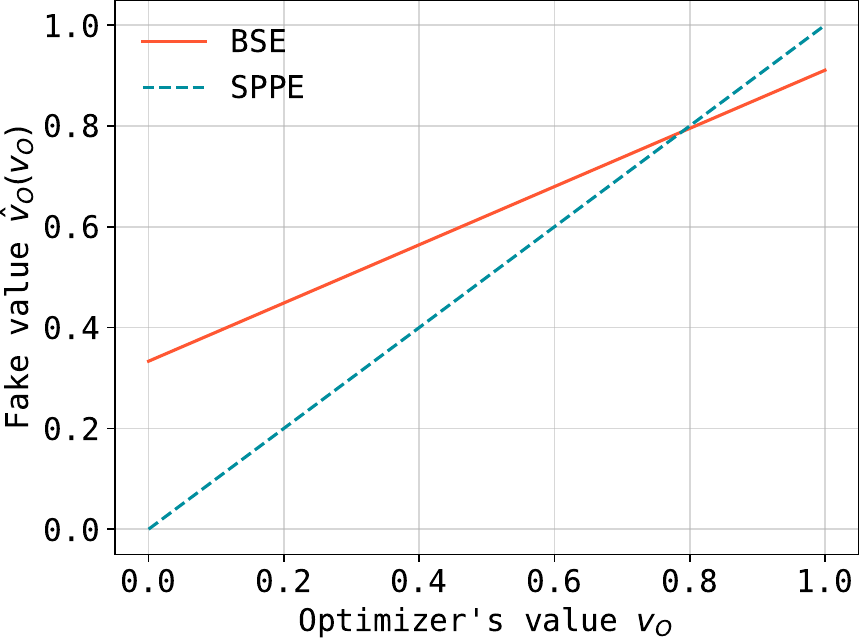}
        \caption{Optimal ``fake'' value $\hatvO(\cdot)$ (i.e., bidding $\l \hatvO(\vO)$ when the learner uses $\l$) for the SPPE (leading to $0.33$ optimizer value) and the BSE (leading to $0.36$ optimizer value).}
        \label{fig:warmup:BSE_SPPE}
    \end{subfigure}%
    \hfill%
    \begin{subfigure}[t]{.48\textwidth}
        \centering
        \includegraphics[width=\linewidth]{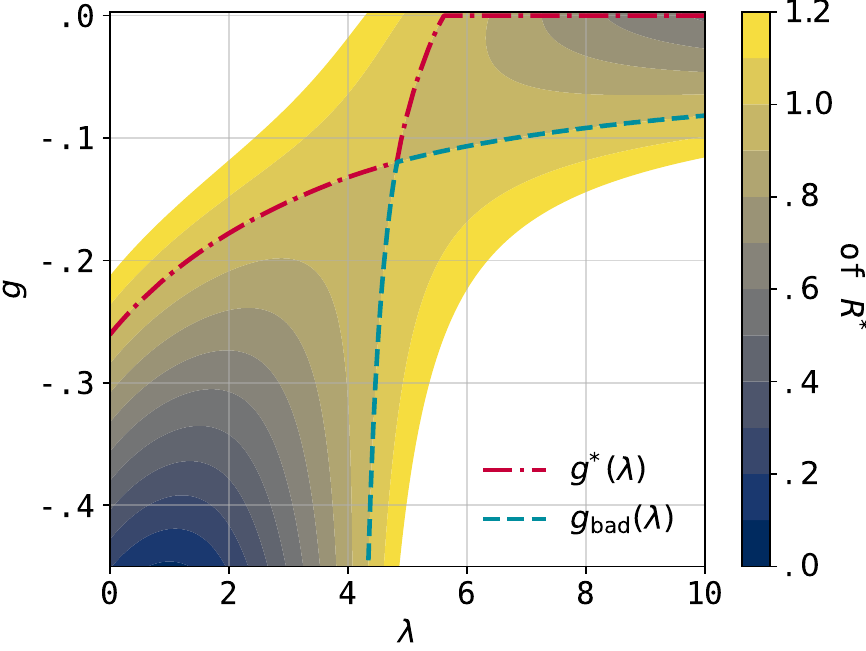}
        \caption{$f(\l, g)$ relative to $R^\star$, along with bad solution of the $f(\l, g) = R^\star$ equation ($g_{\textrm{bad}}$) and our final solution $\gs(\l)$, see \cref{eq:warmup:g_star}.}
        \label{fig:warmup:f}
    \end{subfigure}
\end{figure}

Now consider the BSE, where the optimizer does not have to best respond.
We can show in this simplified setting that the optimizer does not have to randomize over two action distributions (note that this is not generally true; in \cref{sec:app:multiple_dists_example} we show an example where the optimizer randomizes over two distributions of actions).
In fact, her optimal bid (see \cref{fig:warmup:BSE_SPPE}) is deterministic given her value, specifically, $\hatvO^\star(\vO) = \nicefrac{\vO}{\sqrt 3} + \nicefrac{1}{3}$ and $\l^\star = \nicefrac{18}{2 + \sqrt 3} \approx 4.82$, with expected optimizer value $\opt = \nicefrac{3 + 2 \sqrt 3}{18} \approx 0.36$, which is more than $1/3$ as in the SPPE.

Note that the optimizer's bid when her value is small seems abnormally high, e.g., $\hatvO^\star(0) = \frac{1}{3}$, winning with probability $1/3$.
This is due to the optimizer having a secondary objective to keep the learner's payment high (second inequality in \eqref{opt:auctions:stackelberg_eq}), incentivizing high bidding even with low values.
We also note that this is not a Bayes-Nash equilibrium: if the learner commits to bidding $\l^\star \vL$, there is a pacing multiplier by the optimizer that yields more than $0.36$ utility.

\paragraph{Non-manipulability of the PID controller}
We now consider the manipulability of our learning algorithm of \cref{eq:update}.
\textit{Can the optimizer achieve expected value $\opt \cdot T + \Omega(T)$?}
We answer this negatively for this example, giving intuition on how we prove this for general distributions in \cref{sec:dynamic} (under mild distributional assumptions).

We consider the Lagrangian relaxation of the optimizer's budget constraint using a Lagrange multiplier $\mu$ as we did in \cref{lem:auction:strong_duality}.
We can show that the optimal value for $\mu$ is $\nicefrac{3 + 2\sqrt 3}{54}$, making $R^\star = \nicefrac{3 + 2 \sqrt 3}{27}$.
We now upper-bound the total Lagrangian reward the optimizer can get by $R^\star T + o(T)$, implying the desired bound for the budget-constrained value.
The benefit of working with the Lagrangian reward is that there is a simple recursive formula that calculates its optimal expected reward.
Let $R_\tau(\l)$ be the maximum expected Lagrangian reward the optimizer can get if there are $\tau$ rounds left, and the learner is currently using pacing multiplier $\l$.
$R_0(\l) = 0$ and for $\tau \ge 1$,
\begin{equation} \label{eq:warmup:R_def}
    R_\tau(\l) = 
    \max_{\hatvO:[0,1] \to [0, 1]}\,
    \Ex[\vL,\vO]{\bigg.
        (\vO - \mu \l \vL) \One{\hatvO(\vO) \ge \vL}
        +
        R_{\tau - 1}\qty\Big(
            \l + \eta\qty\big( 1 - \l \hatvO(\vO) \One{\hatvO(\vO) < \vL} )
        )
    }
\end{equation}
i.e., the optimizer maximizes her current Lagrangian reward plus whatever reward she gets in the subsequent $\tau - 1$ rounds, given the learner's update for $\l$.
We want to upper bound the total Lagrangian reward the optimizer gets when there are $T$ rounds left, and the learner starts at $0$, i.e., $R_T(0)$.
We do this by the following bound for every $\tau, \l$:
\begin{equation} \label{eq:warmup:separation_bound}
    R_\tau(\l)
    \le
    A \, \tau + \frac{1}{\eta} G(\l)
\end{equation}
for some $A \ge 0$ and function $G: \Rz \to \R$.
The value $A$ represents the average reward per round, and $G(\l)$ is proportional to the benefit of the learner being at a certain pacing multiplier $\l$.
We notice that if $G(\l) \ge 0$, then \cref{eq:warmup:separation_bound} holds for $\tau = 0$.
We will show \cref{eq:warmup:separation_bound} for $\tau \ge 1$ inductively; assume that \cref{eq:warmup:separation_bound} holds for $\tau - 1$.
Then, by \cref{eq:warmup:R_def}, we have that
\begin{equation*}
    R_\tau(\l) \le
    A (\tau - 1)
    +
    \max_{\hatvO:[0,1] \to [0, 1]}
    \Ex[\vL,\vO]{\bigg.
        (\vO - \mu \l \vL) \One{\hatvO(\vO) \ge \vL}
        +
        \frac{1}{\eta} G\qty\Big(
            \l + \eta\qty\big( 1 - \l \hatvO(\vO) \One{\hatvO(\vO) < \vL} )
        )
    }
\end{equation*}

To show \eqref{eq:warmup:separation_bound} for this $\tau$ and all $\l$, we can prove that the above r.h.s. is at most $A \tau + \frac{1}{\eta}G(\l)$, i.e., show
\begin{align} \label{eq:warmup:A_bound}
    \forall \l \ge 0: \;\;
    A \; &\ge
    \max_{\hatvO:[0,1] \to [0, 1]}
    \Ex[\vL,\vO]{\bigg.
        (\vO - \mu \l \vL) \One{\hatvO(\vO) \ge \vL}
        +
        \frac{1}{\eta}G\qty\Big(
            \l + \eta\qty\big( 1 - \l \hatvO(\vO) \One{\hatvO(\vO) < \vL} )
        )
        - \frac{1}{\eta}G(\l)
    }
    \nonumber\\
    &\approx
    \max_{\hatvO:[0,1] \to [0, 1]}
    \Ex[\vL,\vO]{\bigg.
        (\vO - \mu \l \vL) \One{\hatvO(\vO) \ge \vL}
        +
        G'(\l) \qty\big( 1 - \l \hatvO(\vO) \One{\hatvO(\vO) < \vL} )
    }
    +
    \order{\l^2 \eta}
\end{align}
where $G'(\cdot)$ is the derivative of $G(\cdot)$ and in the last approximation we take the second order Taylor expansion $G(\l + \e) - G(\l) \approx \e G'(\l) + \order*{\e^2}$, assuming it is accurate.
If we show that the maximum in the r.h.s. of the above inequality is at most $R^\star$, we would get the desired claim.
To that end, we define for every $\l, g$
\begin{align*}
    f(\l, g)
    &=
    \max_{\hatvO:[0,1] \to [0, 1]}
    \Ex[\vL,\vO]{\bigg.
        (\vO - \mu \l \vL) \One{\hatvO(\vO) \ge \vL}
        +
        g \qty\big( 1 - \l \hatvO(\vO) \One{\hatvO(\vO) < \vL} )
    }
    \\&=
    \max_{\hatvO:[0,1] \to [0, 1]}\int_0^1\qty(
        \vO \, \hatvO(\vO)
        -
        \mu \, \l\,  \frac{\hatvO(\vO)^2}{2}
        +
        g\cdot \qty\Big( 1 - \l\, \hatvO(\vO) \qty\big( 1 - \hatvO(\vO) ) )
    ) d\vO
\end{align*}
where in the last equality we used that $\vL,\vO$ are sampled from the uniform distribution.
We note that the above maximum is easy to compute for $g \le 0$, since in that case the function inside the integral is concave w.r.t. $\hatvO(\vO)$, making the unconstrained optimum $\hatvO(\vO) = \frac{\vO - \l \, g}{\l (\mu - 2 g)}$ (explaining the affine form of the optimal solution for this special case).
Then, for every $\l$, we want to pick $g$ to get $A \approx R^\star$.
There are multiple ways to approach this.
A natural first idea is to pick $g$ to minimize $f(\l, g)$, since it serves as a lower bound for $A$ (\cref{eq:warmup:A_bound}).
However, while this serves as a good bound for $A$, it invalidates our previous assumptions and goals.
Specifically, for $\l < 4$ where $\l\, \hatvO(\vO) \qty\big( 1 - \hatvO(\vO) ) < 1$ for all $\hatvO$, we have that $\argmin_g f(\l, g) = -\infty$, making it unclear how to define $G(\cdot)$ for that range.

Since for some $\l$ it holds that $f(\l, -\infty) = -\infty$, a different idea, instead of minimizing $f$, is to pick $g$ so that for every $\l$, $f(\l, g) = R^\star$; that still gets $A \approx R^\star$ for all $\l$.
However, this equation might have multiple solutions, which is impossible to handle for general distributions.
In addition, even in this example, some solutions invalidate our previous assumptions.
Specifically, one such solution satisfies $g = -\Omega(\l^{-1/2})$ for large $\l$ (see $g_{\textrm{bad}}$ in \cref{fig:warmup:f}), implying that its integral would be $G(\l) = \Omega(\sqrt\l)$, making it too large and unnatural: as $\l$ grows larger, the optimizer's payment increases, implying $G(\l)$ should decrease.

We pick $g$ according to $\gs(\l)$, which we defined inspired by the following three points.
First, to get $A \approx R^\star$ from \cref{eq:warmup:A_bound}, all we need is $f(\l, \gs(\l)) \le R^\star$ for all $\l \ge 0$.
Second, to get $G(\l) = \int_\l \gs(x) dx$ to be as small as possible for all $\l$, we should try to make $\gs(\l)$ as close to $0$ as possible.
Third, given that we expect $G(\l)$ to be decreasing, it should be $\gs(\l) \le 0$ for all $\l$.
These lead to the following definition:
\begin{equation} \label{eq:warmup:g_star}
    \gs(\l) = \sup\qty{\big.
        g \le 0 : f(\l, g) \le R^\star
    }
\end{equation}

Specifically, \cref{fig:warmup:f} shows how $f(\l, g)$ behaves relative to $R^\star$.
The same figure also shows pairs $(\l, g)$ that satisfy $f(\l, g) \le R^\star$, along with the resulting $\gs(\l)$.
We also observe some other key properties of $\gs(\l)$, that we prove in \cref{ssec:dynamic:gs_well_behaved} for general value distributions:
\begin{itemize}
    \item $\gs(\l)$ is bounded for all $\l$, i.e., there exists $g \le 0$ with $f(\l, g) \le R^\star$.
    \item $\gs(\l)$ is weakly increasing, making it integrable.
    \item $\gs(\l)$ eventually becomes $0$, implying its integral is always bounded.
\end{itemize}

Another property that holds for uniform distributions but not in general is Lipschitzness of $\gs$, which implies that the Taylor approximation that we made in \cref{eq:warmup:A_bound} is accurate.
All these properties imply that using $G(\l) = -\int_\l^\infty \gs(x) dx$ gives us the desired bound of $A = R^\star + \order{\bar\l^2\eta}$, where $\bar \l$ is such that $\gs(\bar \l) = 0$.
This yields that the optimizer's expected Lagrangian reward is $R_T(0) \le R^\star T + \order*{T \eta + \frac{1}{\eta}}$.

To extend these ideas to general value distributions in \cref{sec:dynamic}, we need to handle one more issue, that $\gs(\l)$ might not be Lipschitz continuous; in fact, it can be discontinuous.
We solve this by considering a smoothed version of $\gs$, $\gs_\sigma(\l) = \frac{1}{\sigma}\int_\l^{\l + \sigma} \gs(x) dx$, which (by boundedness) is $\order{1/\sigma}$-Lipschitz and satisfies $f(\l, \gs_\sigma(\l)) \le R^\star + \order{\sigma}$.
Then, using $G(\l) = -\int_\l^\infty \gs_\sigma(x) dx$ gives the following bound $R_T(0) \le R^\star T + \order*{T \sigma + \frac{\eta}{\sigma} T + \frac{1}{\eta}}$, which, if optimized over $\sigma, \eta$, implies $R_T(0) \le R^\star T + \order*{T^{2/3}}$.

\section{Non-manipulability of the Budgeted Learner} \label{sec:dynamic}

In this section, we prove our main theorem, \cref{thm:main}, that our proposed learning algorithm is non-manipulable, i.e., that the optimizer cannot gain substantially more than her BSE value (\cref{opt:auctions:stackelberg_eq}) when the learner is responding with her optimal pacing multiplier (which is optimal among all bidding strategies for second-price auctions).
Specifically, if the learner is bidding $\l^{(t)} \vL^{(t)}$ to bid in round $t$, and uses \cref{eq:update} to update $\l^{(t)}$, the optimizer cannot gain more than $\opt \cdot T + o(T)$ expected value.
Our theorem holds for both first- and second-price auctions with the following distributional assumption: there exists some $\e > 0$ such that $\Pr{0 < \vL < \vO\e} = 0$, which is satisfied if, e.g., possible values are multiples of a small constant $\delta$.
We note that this distributional assumption is necessary: in \cref{sec:app:dynamic:example} we show an example where $\Pr{0 < \vL < \vO\e} = \Omega(\e^\d)$ for small $\d > 0$ and the optimizer can get $\opt \cdot T + \Omega\qty\big(T^{1 - \order*{\d}})$ value in expectation.

To prove \cref{thm:main}, we follow the following steps, similar to our warm-up example in \cref{sec:warmup}.
First, we relax the optimizer's budget constraint and upper bound her total Lagrangian reward, similar to Optimization Program \ref{opt:auctions:stackelberg_eq_lagrange}.
Then, we upper bound this Lagrangian reward by $A \, \tau + \nicefrac{1}{\eta} G(\l)$, when there are $\tau$ rounds remaining and the learner is currently using pacing multiplier $\l$.
By achieving $A = R^\star + o(1)$ and $G(0) = \order*{1}$ (where $R^\star$ is the maximum Lagrangian reward of the Optimizer's problem, Optimization Program \ref{opt:auctions:stackelberg_eq_lagrange}), we get the theorem.
To pick the right $G(\l)$ function that bounds the learner's benefit for a particular $\l$, we examine the dual function $f(\l, g)$ after also Lagrangifying the learner's constraint.
Unlike \cref{sec:warmup}, here we cannot calculate in closed form the above quantities; in fact, the definition of the dual function $f(\l, g)$ is a complicated maximization program over all the bidding functions that the optimizer can use.
In \cref{sec:warmup} we showed by examining \cref{fig:warmup:f} that the $\gs(\l)$ (that subsequently defines $G(\l)$) satisfies key properties: monotonicity, boundedness, eventually becoming $0$, and Lipschitz continuity.
We also show that the $\gs(\l)$ for general value distributions satisfies monotonicity (\cref{lem:dynamic:g_monotone}) and boundedness (\cref{lem:dynamic:g_at_zero}).
Unfortunately, there are distributions (like the one in \cref{sec:app:dynamic:example}) where $\gs(\l) \ne 0$ for all $\l \ge 0$; we show that this is not the case when the value distribution satisfies the aforementioned property (or that $R^\star$ is above a certain threshold, see \cref{lem:dynamic:g_becomes_zero}).
Finally, we have to deal with the issue that the function $\gs(\cdot)$ can be discontinuous.
We resolve this final issue by taking an averaged version of $\gs(\cdot)$ on an interval of small length $\sigma$ that we show is Lipschitz continuous and approximately satisfies all the properties of $\gs(\cdot)$.

\subsection{Relaxing the Upper Bound for the Optimizer's Problem}

By leveraging strong duality (\cref{lem:auction:strong_duality}), and in particular Optimization Program \ref{opt:auctions:stackelberg_eq_lagrange}, instead of upper bounding the total value the optimizer can get, we upper bound her maximum Lagrangian reward.
Formally, the maximum value the optimizer can get when the learner starts at some $\l^{(1)}$, given she wants to observe her budget constraint with probability $1$, is
\begin{optimization*}
    \sup_{\hatvO^{(t)} \in \Rz^{[0, 1]}}
    \quad &
    \Ex{\sum_{t=1}^T \UO\qty(\hatvO^{(t)}, \vL^{(t)}, \vO^{(t)})}
    \\
    \textrm{such that } \quad &
    \sum_{t=1}^T \l^{(t)} \PO\qty(\hatvO^{(t)}, \vL^{(t)}, \vO^{(t)}) \le \rO T
    \\ &
    \l^{(t + 1)} = \l^{(t)} + \eta\qty(\bigg. \rL - \l^{(t)} \PL\qty(\hatvO^{(t)}, \vL^{(t)}, \vO^{(t)}) )
    \qquad \forall t \in [T]
\end{optimization*}
where the inequality must hold for every realization of $(\vL^{(t)}, \vO^{(t)}) \sim \mathcal D$ and each $\hatvO^{(t)}$ depends only on the values of the previous rounds (and not future ones).
We will assume that the initial value value $\l^{(1)}$ is small enough so that the learner does not run out of budget (\cref{lem:auction:learner_budget}).
By relaxing the optimizer's budget constraint to hold only in expectation, lagrangifying that constraint with the multiplier $\mu \ge 0$ of \cref{lem:auction:strong_duality}, we get an upper bound for the above problem:
\begin{optimization*}
    \sup_{\hatvO^{(t)}: \Rz^{[0, 1]}}
    \quad &
    \Ex{\sum_{t=1}^T \qty( \bigg.
        \UO\qty(\hatvO^{(t)}, \vL^{(t)}, \vO^{(t)})
        -
        \mu \, \l^{(t)} \PO\qty(\hatvO^{(t)}, \vL^{(t)}, \vO^{(t)})
        + \mu \rO
    )}
    \\
    \textrm{where } \quad &
    \l^{(t + 1)}
    =
    \l^{(t)} + \eta\qty(\bigg. \rL - \l^{(t)} \PL\qty(\hatvO^{(t)}, \vL^{(t)}, \vO^{(t)}) )
    \qquad\qquad \forall t \in [T]
    .
\end{optimization*}

Ignoring the constant $\mu \rO$ term in the objective, we can write the above recursively.
In particular, it is equal to $R_T\qty\big(\l^{(1)})$, where
\begin{equation*}
    R_\tau(\l) =
    \begin{cases}
        \displaystyle
        \sup_{\hatvO: \Rz^{[0, 1]}}\;\Ex[\vL, \vO]{\bigg.
            \UO\qty(\hatvO, \vL, \vO)
            -
            \mu \l \PO\qty(\hatvO, \vL, \vO)
            +
            R_{\tau-1}\qty\Big(
                \l + \eta\qty\big( \rL - \l \PL\qty(\hatvO, \vL, \vO) )
            )
        }
        &\;\textrm{ if } \tau \ge 1
        \\
        0&\;\textrm{ o/w}
    \end{cases}
    ,
\end{equation*}
i.e., $R_\tau(\l)$ denotes the total Lagrangian reward the optimizer can get if there are $\tau$ rounds remaining and the learner is currently using pacing multiplier $\l$.
If we were to show that $R_T(\l) \le R^\star \cdot T + o(T)$, then by strong duality we would get \cref{thm:main}.
However, proving such a bound directly is complicated.
In fact, the $\hatvO$ that is the maximizer of the above expression depends on both $\tau$ and $\l$, making it impossible to compute.
Instead, we relax $R_\tau(\l)$ even further by separating the two variables, via a function of the form $A \tau + \frac{1}{\eta}G(\l)$.
Specifically, we prove the following lemma that gives such a bound.

\begin{lemma} \label{lem:dynamic:separation}
    For every $\tau \ge 0$ and $\l \ge 0$, the Lagrangian reward of the optimizer is upper-bounded by
    \begin{equation}\label{eq:dynamic:separation_bound}
        R_\tau(\l)
        \le
        A \tau + \frac{1}{\eta} G(\l)
    \end{equation}
    as long as the following two conditions hold
    \begin{enumerate}
        \item $G(\l) \ge 0$ for all $\l \ge 0$.
        \item For all $\l \ge 0$
        \begin{equation} \label{eq:dynamic:prop2}
            A \ge
            \sup_{\hatvO: \Rz^{[0, 1]}}\qty(
                \UO\qty(\hatvO)
                -
                \mu \l \PO\qty(\hatvO)
                +
                \frac{1}{\eta}\Ex[\vL, \vO]{\bigg.
                    G\qty\Big(
                        \l + \eta\qty\big( \rL - \l \PL\qty(\hatvO, \vL, \vO) )
                    ) - G(\l)
                }
            )
        \end{equation}
    \end{enumerate}
\end{lemma}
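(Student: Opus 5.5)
We argue by induction on $\tau$, following the warm-up in \cref{sec:warmup}. For the base case $\tau=0$, by definition $R_0(\l)=0$, and the first condition gives $A\cdot 0+\frac{1}{\eta}G(\l)=\frac1\eta G(\l)\ge 0$. So \cref{eq:dynamic:separation_bound} holds for every $\l\ge 0$.

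For the inductive step, fix $\tau\ge 1$ and assume \cref{eq:dynamic:separation_bound} holds for $\tau-1$ and all $\l\ge 0$. Take any fake-value function $\hatvO$. The next multiplier is $\l'=\l+\eta(\rL-\l\PL(\hatvO,\vL,\vO))$. We need $\l'\ge 0$ so that the inductive hypothesis applies at $\l'$. This follows from the argument behind \cref{lem:auction:learner_budget}: with values in $[0,1]$ and $\eta<1$, the learner's payment $\l\PL$ is at most her bid $\l\vL\le\l$, so $\l'\ge\l(1-\eta)\ge 0$. (If $\l\PL$ could be larger, e.g.\ a second-price payment $\l\hatvO(\vO)$ exceeding $\l\vL$, one notes that the learner pays at most her own bid, since she wins only when $\hatvO(\vO)\le\vL$.) Plugging the hypothesis into the recursive definition of $R_\tau(\l)$ pointwise inside the expectation gives
\begin{equation*}
R_\tau(\l)\le \sup_{\hatvO}\Ex[\vL,\vO]{\UO(\hatvO,\vL,\vO)-\mu\l\PO(\hatvO,\vL,\vO)+A(\tau-1)+\tfrac1\eta G(\l')}.
\end{equation*}

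Next we rewrite the right-hand side. Since $A(\tau-1)$ is a constant, we pull it out. By the definitions of $\UO(\hatvO)$ and $\PO(\hatvO)$ as expectations over $\mathcal D$, the first two terms become $\UO(\hatvO)-\mu\l\PO(\hatvO)$. We then add and subtract $\frac1\eta G(\l)$. This yields
\begin{equation*}
R_\tau(\l)\le A(\tau-1)+\tfrac1\eta G(\l)+\sup_{\hatvO}\qty(\UO(\hatvO)-\mu\l\PO(\hatvO)+\tfrac1\eta\Ex[\vL,\vO]{G(\l')-G(\l)}).
\end{equation*}
By the second condition, \cref{eq:dynamic:prop2}, the supremum is at most $A$, which gives $R_\tau(\l)\le A\tau+\frac1\eta G(\l)$ and completes the induction.

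The main obstacle is measure-theoretic rather than conceptual. The supremum is over arbitrary functions $\hatvO$, and the expectation of $G(\l')$ must be well defined; we will assume $G$ is measurable and bounded on the relevant range, since it will be constructed as an integral of a bounded monotone function. Randomized fake-value functions cause no trouble: they only average over deterministic $\hatvO$, so the supremum over deterministic functions already dominates them, and the recursion as written uses deterministic ones.
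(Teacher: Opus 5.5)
Your proof is correct and is essentially the paper's argument: induction on $\tau$, with the base case coming from $G \ge 0 = R_0$. The inductive step substitutes the hypothesis for $R_{\tau-1}$ inside the expectation and rearranges, so that the second condition bounds the remaining supremum by $A$. You also check explicitly that the next multiplier satisfies $\l + \eta(\rL - \l\PL(\hatvO,\vL,\vO)) \ge (1-\eta)\l \ge 0$, so that the hypothesis applies there; the paper leaves this implicit, relying on its earlier lemma that the learner's multipliers stay non-negative.
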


\begin{proof}
    We prove \cref{eq:dynamic:separation_bound} inductively in $\tau$.
    First, for $\tau = 0$, the claim holds since $G(\l) \ge 0 = R_0(\l)$ for all $\l \ge 0$.
    Fix a $\tau \ge 1$, and assume \cref{eq:dynamic:separation_bound} holds for $\tau-1$.
    We have that
    \begin{alignat*}{3}
        \Line{
            R_\tau(\l)
        }{=}{
            \sup_{\hatvO: \Rz^{[0, 1]}}\qty(
                \UO\qty(\hatvO)
                -
                \mu \l \PO\qty(\hatvO)
                +
                \Ex[\vL, \vO]{\bigg.
                    R_{\tau-1}\qty\Big(
                        \l + \eta\qty\big( \rL - \l \PL\qty(\hatvO, \vL, \vO) )
                    )
                }
            )
        }{}
        \\
        \Line{}{\le}{
            \sup_{\hatvO: \Rz^{[0, 1]}}\qty(
                \UO\qty(\hatvO)
                -
                \mu \l \PO\qty(\hatvO)
                +
                A(\tau-1)
                +
                \frac{1}{\eta}\Ex[\vL, \vO]{\bigg.
                    G\qty\Big(
                        \l + \eta\qty\big( \rL - \l \PL\qty(\hatvO, \vL, \vO) )
                    )
                }
            )
        }{}
        \\
        \Line{}{\le}{
            A(\tau-1)
            +
            A + \frac{1}{\eta} G(\l)
            =
            A \tau + \frac{1}{\eta} G(\l)
        }{}
    \end{alignat*}
    where the first inequality follows from the induction hypothesis (\cref{eq:dynamic:separation_bound} holds for $\tau-1$); the second follows from \cref{eq:dynamic:prop2} and careful rearranging.
\end{proof}

\subsection{Finding an appropriate \texorpdfstring{$G$}{G} for Lemma \ref{lem:dynamic:separation}} \label{ssec:dynamic:find_G}

In this section, we find a function $G$ (and a value $A$) that satisfies \cref{lem:dynamic:separation} and, in particular, \cref{eq:dynamic:separation_bound}.
To draw intuition for how we set this $G$, we consider the following calculation.
Consider that $G$ is ``nice'' and that $G(\l + \e) \approx G(\l) + \e G'(\l)$, where $G'(\cdot)$ is the derivative of $G(\cdot)$.
In this case, the r.h.s. of \cref{eq:dynamic:prop2} approximately becomes
\begin{equation*}
    \sup_{\hatvO: \Rz^{[0, 1]}}\qty(\Big.
        \UO\qty(\hatvO)
        -
        \mu \l \PO\qty(\hatvO)
        +
        G'(\l)\qty\big( \rL - \l \PL\qty(\hatvO) )
    )
\end{equation*}
and given that we want to pick the smallest possible $A$, it makes sense to set:
\begin{equation} \label{eq:dynamic:min_max_lagrangian}
    A =
    \inf_{G':\Rz \to \R}\,
    \sup_{\l \ge 0}\,
    \sup_{\hatvO: \Rz^{[0, 1]}}\qty(\Big.
        \UO\qty(\hatvO)
        -
        \mu \l \PO\qty(\hatvO)
        +
        G'(\l)\qty\big( \rL - \l \PL\qty(\hatvO) )
    )
\end{equation}
which, since $G'(\l)$ is minimizing the expression for every $\l$, can also be rewritten as 
\begin{equation*}
    A =
    \sup_{\l \ge 0}\,
    \inf_{g \in \R}\,
    \sup_{\hatvO: \Rz^{[0, 1]}}\qty(\Big.
        \UO\qty(\hatvO)
        -
        \mu \l \PO\qty(\hatvO)
        +
        g\cdot\qty\big( \rL - \l \PL\qty(\hatvO) )
    )
    .
\end{equation*}

We notice that the above problem is very similar to the dual of Optimization Program \ref{opt:auctions:stackelberg_eq_lagrange}.
In fact, the objective is equal to the Lagrangian of the following problem for a fixed $\l \ge 0$:
\begin{optimization} \label{opt:dynamic:opt_per_lambda}
    \max_{\hatVO: \Delta(\Rz^{[0, 1]})}
    &\quad
    \UO\qty(\hatVO) - \mu \l \PO\qty(\hatVO)
    \\
    \textrm{such that }
    &\quad
    \rL = \l \PL\qty(\hatVO)
\end{optimization}

Therefore, assuming some form of strong duality, we might be able to set $G'(\lambda)$ to the optimal Lagrange multiplier $g$ for every $\l$ of the above problem to get a good $A$.
However, this poses major difficulties, as also outlined in \cref{sec:warmup}.
For example, for certain values of $\l$, the optimal value of $g$ might be $-\infty$, e.g., if $\l < \rL$ \cref{opt:dynamic:opt_per_lambda} is infeasible, since $\PL\qty\big(\hatVO) \le 1$.
While this makes sense from the setting of $A$ we have above, having $G'(\l) = -\infty$ does not correspond to anything useful for \cref{lem:dynamic:separation}.

Despite these problems, the above solution is a good starting point.
First, we define the dual function of the above problem, $f: \R^2 \to \R$, 
\begin{equation*}
    f(\l, g)
    =
    \sup_{\hatvO: \Rz^{[0, 1]}}\qty(\Big.
        \UO\qty(\hatvO)
        -
        \mu \l \PO\qty(\hatvO)
        +
        g\cdot\qty\big( \rL - \l \PL\qty(\hatvO) )
    ).
\end{equation*}

We then pick $g$ so that $f(\l, g) \le R^\star$, with the goal to achieve $A \approx R^\star$ in \cref{lem:dynamic:separation}.
While this does not minimize the dual function for every $\l$ (like in \cref{eq:dynamic:min_max_lagrangian}), it achieves the goal of \cref{lem:dynamic:separation}.
Specifically, we define
\begin{equation} \label{eq:dynamic:g_def}
    \gs(\l)
    =
    \sup\qty{
        g \le 0: f(\l, g) \le R^\star
    }
    .
\end{equation}

Note that if for some $\l$, there is no $g \le 0$ such that $f(\l, g) \le R^\star$, then $\gs(\l) = -\infty$ (we prove this is not the case later).
However, if for some finite $g \le 0$ it holds that $f(\l, g) \le R^\star$, then we can replace the above $\sup$ with a $\max$, i.e., $f\qty\big(\l, \gs(\l)) \le R^\star$.
This is because the set $\{ g \le 0: f(\l, g) \le R^\star \}$ is an interval of the form $[a, b]$ or $(-\infty, b]$ for $a \le b \le 0$, due to $f(\l, \cdot)$ being convex.

\begin{claim} \label{cl:dynamic:f_convex}
    As a supremum of functions linear in $g$ or $\l$, $f(\l, g)$ is convex in each argument (but not necessarily jointly convex).
\end{claim}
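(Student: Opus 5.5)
The plan is to write $f$ as a pointwise supremum over an index set that does not depend on $(\l, g)$, and then apply the standard fact that a pointwise supremum of convex (in particular affine) functions is convex. Fix any fake value function $\hatvO \in \Rz^{[0,1]}$ and define
\begin{equation*}
    h_{\hatvO}(\l, g) = \UO\qty(\hatvO) - \mu \l \PO\qty(\hatvO) + g \rL - g \l \PL\qty(\hatvO),
\end{equation*}
so that $f(\l, g) = \sup_{\hatvO} h_{\hatvO}(\l, g)$. The quantities $\UO(\hatvO)$, $\PO(\hatvO)$ and $\PL(\hatvO)$ are expectations over $\mathcal D$ that depend only on $\hatvO$, not on $\l$ or $g$. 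They are finite because values lie in $[0,1]$: the optimizer's utility is at most $1$, and for second-price auctions $\PO \le 1$. For first-price auctions, or for $\PL$ in second-price auctions, the payment involves $\hatvO(\vO)$, which could be unbounded. In that case we either restrict attention to $\hatvO$ with finite expectations (others contribute $\pm\infty$ affine pieces that only shift $f$ to $+\infty$ without breaking convexity), or note that bids above $1/\l$ beyond what is needed to win are dominated. Either way, each $h_{\hatvO}$ is a real-valued function of $(\l, g)$.

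Next, for fixed $\l$, the map $g \mapsto h_{\hatvO}(\l, g)$ is affine, with slope $\rL - \l \PL(\hatvO)$ and intercept $\UO(\hatvO) - \mu\l\PO(\hatvO)$. For fixed $g$, the map $\l \mapsto h_{\hatvO}(\l, g)$ is affine, with slope $-\mu\PO(\hatvO) - g\PL(\hatvO)$ and intercept $\UO(\hatvO) + g\rL$. Hence $f(\l, \cdot)$ and $f(\cdot, g)$ are each suprema of affine functions, and therefore convex as extended-real-valued functions. Concretely, for $\theta \in [0,1]$ and every $\hatvO$,
\begin{equation*}
    h_{\hatvO}(\l, \theta g_1 + (1-\theta) g_2) = \theta h_{\hatvO}(\l, g_1) + (1-\theta) h_{\hatvO}(\l, g_2) \le \theta f(\l, g_1) + (1-\theta) f(\l, g_2),
\end{equation*}
and taking the supremum over $\hatvO$ gives the claim; the same argument works in $\l$. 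This is also what justifies the remark preceding the claim. The sublevel set $\{g \le 0 : f(\l, g) \le R^\star\}$ is convex, so it is an interval. It is closed because $f(\l,\cdot)$ is a supremum of continuous functions and hence lower semicontinuous, so the sublevel set is closed. This is what lets the $\sup$ in \cref{eq:dynamic:g_def} be attained whenever it is finite.

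For the parenthetical about joint convexity, note that $h_{\hatvO}$ contains the bilinear term $-g\l\PL(\hatvO)$. Its Hessian in $(\l, g)$ has zero diagonal and off-diagonal entries $-\PL(\hatvO)$, so it is indefinite whenever $\PL(\hatvO) > 0$. A single such $h_{\hatvO}$ is therefore not jointly convex, and the supremum need not be either. There is no real obstacle in this proof. The only point requiring care is ensuring each $h_{\hatvO}$ is well defined and finite, i.e., that the payment expectations are finite. I would handle this by the boundedness of values, plus the observation above that infinite-payment functions can be discarded or give $f = +\infty$, which is harmless for convexity.
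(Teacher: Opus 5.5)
Your proof is correct and is the paper's own argument: $f$ is a pointwise supremum over $\hatvO$ of functions that are affine in $g$ for fixed $\l$ and affine in $\l$ for fixed $g$, so it is convex in each argument separately. Two corrections to your side remarks. First, $\PL \le 1$ in both formats: in second-price the indicator forces $\hatvO(\vO) \le \vL \le 1$. The only possibly unbounded quantity is therefore the first-price $\PO$, which enters with coefficient $-\mu\l \le 0$ and can only create $-\infty$ pieces, not push $f$ to $+\infty$; indeed $f(\l,g) \le 1 + \abs{g}(\rL + \l)$ is finite. Second, the indefinite Hessian of a single piece only suggests, rather than proves, that the supremum can fail joint convexity.
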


Then, to guarantee that $G(\cdot)$ is not negative, we set 
\begin{equation*}
    G(\l) \approx
    - \int_\l^\infty \gs(x) dx
\end{equation*}

Note that even if $\gs$ is finite, $\gs$ might not be integrable or the above might be $+\infty$, e.g., if $\gs(\l) = - \Omega(1/\l)$ for large $\l$; we prove these are non-issues later, under the distributional assumption of \cref{thm:main}.
We now make some observations about the definition of $\gs$ in \cref{eq:dynamic:g_def} and the subsequent definition of $G$.
First, the definition of $G$ matches our original goal to set its derivative to something relating to a Lagrange multiplier of \cref{opt:dynamic:opt_per_lambda}.
Second, we define $\gs(\l)$ to be non-positive so that $G(\l)$ is weakly decreasing in $\l$, i.e., our bound $A \tau + \frac{1}{\eta}G(\l)$ is weakly decreasing in $\l$.
Third, because $\gs(\l) \le 0$, we are guaranteed to have that $G(\l) \ge 0$, as needed in \cref{lem:dynamic:separation}.
Finally, to make sure that the value of $G(\l)$ is as small as possible, in \cref{eq:dynamic:g_def} we define $\gs$ to be as close to $0$ as possible under the required assumption on the value of $f$.

\subsubsection{Proving that \texorpdfstring{$\gs$}{g star} is well behaved} 
\label{ssec:dynamic:gs_well_behaved}

The key property of $\gs$ of \cref{eq:dynamic:g_def} is that it is \textit{weakly increasing}.
This subsequently guarantees multiple desirable properties: it is integrable, bounded if and only if $\gs(0)$ is bounded, and its integral is non-infinite if it becomes $0$ for any $\l$.
We now prove that $\gs$ is weakly increasing.

\begin{lemma}[Monotonicity] \label{lem:dynamic:g_monotone}
    The function $\gs(\l)$, as defined in \cref{eq:dynamic:g_def}, is weakly increasing.
\end{lemma}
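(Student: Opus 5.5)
The plan is to show that the feasible set in \cref{eq:dynamic:g_def} can only grow with $\l$. Concretely, fix $\l_1 < \l_2$ and any $g \le 0$ with $f(\l_1, g) \le R^\star$; I will show $f(\l_2, g) \le R^\star$. Then $\{g \le 0 : f(\l_1,g)\le R^\star\} \sub \{g \le 0 : f(\l_2,g)\le R^\star\}$, so taking suprema gives $\gs(\l_1) \le \gs(\l_2)$. If $\gs(\l_1) = -\infty$ there is nothing to prove.

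Monotonicity of $f(\cdot, g)$ in $\l$ cannot work. For $g \le 0$, the term $-\mu\l\PO$ decreases in $\l$ but $-g\l\PL$ increases. Indeed $f(\l,g)\to\infty$ as $\l\to\infty$ when $g<0$, since the optimizer can make the learner pay something positive. Convexity of $f(\cdot,g)$ (\cref{cl:dynamic:f_convex}) alone also gives no anchor on the right. The key idea is to compare \emph{for a fixed optimizer strategy} and use the definition of $R^\star$ as the anchor. For fixed $\hatVO$ and $g$, the Lagrangian
\[
L(\l) = \UO(\hatVO) - \mu\l\PO(\hatVO) + g\qty\big(\rL - \l\PL(\hatVO))
\]
is affine in $\l$. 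Since $f(\l_2, g)$ is a supremum of such expressions (randomized $\hatVO$ do not increase the sup, by linearity), it suffices to show $L(\l_2) \le R^\star$ for every $\hatVO$. This avoids any attainment issue.

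Fix $\hatVO$ and split into cases. (i) If $\l_2\PL(\hatVO) \ge \rL$, then $(\hatVO, \l_2)$ is feasible for Optimization Program~\ref{opt:auctions:stackelberg_eq_lagrange}, so $\UO - \mu\l_2\PO \le R^\star$. Moreover $g(\rL - \l_2\PL) \ge 0$ fails only in the good direction: since $g \le 0$ and $\rL - \l_2\PL \le 0$, this term is $\ge 0$. That is the wrong sign for a direct bound. So instead I handle this case like case (iii) below with $\l_3 = \l_2$: the constraint is tight or slack, and I choose $\l_3 \le \l_2$... Actually the cleaner route is as follows. If $\l_2\PL \ge \rL$, there is $\l_3 \in (\l_1, \l_2]$ with $\l_3\PL = \rL$ if $\l_1\PL < \rL$. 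Otherwise $\l_1$ itself is feasible, and then $L(\l_2)$ is compared through the affine structure with the feasible points.

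The uniform way to organize all cases is this. Let $S = \{\l : \l\PL(\hatVO) \ge \rL\}$, which is an up-ray $[\l^\dagger,\infty)$ (or empty if $\PL = 0$). On $S$ we have $L(\l) \le \UO - \mu\l\PO + g(\rL - \l\PL)$ with both pieces controlled. Off $S$, i.e. for $\l < \l^\dagger$, we have $g(\rL - \l\PL) \le 0$, so $L(\l) \le \UO - \mu\l\PO$. The core case is $\l_2 < \l^\dagger$. Then at $\l_3 = \l^\dagger$ the constraint is tight, so $L(\l_3) = \UO - \mu\l_3\PO \le R^\star$. Also $L(\l_1) \le f(\l_1,g) \le R^\star$. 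Since $\l_1 < \l_2 < \l_3$ and $L$ is affine, $L(\l_2) \le \max(L(\l_1), L(\l_3)) \le R^\star$. If $\PL(\hatVO) = 0$, then $L(\l) = \UO - \mu\l\PO + g\rL$ is nonincreasing in $\l$, so $L(\l_2) \le L(\l_1) \le R^\star$.

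The remaining case is $\l_2 \ge \l^\dagger$. Then $g(\rL - \l_2\PL) \ge 0$, so feasibility alone does not bound $L(\l_2)$. There are two subcases. If $\l_1 \ge \l^\dagger$, I would use that on $S$ the slope of $L$ is $-\mu\PO - g\PL$. Evaluating at the tight point $\l^\dagger \le \l_1$ bounds $L(\l^\dagger)$ by $R^\star$, which together with $L(\l_1)\le R^\star$ constrains the affine function only on $[\l^\dagger,\l_1]$ and not beyond. The second subcase, $\l_1 < \l^\dagger \le \l_2$, has the same problem.

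This is where I expect the main obstacle, and the first thing I would try is an exchange argument. The idea is to replace $\hatVO$ at $\l_2$ by a mixture of $\hatVO$ with a ``do-nothing'' fake value (e.g., $\hatvO \equiv \infty$, so the optimizer wins, or a mix with the zero function) that scales $\PL$ down to make $\l_2\PL' = \rL$ tight. I would then check, via the definition of $R^\star$ and the fact that $R^\star \ge 0$ (by the zero-bid strategy with a large $\l$), that the excess term $g(\rL-\l_2\PL)\ge0$ is dominated. If that fails, I would fall back on restricting to near-optimal $\hatVO$ for $f(\l_2,g)$ and arguing by contradiction that such a maximizer must have $\l_2\PL < \rL$, which reduces everything to the clean interpolation case above.
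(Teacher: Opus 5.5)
\textbf{The statement you reduce to is false.} You aim to show that, for every $g\le 0$, $f(\l_1,g)\le R^\star$ implies $f(\l_2,g)\le R^\star$. You try to do this by bounding $L_{\l_2}(\hatVO,g)\le R^\star$ one strategy at a time, where $L_{\l}(\hatVO,g)=\UO(\hatVO)-\mu\l\PO(\hatVO)+g(\rL-\l\PL(\hatVO))$.

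Your own remark that $f(\l,g)\to\infty$ as $\l\to\infty$ for fixed $g<0$ already refutes this: any $g<0$ with $f(\l_1,g)\le R^\star$ leaves the feasible set once $\l_2$ is large. Concretely, in the warm-up $\PL\le 1/4$. So for $\l_1<4$ you get $f(\l_1,g)\to-\infty$ as $g\to-\infty$. For $\l_2>4$, the strategy $\hatvO\equiv 1/2$ has $\rL-\l_2\PL(\hatvO)<0$ and gives $f(\l_2,g)\to+\infty$.

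It fails even at the relevant value $g=\gs(\l_1)$. With $\l_1=0$ the warm-up has $\gs(0)=R^\star-1/2<0$, and $f(\l_2,\gs(0))>R^\star$ for all large $\l_2$, although the lemma asserts $\gs(\l_2)\ge\gs(0)$. The sets $\{g\le0: f(\l,g)\le R^\star\}$ are intervals whose right endpoints $\gs(\l)$ increase, but whose left endpoints go from $-\infty$ to finite values. So they are not nested, and the lemma is a statement about right endpoints only.

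In particular, in your case $\l_2\PL(\hatVO)\ge\rL$ the value $L_{\l_2}(\hatVO,g)$ really can exceed $R^\star$. No exchange or mixing argument for a single strategy at a fixed $g$ can close that case. Your interpolation case ($\l_2\PL<\rL$, with tight point $\l_3=\rL/\PL$) is correct and is precisely the first half of the paper's argument.

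\textbf{The missing idea is a global duality argument that exploits the maximality of $\gs(\l_1)$.} Suppose for contradiction $\gs(\l_2)<\gs(\l_1)=:g_1$. Then $g_1$ is finite and $f(\l_2,g_1)>R^\star$, so pick $\hatvO$ with $L_{\l_2}(\hatvO,g_1)>R^\star$.

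If $\l_2\PL(\hatvO)<\rL$, your interpolation yields the contradiction. Otherwise $(\hatvO,\l_2)$ is feasible for Optimization Program~\ref{opt:auctions:stackelberg_eq_lagrange}, so $\UO(\hatvO)-\mu\l_2\PO(\hatvO)\le R^\star$. This forces $g_1<0$ and $\l_2\PL(\hatvO)>\rL$. Then $L_{\l_2}(\hatvO,\cdot)$ is decreasing in $g$, so $f(\l_2,g)\ge L_{\l_2}(\hatvO,g_1)>R^\star$ for every $g\le g_1$. Meanwhile $f(\l_2,g)>R^\star$ on $(\gs(\l_2),0]\supseteq[g_1,0]$ by the definition of $\gs(\l_2)$. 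Continuity of the finite convex function $f(\l_2,\cdot)$ on $[g_1,0]$ then gives $\inf_{g\le0}f(\l_2,g)>R^\star$.

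But $g\mapsto f(\l_2,g)$ for $g\le 0$ is the dual function of the fixed-$\l_2$ program $\sup\{\UO(\hatVO)-\mu\l_2\PO(\hatVO):\l_2\PL(\hatVO)\ge\rL\}$, and $\hatvO$ is a Slater point for it. By strong duality, that program has value strictly above $R^\star$. This contradicts the definition of $R^\star$, which additionally optimizes over $\l$.

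Your fallback, showing that a near-maximizer must satisfy $\l_2\PL<\rL$, has the right shape. It works only with $g=\gs(\l_1)$ and under the hypothesis $\gs(\l_2)<\gs(\l_1)$. The contradiction in the remaining case must come from this strong-duality step, not from any estimate on a single strategy.
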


We prove the lemma by noticing that, if the above condition were violated, then there would be a feasible bidding that achieves a Lagrangian reward higher than $R^\star$.

\begin{proof}
    Towards a contradiction, assume that for some $0 \le \l_1 < \l_2$ it holds that $\gs(\l_1) > \gs(\l_2)$.
    Note that this implies that $\gs(\l_1) > - \infty$.

    First consider $f\qty\big(\l_2, \gs(\l_1))$.
    It must be the case that $f\qty\big(\l_2, \gs(\l_1)) > R^\star$ since, either
    (a) $\gs(\l_2) = - \infty$, implying $f\qty\big(\l_2, g) > R^\star$ for all $g \le 0$, or
    (b) $\gs(\l_2) \ne - \infty$, implying $f\qty\big(\l_2, g) > R^\star$ for all $g > \gs(\l_2)$.
    Let $\hatvO$ be the maximizer of $f\qty\big(\l_2, \gs(\l_1))$ (recall that $f$ is defined as a supremum over a Lagrangian).
    If no such maximizer exists, we define $\hatvO$ to be such that its value is close enough to $f\qty\big(\l_2, \gs(\l_1))$, so that it holds
    \begin{equation} \label{eq:dynamic:1}
        \UO(\hatvO) - \mu \l_2 \PO(\hatvO) + \gs(\l_1)\qty\big( \rL - \l_2 \PL(\hatvO) )
        >
        R^\star
    \end{equation}

    By definition of $\gs(\l_1) > - \infty$, we have that
    \begin{align*}
        R^\star
        \ge
        f\qty\big(\l_1, \gs(\l_1))
        \ge
        \UO(\hatvO) - \mu \l_1 \PO(\hatvO) + \gs(\l_1)\qty\big( \rL - \l_1 \PL(\hatvO) )
    \end{align*}

    Combining the above with \cref{eq:dynamic:1} we get
    \begin{equation*}
        (\l_2 - \l_1) \qty( \mu \PO(\hatvO) + \gs(\l_1) \PL(\hatvO)  )
        <
        0
        \iff
        \mu \PO(\hatvO) + \gs(\l_1) \PL(\hatvO) < 0
    \end{equation*}
    where we used that $\l_2 > \l_1$.

    We now assume that $\rL > \l_2 \PL(\hatvO)$ and prove the lemma; we postpone the proof of $\rL > \l_2 \PL(\hatvO)$ for the end of the proof.
    We define $\l_3 = \rL / \PL(\hatvO)$; this is well defined since $\PL(\hatvO) > 0$ by our claim above.
    We note that $\l_3 > \l_2$, since $\rL > \l_2 \PL(\hatvO)$.
    We now revisit \cref{eq:dynamic:1}:
    \begin{alignat*}{3}
        \Line{
            R^\star
        }{<}{
            \UO(\hatvO) - \mu \l_2 \PO(\hatvO) + \gs(\l_1)\qty\big( \rL - \l_2 \PL(\hatvO) )
        }{}
        \\
        \Line{}{<}{
            \UO(\hatvO) - \mu \l_3 \PO(\hatvO) + \gs(\l_1)\qty\big( \rL - \l_3 \PL(\hatvO) )
        }{\mu \PO(\hatvO) + \gs(\l_1) \PL(\hatvO) < 0\\\text{and }\l_3 > \l_2}
        \\
        \Line{}{=}{
            \UO(\hatvO) - \mu \l_3 \PO(\hatvO)
        }{\l_3 = \rL / \PL(\hatvO)}
    \end{alignat*}

    However, the above inequality is a contradiction by the definition of $R^\star$: this $(\hatvO, \l_3)$ combination is feasible for Optimization Program \ref{opt:auctions:stackelberg_eq_lagrange} but achieves a strictly higher value than the maximum.

    To complete the proof of the lemma, we conclude with the proof that $\rL > \l_2 \PL(\hatvO)$.
    Assume towards a contradiction that $\rL \le \l_2 \PL(\hatvO)$.
    This makes the pair $(\hatvO, \l_2)$ feasible for Optimization Program \ref{opt:auctions:stackelberg_eq_lagrange}, which implies that $\UO\qty(\hatvO) - \mu \l_2 \PO\qty(\hatvO) \le R^\star$.
    Combining this with \cref{eq:dynamic:1}, we get
    \begin{equation*}
        \gs(\l_1)\qty\big( \rL - \l_2 \PL(\hatvO) )
        >
        0
    \end{equation*}
    which implies that $\rL - \l_2 \PL(\hatvO) < 0$.
    Also, for every $g \le \gs(\l_1)$:
    \begin{equation*}
        f(\l_2, g)
        \ge
        \UO(\hatvO) - \mu \l_2 \PO(\hatvO) + g\cdot\qty\big( \rL - \l_2 \PL(\hatvO) )
        \ge
        \UO(\hatvO) - \mu \l_2 \PO(\hatvO) + \gs(\l_1)\qty\big( \rL - \l_2 \PL(\hatvO) )
    \end{equation*}
    where in the inequality we use $\rL < \l_2 \PL(\hatvO)$.
    The above implies that
    \begin{equation*}
        \inf_{g \le \gs(\l_1)} f(\l_2, g)
        \ge
        \UO(\hatvO) - \mu \l_2 \PO(\hatvO) + \gs(\l_1)\qty\big( \rL - \l_2 \PL(\hatvO) )
        >
        R^\star
    \end{equation*}
    where the last inequality is \cref{eq:dynamic:1}.
    On the other hand, we know that $\gs(\l_1) > \gs(\l_2)$, implying $f(\l_2, g) > R^\star$ for $g \ge \gs(\l_2)$.
    This implies that
    \begin{equation*}
        \inf_{g \le 0} f(\l_2, g)
        >
        R^\star
    \end{equation*}

    We proceed to show that the above is a contradiction, along with our previous observation, $\rL < \l_2 \PL(\hatvO)$.
    Consider the following optimization program
    \begin{optimization} \label{opt:dynamic:4}
        \sup_{\hatVO: \Delta(\Rz^{[0, 1]})}
        &\quad
        \UO\qty(\hatVO) - \mu \l_2 \PO\qty(\hatVO)
        \\
        \textrm{such that }
        &\quad
        \l_2 \PL\qty(\hatVO) \ge \rL
    \end{optimization}

    The above problem is a convex optimization problem, by considering the subset of $\R^2$,
    \begin{equation*}
        F = \qty{\bigg.
            \qty(\Big.
                \UO\qty(\hatvO') - \mu \l_2 \PO\qty(\hatvO'),
                \l_2 \PL\qty(\hatvO')
            ):
            \hatvO' \in \Rz^{[0, 1]}
        },
    \end{equation*}
    its closure $\bar F$ and its convex hull $\conv(\bar F)$.
    Specifically, the convex problem is
    \begin{optimization*}
        \max_{x \in \conv(\bar F)}
        &\quad
        x[1]
        \\
        \textrm{such that }
        &\quad
        x[2] \ge \rL
    \end{optimization*}

    We now observe that strong duality holds.
    By our previous observations that $\rL < \l_2 \PL(\hatvO)$, i.e., $\exists x \in F$ such that $\rL < x[2]$, we get that there exists a neighborhood around this point that strictly satisfies the constraint.
    The intersection of this neighborhood and $\conv(\bar F)$ is in the relative interior of $\conv(\bar F)$ and strictly satisfies the constraint, which makes Slater's condition hold, and therefore strong duality.
    We notice now that $f(\l_2, g)$ is the dual function of \cref{opt:dynamic:4}.
    Given that $\inf_{g \le 0} f(\l_2, g) > R^\star$ and strong duality, we get that the objective of \cref{opt:dynamic:4} is also strictly bigger than $R^\star$.
    However, $R^\star$ was defined as the \cref{opt:dynamic:4} where we are also allowed to optimize over our choice of $\l$, implying it cannot have a higher objective.
    This completes the contradiction.
\end{proof}

We now proceed to prove that $\gs$ is bounded.
Since it is weakly increasing, we only have to prove that $\gs(0)$ is bounded, which we do by explicitly calculating its value.

\begin{lemma}[Boundedness] \label{lem:dynamic:g_at_zero}
    The function $\gs(\l)$, as defined in \cref{eq:dynamic:g_def}, satisfies $\gs(0) = - \frac{\Ex{\vO} - R^\star}{\rL} \le 0$.
\end{lemma}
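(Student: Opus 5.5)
At $\l = 0$ the dual function collapses to a simple closed form, and the plan is to compute it directly. With $\l = 0$ the Lagrangian payment term $\mu \l \PO(\hatvO)$ disappears, and the learner's payment term $g\,(\rL - \l \PL(\hatvO))$ becomes the constant $g\rL$. Hence
\begin{equation*}
    f(0, g) = \sup_{\hatvO} \UO(\hatvO) + g \rL .
\end{equation*}
The supremum of $\UO$ over all fake-value functions is attained by bidding so high that the optimizer always wins. For example, take $\hatvO(\vO) = 2 > 1 \ge \vL$, which beats the tie-breaking in favor of the learner. In both first- and second-price auctions the optimizer's utility is $\vO \One{\hatvO(\vO) > \vL}$, so this gives $\Ex{\vO}$. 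It is also an upper bound, since $\UO \le \vO$ pointwise. Therefore
\begin{equation*}
    f(0, g) = \Ex{\vO} + g\rL ,
\end{equation*}
which is affine and strictly increasing in $g$ because $\rL > 0$.

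From the definition \cref{eq:dynamic:g_def}, $\gs(0) = \sup\{ g \le 0 : \Ex{\vO} + g\rL \le R^\star \}$. The unconstrained solution set is $g \le -(\Ex{\vO} - R^\star)/\rL$. So once we know this threshold is $\le 0$, the supremum equals exactly $-(\Ex{\vO} - R^\star)/\rL$, and it is finite. That proves both the formula and the claimed sign.

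The remaining step, and the only one with content, is to show $R^\star \le \Ex{\vO}$. Every feasible pair $(\hatVO, \l)$ in Optimization Program \ref{opt:auctions:stackelberg_eq_lagrange} has objective $\UO(\hatVO) - \mu \l \PO(\hatVO) \le \UO(\hatVO) \le \Ex{\vO}$. This uses $\mu \ge 0$, $\l \ge 0$, and nonnegative payments. If the program were infeasible the inequality would hold vacuously, but it is feasible: with a positive expected learner value, any $\hatVO$ with $\PL > 0$ (say $\hatvO \equiv 0$ in second price, where the learner pays $0$ — so better use first price or a small constant fake value) admits a large enough $\l$. In any case, the upper bound $R^\star \le \Ex{\vO}$ suffices.

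I do not expect a real obstacle. The only care needed is to confirm that the always-win bid is admissible in both auction formats, so that the supremum is exactly $\Ex{\vO}$, and that $\rL > 0$, so we may divide by it.
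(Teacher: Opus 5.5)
Your proposal is correct and follows the paper's proof. You compute $f(0,g)=\Ex{\vO}+g\rL$ via the always-win bid, read $\gs(0)$ off as the threshold $-(\Ex{\vO}-R^\star)/\rL$, and get the sign from $\UO(\hatVO)-\mu\l\PO(\hatVO)\le\UO(\hatVO)\le\Ex{\vO}$. The feasibility aside is not needed for the argument.
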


\begin{proof}
    We notice that
    \begin{equation*}
        f(0, g)
        =
        \sup_{\hatvO: \Rz^{[0, 1]}}\qty(\Big.
            \UO\qty(\hatvO)
            +
            g \rL
        )
        =
        \Ex{\vO} + g \rL
    \end{equation*}

    Given the constraint $\Ex{\vO} + g \rL \le R^\star$, the highest $g$ that satisfies this is $- \frac{\Ex{\vO} - R^\star}{\rL}$.
    We notice that this is non-positive due to Optimization Program \ref{opt:auctions:stackelberg_eq_lagrange}, since the optimization objective is
    \begin{equation*}
        \UO(\hatVO) - \l\mu\PO(\hatVO)
        \le
        \UO(\hatVO)
        \le
        \Ex{\vO}
        .
    \end{equation*}

    This completes the proof.
\end{proof}

We now prove that $\gs(\l) = 0$ for large enough $\l$, under our distribution assumption in \cref{thm:main}.
In addition, we prove the claim when $R^\star$ is larger than the optimizer's value when the learner has a value of $0$.
Combined with the above lemma, this proves that $\int_0^\infty \gs(\l)d\l$ is bounded.
We note that the following lemma might not be true without the assumptions, e.g., in our example in \cref{sec:app:dynamic:example}, see \cref{fig:app:dynamic:f}.

\begin{lemma} \label{lem:dynamic:g_becomes_zero}
    Let $\gs(\l)$ be as defined in \cref{eq:dynamic:g_def}.
    Assume that $\Ex{\vL} > 0$.
    Assume that $\Pr{0 < \vL < \vO\e} = 0$ for some $\e > 0$ or that $R^\star > \Ex{\vO \One{\vL = 0} }$.
    Then, there exists a finite $\bar \l$ such that $\gs(\bar\l) = 0$.
\end{lemma}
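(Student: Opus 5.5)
We will show that $g=0$ is itself admissible in \cref{eq:dynamic:g_def} once $\l$ is large, i.e.\ that $f(\bar\l,0)\le R^\star$ for some finite $\bar\l$; then $\gs(\bar\l)=0$, because $\gs\le 0$ by definition. Setting $g=0$ gives
\[
f(\l,0)=\sup_{\hatvO}\qty\big(\UO(\hatvO)-\mu\l\PO(\hatvO)).
\]
This is non-increasing in $\l$, since payments are nonnegative. We bound it by splitting the rounds according to whether $\vL=0$ or $\vL>0$.

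\textbf{Case $\vL>0$.} In a round the optimizer wins with $\vL>0$, her payment is at least $\l\vL$ in both formats: it equals $\l\vL$ in second price and is $\l\hatvO\ge\l\vL$ in first price. So her Lagrangian contribution from that round is at most $\vO-\mu\l\vL$. Under the assumption $\Pr{0<\vL<\vO\e}=0$, we have $\vL\ge\e\vO$ whenever $\vL>0$. For $\l\ge 1/(\mu\e)$ this contribution is therefore $\le 0$, so the optimizer gains nothing from such rounds.

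\textbf{Case $\vL=0$.} These rounds contribute at most $\Ex{\vO\One{\vL=0}}$. Hence $f(\l,0)\le\Ex{\vO\One{\vL=0}}$ for all $\l\ge 1/(\mu\e)$.

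\textbf{Second hypothesis.} Here we do not have $\e$. Instead we bound the positive part of the $\vL>0$ contribution by $\Ex{\vO\One{0<\vL<\vO/(\mu\l)}}\le\Pr{0<\vL<1/(\mu\l)}$, which tends to $0$ as $\l\to\infty$. Hence $\limsup_{\l\to\infty} f(\l,0)\le\Ex{\vO\One{\vL=0}}<R^\star$, and some finite $\bar\l$ works.

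\textbf{Remaining step under the first hypothesis.} We must show $R^\star\ge\Ex{\vO\One{\vL=0}}$. We do this by exhibiting near-feasible points of \cref{opt:auctions:stackelberg_eq_lagrange}: take the constant fake value $\hatvO\equiv\d$ for small $\d>0$.
\begin{itemize}
\item The optimizer wins whenever $\vL<\d$, so $\UO\ge\Ex{\vO\One{\vL=0}}$.
\item Second price. The learner pays $\d$ when $\vL\ge\d$, so we set $\l=\rL/(\d\Pr{\vL\ge\d})$. The optimizer's payment is then $\mu\l\,\Ex{\vL\One{0<\vL<\d}}\le\mu\rL\Pr{0<\vL<\d}/\Pr{\vL\ge\d}\to0$. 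Here $\Pr{\vL\ge\d}>0$ for small $\d$ because $\Ex{\vL}>0$.
\item First price. The learner pays $\vL\One{\vL\ge\d}$, so $\l=\rL/\Ex{\vL\One{\vL\ge\d}}$ stays bounded, and the optimizer's payment $\mu\l\d\Pr{\vL<\d}\to0$.
\end{itemize}
Letting $\d\to0$ gives $R^\star\ge\Ex{\vO\One{\vL=0}}$, so $f(\bar\l,0)\le R^\star$ at $\bar\l=1/(\mu\e)$.

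\textbf{Edge case $\mu=0$.} The argument above divides by $\mu$, so we treat $\mu=0$ separately. Here $f(\l,0)=\Ex{\vO}$ for every $\l$, so we need $R^\star=\Ex{\vO}$. With $\mu=0$, $R^\star$ is the supremum of $\UO$ subject only to the learner's constraint, which holds for a large enough $\l$ as soon as $\PL(\hatvO)>0$. A fake value that wins almost everywhere but leaves the learner a small positive-payment set approaches $\Ex{\vO}$; such a set exists because $\Ex{\vL}>0$. For example, in second price set $\hatvO$ large except equal to a tiny positive value on a tiny set of $\vO$.

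I expect the main obstacle to be this lower bound on $R^\star$ under the $\e$-hypothesis, together with the $\mu=0$ corner. These steps require building explicit near-optimal feasible points, with separate checks for first and second price. By contrast, the upper bound on $f(\l,0)$ is a direct pointwise argument.
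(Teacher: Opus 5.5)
Your proof is correct and follows essentially the same route as the paper. You use the constant fake value $\d\to0$ (with $\l=\rL/\PL(\d)$) to get $R^\star\ge\Ex{\vO\One{\vL=0}}$, the bound $f(\l,0)\le\Ex{\vO\One{\vL=0}}+\Pr{0<\mu\l\vL<\vO}$ coming from the optimizer paying at least $\l\vL$ whenever she wins, and a separate treatment of $\mu=0$.

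The one slip is your illustrative $\mu=0$ construction. A deterministic $\hatvO$ that loses only on a small set of $\vO$-values need not exist (e.g.\ when $\vO\equiv1$). Instead, use the randomized $\hatVO$ that bids $2$ with probability $1-p$ and $\d$ with probability $p$. This is allowed in the definition of $R^\star$, and in both auction formats it gives $\PL(\hatVO)>0$ and $\UO(\hatVO)\ge(1-p)\Ex{\vO}$.
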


\begin{proof}
    We first prove that $R^\star \ge \Ex{\vO \One{\vL = 0} }$, relying just on the assumption that $\Ex{\vL} > 0$.
    Consider the optimizer bidding a small constant $\d > 0$, such that $\Pr{\vL \ge \d} > 0$ (if no such $\d$ exists then $\Ex{\vL} = 0$).
    Let $\l = \rL/\PL(\d)$, which is well defined since $\PL(\d) \ge \d \Pr{\vL \ge \d} > 0$.
    This means that this $\l$, along with bidding $\d$, is feasible for Optimization Program \ref{opt:auctions:stackelberg_eq_lagrange}.
    We now examine its objective value.
    For second-price,
    \begin{equation*}
        \UO(\d) - \mu \l \PO(\d)
        =
        \UO(\d) - \mu \rL \frac{\PO(\d)}{\PL(\d)}
        =
        \Ex{\vO \One{\vL < \d} } - \mu \rL \frac{\Ex{\vL \One{\vL < \d}}}{\d \Pr{\vL \ge \d}}
        \ge
        \Ex{\vO \One{\vL < \d} } - \mu \rL \frac{\d \Pr{0 < \vL < \d}}{\d \Pr{\vL \ge \d}}
    \end{equation*}
    and for first-price
    \begin{equation*}
        \UO(\d) - \mu \l \PO(\d)
        =
        \UO(\d) - \mu \rL \frac{\PO(\d)}{\PL(\d)}
        =
        \Ex{\vO \One{\vL < \d} } - \mu \rL \frac{\d \Pr{\vL < \d}}{\Ex{\vL \One{\vL \ge \d}}}
    \end{equation*}

    We notice that in both cases, the final r.h.s. converges to $\Ex{\vO \One{\vL = 0} }$ by taking $\d \to 0$.
    This proves that $R^\star \ge \Ex{\vO \One{\vL = 0} }$.

    We now proceed to prove that $\gs(\l)$ becomes $0$ eventually.
    We get that $\gs(\l) = 0$ if $f(\l, 0) \le R^\star$.
    This is easy to do in the case when $\mu = 0$: we can extend the above analysis to show that $R^\star = \Ex{\vO}$ and also show that $f(\l, 0) = \Ex{\vO}$, making the desired $\bar \l = 0$.
    For $\mu > 0$, we have that
    \begin{equation*}
        f(\l, 0)
        =
        \sup_{\hatvO}\qty\Big(
            \UO\qty(\hatvO)
            -
            \mu \l \PO\qty(\hatvO)
        )
        \le
        \sup_{\hatvO} \Ex{\qty(\vO - \mu \l \vL) \One{\hatvO(\vO) > \vL}}
        =
        \Ex{(\vO - \mu \l \vL)^+}
    \end{equation*}
    where in the inequality we used the payment rule of second-price that lower bounds the first-price one and in the final equality we used the ``truthful'' bidding rule that maximizes the expression, $\hatvO(\vO) = \frac{\vO}{\mu\l}$.
    We further upper bound the above quantity:
    \begin{equation*}
        \Ex{(\vO - \mu \l \vL)^+}
        =
        \Ex{\vO \One{\vL = 0} }
        +
        \Ex{(\vO - \mu \l \vL) \One{0 < \mu \l \vL < \vO}}
        \le
        \Ex{\vO \One{\vL = 0} }
        +
        \Pr{0 < \mu \l \vL < \vO}
    \end{equation*}
    where in the inequality we used $\vO - \mu \l \vL \le 1$.
    We now analyze the above for the two assumptions we made in the lemma.
    If $\Pr{0 < \vL < \vO\e} = 0$ for some $\e > 0$ then setting $\l = \bar\l = 1/(\mu\e)$ makes the above equal to $\Ex{\vO \One{\vL = 0} } \le R^\star$.
    If, instead, $R^\star > \Ex{\vO \One{\vL = 0} }$ then since the probability converges to $0$ as $\l \to \infty$, there must be some finite $\bar\l$ such that the expression becomes at most than $R^\star$.
    In either case, this proves that $f(\bar\l, 0) \le R^\star$ for some finite $\bar\l$, implying $\gs(\bar\l) = 0$.
\end{proof}

We now examine the last property of $\gs$ we are interested in.
At the beginning of \cref{ssec:dynamic:find_G}, we analyzed the requirements of \cref{lem:dynamic:separation} under the assumption that $G(\l + \e) \approx G(\l) + \e G'(\l)$, which is similar to assuming that $G'(\l)$ is Lipschitz.
However, we can show that there are examples where $\gs(\l)$ is not only not Lipschitz, but also discontinuous, implying that the previous approximation fails.

We remedy this in a two-step process.
First, we define $\gs_\sigma(\l)$, an averaged version of $\gs(\l)$ over an interval of length $\sigma$ around $\l$, for a small $\sigma > 0$.
Because $\gs(\l)$ is bounded by $\abs{\gs(0)}$, this will make the new function $\abs{\gs(0)} / \sigma$-Lipschitz.
Then, to prove that this $\gs_\sigma(\l)$ is useful, we prove that using a slightly incorrect $\gs$ leads to small error in the dual function $f$: $f\qty\big(\l, \gs(\l + \e)) \le R^\star + \order{\abs{\e}}$.
This means that even if we use $\gs_\sigma(\l)$ instead of $\gs(\l)$, we still get a good bound.
However, we emphasize that the previous inequality of using $\gs(\l+\e)$ instead of $\gs(\l)$ is not good enough to get our desired bounds: the perturbation of $\l$ in \cref{eq:dynamic:prop2} is not fixed and depends on the realizations of $\vL, \vO$; the above inequality requires a fixed perturbation.
We now present this inequality formally property formally.

\begin{lemma} \label{lem:dynamic:nearby_g}
    Let $\gs(\l)$ be as defined in \cref{eq:dynamic:g_def} and fix $\l \ge 0$ and $\l + \e \ge 0$.
    Then,
    \begin{equation*}
        f\qty\big( \l, \gs(\l + \e) )
        \le
        R^\star + \mu \e^+ - \gs(0) \e^-
    \end{equation*}
    where $\e^+ = \max\{\e, 0\}$ and $\e^- = \max\{-\e, 0\}$.
\end{lemma}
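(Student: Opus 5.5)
The plan is to compare the Lagrangian inside $f$ at multiplier $\l$ with the same Lagrangian at $\l' = \l + \e$, evaluated at the same dual variable $g = \gs(\l+\e)$.

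\textbf{Step 1: $g$ is finite and lies in $[\gs(0), 0]$.} By \cref{lem:dynamic:g_monotone} and \cref{lem:dynamic:g_at_zero}, since $\l + \e \ge 0$, we have $\gs(0) \le g \le 0$. In particular $g > -\infty$. As noted after \cref{eq:dynamic:g_def}, the supremum defining $\gs(\l')$ is then attained, so $f(\l', g) \le R^\star$.

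\textbf{Step 2: decompose the Lagrangian.} For any fixed fake value function $\hatvO$, add and subtract the terms at $\l'$:
\begin{equation*}
    \UO(\hatvO) - \mu \l \PO(\hatvO) + g\qty\big(\rL - \l \PL(\hatvO))
    =
    \Big[\UO(\hatvO) - \mu \l' \PO(\hatvO) + g\qty\big(\rL - \l' \PL(\hatvO))\Big]
    + \e\qty\big(\mu \PO(\hatvO) + g \PL(\hatvO)).
\end{equation*}
The bracket is at most $f(\l', g) \le R^\star$. Taking the supremum over $\hatvO$ therefore reduces the lemma to bounding $\e\qty\big(\mu \PO(\hatvO) + g \PL(\hatvO))$ uniformly over the relevant $\hatvO$.

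\textbf{Step 3: two cases on the sign of $\e$.}
\begin{itemize}
    \item If $\e > 0$: since $g \le 0$, the extra term is at most $\e \mu \PO(\hatvO)$. It therefore suffices that $\PO(\hatvO) \le 1$.
    \item If $\e < 0$: since $\mu \PO \ge 0$, the extra term is at most $\abs{\e}\,(-g)\,\PL(\hatvO)$. Using $-g \le -\gs(0)$, this is at most $-\gs(0)\,\e^-$ provided $\PL(\hatvO) \le 1$.
\end{itemize}
The bound $\PL(\hatvO) \le 1$ holds in both auction formats. In second-price, $\PL = \Ex{\hatvO(\vO)\One{\hatvO(\vO) \le \vL}} \le \Ex{\vL} \le 1$. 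In first-price, $\PL = \Ex{\vL \One{\cdot}} \le 1$.

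\textbf{Main obstacle: $\PO(\hatvO) \le 1$ in first-price.} In second-price this is immediate, since $\PO = \Ex{\vL\One{\cdot}}$. In first-price, however, $\PO = \Ex{\hatvO(\vO)\One{\hatvO(\vO) > \vL}}$ is unbounded for arbitrary $\hatvO$. I would handle this by showing that the supremum in $f(\l, g)$ may be restricted to $\hatvO \le 1 + \d$ for arbitrary small $\d > 0$. The argument is to cap $\hatvO$ at $1 + \d$:
\begin{itemize}
    \item The win events are unchanged, because $\vL \le 1$.
    \item The optimizer's utility is therefore unchanged.
    \item The learner's payment $\l \vL \One{\hatvO \le \vL}$ is unchanged.
    \item The optimizer's payment weakly decreases.
\end{itemize}
Hence, with $\mu \ge 0$, the Lagrangian does not decrease under the cap. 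Applying Step 2 to capped $\hatvO$ gives the extra term at most $\e\mu(1 + \d)$ in the case $\e > 0$. Letting $\d \to 0$ yields $f(\l, g) \le R^\star + \mu\e^+ - \gs(0)\e^-$, as claimed.
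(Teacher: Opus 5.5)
Your proof is correct and follows the paper's argument. You compare the Lagrangian at $\l$ with the one at $\l+\e$ for the same $\hatvO$ and the same dual variable $\gs(\l+\e)$, use $f(\l+\e,\gs(\l+\e))\le R^\star$, and bound the leftover term $\e\bigl(\mu\PO(\hatvO)+\gs(\l+\e)\PL(\hatvO)\bigr)$ using payments at most $1$ and $\gs(0)\le\gs(\l+\e)\le 0$. The paper works with a $\d$-near-maximizer instead of taking the supremum over your decomposition, which makes no difference. Your capping of $\hatvO$ at $1+\d$ is a genuine addition: the paper simply asserts $\PO\le 1$, and in first-price auctions with unbounded fake values that claim needs exactly the justification you give.
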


\begin{proof}
    Let $\hatvO$ be a bidding function that is $\d$-close to the supremum in the definition of $f\qty\big( \l, \gs(\l + \e) )$, i.e.,
    \begin{equation*}
        f\qty\big( \l, \gs(\l + \e) ) - \d
        \le
        \UO(\hatvO) - \mu \l \PO(\hatvO) + \gs(\l + \e)\qty\big( \rL - \l \PL(\hatvO) )
    \end{equation*}

    By the definition of $\gs(\l + \e)$, we have that
    \begin{equation*}
        R^\star
        \ge
        f\qty\big( \l + \e  , \gs(\l + \e) )
        \ge
        \UO(\hatvO) - \mu (\l + \e) \PO(\hatvO) + \gs(\l + \e)\qty\big( \rL - (\l + \e) \PL(\hatvO) )
    \end{equation*}

    Adding the two inequalities, we get that
    \begin{equation*}
        f\qty\big( \l, \gs(\l + \e) ) - \d - R^\star
        \le
        \mu \e \PO(\hatvO)
        +
        \gs(\l + \e) \e \PL(\hatvO)
        \le
        \mu \e^+
        -
        \gs(0) \e^-
    \end{equation*}
    where in the inequality we used that $\PO, \PL \le 1$ and that $\gs(0) \le \gs(\l + \e) \le 0$.
    Taking $\delta$ arbitrarily close to $0$ proves the lemma.
\end{proof}

We now define the averaged version of $\gs$ that is Lipschitz.
Specifically, for a parameter $\sigma > 0$, we define
\begin{equation} \label{eq:dynamic:g_sigma_def}
    \gs_\sigma(\l)
    =
    \frac{1}{\sigma} \int_\l^{\l+\sigma} \gs(x) dx
\end{equation}
and proceed to prove that $\gs_\sigma$ satisfies all the properties that $\gs$ satisfied, with the addition of being $\order{1/\sigma}$-Lipschitz.

\begin{lemma}[$\gs_\sigma$ Properties] \label{lem:dynamic:g_sigma_props}
    For any $\sigma > 0$, the following properties hold for $\gs_\sigma(\l)$, as defined in \cref{eq:dynamic:g_sigma_def}.
    \SetLabelAlign{CenterWithParen}{\hfil\makebox[.8em]{#1}\hfil}
    \begin{enumerate}[label=(\roman*),align=CenterWithParen]
        \item\label{item:dynamic:g_sigma_increasing}
        It is weakly increasing.
        
        \item\label{item:dynamic:g_sigma_bounds}
        It satisfies: $-\infty < \gs(0) \le \gs_\sigma(\l) \le 0$.
        
        \item\label{item:dynamic:g_sigma_becomes_zero}
        It eventually becomes $0$ if $\gs$ becomes $0$: if $\gs(\l) = 0$ for some $\l$ then $\gs_\sigma(\l) = 0$.
        
        \item\label{item:dynamic:g_sigma_at_f}
        For every $\l \ge 0$, it holds
        \begin{equation*}
            f\qty\big(\l, \gs_\sigma(\l))
            \le
            R^\star
            +
            \mu \sigma
        \end{equation*}
        
        \item\label{item:dynamic:g_sigma_lipschitz}
        It is Lipschitz. Specifically, for every $\l, \e \ge 0$, $0 \le \gs_\sigma(\l + \e) - \gs_\sigma(\l) \le - \frac{\e}{\sigma} \gs(\l)$.
    \end{enumerate}
\end{lemma}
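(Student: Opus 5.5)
Most of the five properties follow directly from \cref{lem:dynamic:g_monotone,lem:dynamic:g_at_zero,lem:dynamic:nearby_g}. The one with real content is \ref{item:dynamic:g_sigma_at_f}. Throughout I use that $\gs$ is weakly increasing, so it is measurable and integrable on bounded intervals. By \cref{lem:dynamic:g_at_zero} it is also finite with $\gs(0)\le \gs(x)\le 0$ for all $x\ge 0$. Hence $\gs_\sigma$ is well defined.

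For \ref{item:dynamic:g_sigma_increasing}, I would write $\gs_\sigma(\l)=\int_0^1 \gs(\l+\sigma s)\,ds$. Each integrand $\gs(\l+\sigma s)$ is weakly increasing in $\l$, so the average is too. For \ref{item:dynamic:g_sigma_bounds}, average the pointwise bounds $\gs(0)\le\gs(x)\le 0$ over $x\in[\l,\l+\sigma]$, and use \cref{lem:dynamic:g_at_zero} for $\gs(0)>-\infty$. For \ref{item:dynamic:g_sigma_becomes_zero}, if $\gs(\l)=0$ then monotonicity and $\gs\le 0$ force $\gs(x)=0$ for all $x\ge\l$, so the average over $[\l,\l+\sigma]$ is $0$.

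For \ref{item:dynamic:g_sigma_lipschitz}, the lower bound $0\le\gs_\sigma(\l+\e)-\gs_\sigma(\l)$ is \ref{item:dynamic:g_sigma_increasing}. For the upper bound with $\e\le\sigma$, the overlapping parts of the two integrals cancel, leaving
\[
\gs_\sigma(\l+\e)-\gs_\sigma(\l)=\frac1\sigma\qty(\int_{\l+\sigma}^{\l+\sigma+\e}\gs-\int_\l^{\l+\e}\gs).
\]
The first integral is at most $0$ and the second is at least $\e\,\gs(\l)$, which gives the bound $-\frac{\e}{\sigma}\gs(\l)$. If $\e>\sigma$, the difference is at most $0-\gs_\sigma(\l)\le-\gs(\l)\le -\frac{\e}{\sigma}\gs(\l)$, using $\gs_\sigma(\l)\ge\gs(\l)$ (monotonicity) and $\gs(\l)\le0$.

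The main step is \ref{item:dynamic:g_sigma_at_f}. I would use convexity of $f(\l,\cdot)$ (\cref{cl:dynamic:f_convex}) together with Jensen's inequality for the uniform average:
\[
f\qty\big(\l,\gs_\sigma(\l))=f\qty\Big(\l,\int_0^1\gs(\l+\sigma s)\,ds)\le\int_0^1 f\qty\big(\l,\gs(\l+\sigma s))\,ds.
\]
Then \cref{lem:dynamic:nearby_g} with $\e=\sigma s\ge0$ bounds each integrand by $R^\star+\mu\sigma s\le R^\star+\mu\sigma$. Note that only nonnegative perturbations appear, because the average is taken over $[\l,\l+\sigma]$; this is why the $\gs(0)\e^-$ term does not show up.

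The obstacle is justifying Jensen here, since $f(\l,\cdot)$ is a supremum of affine functions and could a priori take the value $+\infty$. I would avoid Jensen in integral form and argue pointwise instead. For any fixed $\hatvO$, the quantity inside the supremum defining $f$ is affine in $g$. Therefore its value at $g=\gs_\sigma(\l)$ equals the average over $s$ of its values at $g=\gs(\l+\sigma s)$. Each of those values is at most $f(\l,\gs(\l+\sigma s))\le R^\star+\mu\sigma$. Taking the supremum over $\hatvO$ gives the claim, and it needs only integrability of the bounded monotone $\gs$.
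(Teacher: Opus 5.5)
Your proof is correct. Items (ii) and (iii) match the paper; (i), (iv) and (v) are argued differently.

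\textbf{Item (iv).} The paper does not average over the whole window. It notes $\gs(\l)\le\gs_\sigma(\l)\le\gs(\l+\sigma)$ and uses that a convex function $f(\l,\cdot)$ on a segment is bounded by the larger of its two endpoint values. So it needs \cref{lem:dynamic:nearby_g} only at $\e=0$ and $\e=\sigma$ and needs no integrability. It also never meets the $+\infty$ issue you worry about, since both endpoint values are finite and convexity of a supremum of affine functions holds in the extended sense anyway.

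Your pointwise averaging of each affine Lagrangian is a Jensen-type use of the same convexity. It costs a continuum of applications of \cref{lem:dynamic:nearby_g} plus integrability of the bounded monotone $\gs$. In return it gives the slightly sharper bound $R^\star+\mu\sigma/2$, if you integrate $\mu\sigma s$ over $s\in[0,1]$ instead of bounding it by $\mu\sigma$.

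\textbf{Items (i) and (v).} The paper differentiates $\gs_\sigma$, with derivative $\frac{\gs(\l+\sigma)-\gs(\l)}{\sigma}$, and applies the mean value theorem. Strictly, that formula holds only where $\gs$ is continuous, and it needs an almost-everywhere reading when $\gs$ jumps. Jumps are exactly the situation the smoothing is meant to handle. Your representation $\gs_\sigma(\l)=\int_0^1\gs(\l+\sigma s)\,ds$ and the direct cancellation of the overlapping integrals avoid differentiability entirely, so they are the more careful argument here. One small simplification: the cancellation identity holds for every $\e\ge 0$, not only $\e\le\sigma$, so your separate case $\e>\sigma$ is unnecessary.
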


\begin{proof}
    We prove that $\gs_\sigma(\cdot)$ is weakly increasing by examining its derivative, $\frac{\gs(\l + \sigma) - \gs(\l)}{\sigma} \ge 0$, since $\gs(\cdot)$ is weakly increasing.

    For the lower and upper bounds, we have that $\gs_\sigma(\l) \in [\gs(\l), \gs(\l+\sigma)] \sub [\gs(0), 0]$.
    We recall that $\gs(0) > - \infty$ by \cref{lem:dynamic:g_at_zero}.

    Since $\gs(\cdot)$ is weakly increasing and non-negative, if $\gs(\l) = 0$ then $\gs(x) = 0$ for $x \in [\l, \l+\sigma]$, implying $\gs_\sigma(\l) = 0$.

    To prove property \ref{item:dynamic:g_sigma_at_f}, we first recall that the function $f(\l, \cdot)$ is convex (\cref{cl:dynamic:f_convex}).
    Convexity means that when evaluating $f(\l, \gs_\sigma(\l))$, since $\gs(\l) \le \gs_\sigma(\l) \le \gs(\l + \sigma)$, its maximum is at one of the boundaries:
    \begin{alignat*}{3}
        \Line{
            f\qty\big(\l, \gs_\sigma(\l))
        }{\le}{
            \max\qty{\Big.
                f\qty\big(\l, \gs(\l)),
                f\qty\big(\l, \gs(\l + \sigma))
            }
            \le
            R^\star
            +
            \mu\sigma
        }{\textrm{by \cref{lem:dynamic:nearby_g}}}
    \end{alignat*}

    Finally, to prove Lipschitzness, by the mean value theorem, we have that
    \begin{equation*}
        \gs_\sigma(\l + \e) - \gs_\sigma(\l)
        \le
        \e  \sup_{x \in [\l, \l + \e]} \frac{\gs(x + \sigma) - \gs(x)}{\sigma}
        \le
        - \frac{\e}{\sigma} \gs(\l)
        \le
        - \frac{\e}{\sigma} \gs(0)
    \end{equation*}
    which concludes the proof of the lemma.
\end{proof}

\subsubsection{Using \texorpdfstring{$\gs_\sigma(\cdot)$}{g star sigma} in Lemma \ref{lem:dynamic:separation}}

With \cref{lem:dynamic:g_sigma_props}, we have all the properties that we had highlighted, and we are ready to define the $G(\cdot)$ that we will use in \cref{lem:dynamic:separation}.
Specifically, for any $\sigma > 0$, we define
\begin{equation} \label{eq:dynamic:G_def}
    G_\sigma(\l)
    =
    - \int_\l^\infty \gs_\sigma(x) dx
    .
\end{equation}

We note that the above is well-defined: since $\gs_\sigma(\cdot)$ is monotone, it is integrable.
In addition, if $\gs_\sigma(\cdot)$ becomes $0$ eventually, then $G_\sigma(\cdot)$ is always bounded.
We now prove that the above $G_\sigma(\cdot)$ satisfies \cref{lem:dynamic:separation} for an appropriate $A$.

\begin{lemma} \label{lem:dynamic:usage_of_G}
    Let $G_\sigma(\cdot)$ be as defined in \cref{eq:dynamic:G_def} and assume that $\gs(\bar\l) = 0$ for some $\bar\l \ge \rL$.
    Then, $G_\sigma(\cdot)$ satisfies \cref{lem:dynamic:separation} with
    \begin{equation*}
        A =
        R^\star
        +
        \mu\, \sigma
        -
        \eta\frac{\rL^2 + \qty(\frac{\bar\l - \eta \rL}{1 - \eta})^2 }{\sigma}
        \gs(0)
        .
    \end{equation*}
\end{lemma}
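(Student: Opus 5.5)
The plan is to check the two conditions of \cref{lem:dynamic:separation} for $G = G_\sigma$. The first condition is immediate. By \cref{lem:dynamic:g_sigma_props}\ref{item:dynamic:g_sigma_bounds}, $\gs_\sigma \le 0$, so $G_\sigma(\l) = -\int_\l^\infty \gs_\sigma(x)\,dx \ge 0$. This integral is finite: $\gs_\sigma$ is bounded below by $\gs(0) > -\infty$, and by \ref{item:dynamic:g_sigma_becomes_zero} together with monotonicity, $\gs_\sigma(x) = 0$ for all $x \ge \bar\l$. Consequently $G_\sigma$ is differentiable with $G_\sigma' = \gs_\sigma$, and $G_\sigma \equiv 0$ on $[\bar\l, \infty)$. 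All of the work is in the second condition, \cref{eq:dynamic:prop2}. Fix $\l \ge 0$ and $\hatvO$, and write $\d = \eta\qty\big(\rL - \l \PL(\hatvO, \vL, \vO))$ for the random step. Note that $\l + \d \ge (1-\eta)\l \ge 0$, so $G_\sigma(\l+\d)$ is well defined.

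\textbf{Key step: a second-order expansion of $G_\sigma$.} Let $L = -\gs(0)/\sigma$. By \ref{item:dynamic:g_sigma_lipschitz}, $\gs_\sigma$ is $L$-Lipschitz; more precisely, for $x \ge y$ we have $0 \le \gs_\sigma(x) - \gs_\sigma(y) \le L(x-y)$. I will show
\[
G_\sigma(\l + \d) - G_\sigma(\l) \le \d\, \gs_\sigma(\l) + L \d^2 .
\]
If $\d \ge 0$, the left side equals $\int_\l^{\l+\d} \gs_\sigma$, and each integrand value is at most $\gs_\sigma(\l) + L(x-\l)$. If $\d < 0$, the left side equals $-\int_{\l+\d}^{\l} \gs_\sigma$, and each integrand value is at least $\gs_\sigma(\l) - L(\l - x)$. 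In both cases the claimed bound follows, in fact with constant $L/2$, which is more than enough.

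\textbf{Case split on $\l$ to control $\d^2$.} Set $\l' = \frac{\bar\l - \eta\rL}{1-\eta}$. Since $\bar\l \ge \rL$, this gives $\l' \ge \bar\l$.

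\emph{Case $\l \ge \l'$.} Because $\PL \le 1$, we get $\l + \d \ge (1-\eta)\l + \eta\rL \ge \bar\l$. Both $\l$ and $\l+\d$ then lie in the region where $G_\sigma = 0$, so the $G$-difference in \cref{eq:dynamic:prop2} vanishes. The remaining term is $\UO(\hatvO) - \mu\l\PO(\hatvO)$, and its supremum is $f(\l, 0)$. Since $\l \ge \bar\l$, monotonicity (\cref{lem:dynamic:g_monotone}) gives $\gs(\l) = 0$, hence $f(\l,0) \le R^\star \le A$.

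\emph{Case $\l < \l'$.} Here $|\d| \le \eta\max\{\rL, \l\} \le \eta\max\{\rL, \l'\}$, so $\d^2 \le \eta^2(\rL^2 + \l'^2)$. Substituting the expansion into \cref{eq:dynamic:prop2}, dividing by $\eta$, and taking expectation over $(\vL,\vO)$ (the linear term averages to $\eta\qty\big(\rL - \l\PL(\hatvO))$) gives a right-hand side of at most
\[
\sup_{\hatvO}\qty\Big(\UO(\hatvO) - \mu\l\PO(\hatvO) + \gs_\sigma(\l)\qty\big(\rL - \l\PL(\hatvO))) + \eta L(\rL^2 + \l'^2).
\]
The supremum is exactly $f(\l, \gs_\sigma(\l))$, which is at most $R^\star + \mu\sigma$ by \ref{item:dynamic:g_sigma_at_f}. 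Plugging in $L = -\gs(0)/\sigma$ yields precisely the stated $A$.

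The main obstacle is that $\l$ is unbounded, so a naive Taylor error term of order $\eta\l^2/\sigma$ would blow up. The case split handles this: every $\l$ from which one step can land below $\bar\l$ lies below $\l'$, which keeps the error bounded, and every larger $\l$ sits entirely in the flat region of $G_\sigma$. The other point that needs care is the sign of $\d$ in the expansion, which is handled by the one-sided Lipschitz bound in \ref{item:dynamic:g_sigma_lipschitz}.
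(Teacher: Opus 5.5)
Your proof is correct and takes essentially the same route as the paper. Both arguments bound $G_\sigma(\l+\delta)-G_\sigma(\l)\le \delta\,\gs_\sigma(\l)+O(\delta^2/\sigma)$ using the Lipschitz property of $\gs_\sigma$, split cases at $\frac{\bar\l-\eta\rL}{1-\eta}$, and close with $f(\l,\gs_\sigma(\l))\le R^\star+\mu\sigma$. The differences are bookkeeping only: you use the uniform constant $-\gs(0)/\sigma$ and handle large $\l$ through the flatness of $G_\sigma$ on $[\bar\l,\infty)$ together with $f(\l,0)\le R^\star$, whereas the paper carries a local coefficient $-\gs(\l-\varepsilon^-)/\sigma$ that vanishes in that range.
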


Before proving the lemma, we apply it to prove \cref{thm:main}.

\begin{proof}[Proof of \cref{thm:main}]
    By the assumptions of \cref{thm:main} and \cref{lem:dynamic:g_becomes_zero}, we have that $\gs(\bar\l) = 0$ for some $\bar\l \ge \rL$.
    Because of \cref{lem:dynamic:usage_of_G}, the total expected Lagrangian value of the optimizer when the learner starts at $\l^{(1)} = 0$ is
    \begin{alignat*}{3}
        \Line{
            R_T(0)
        }{\le}{
            T R^\star
            +
            \mu\, \sigma \, T
            -
            \eta\frac{\rL^2 + \qty(\frac{\bar\l - \eta \rL}{1 - \eta})^2 }{\sigma}
            \gs(0) T
            +
            \frac{G_\sigma(0)}{\eta}
        }{}
        \\
        \Line{}{\le}{
            T R^\star
            +
            \mu\, \sigma \, T
            +
            \eta\frac{\rL^2 + \qty(\frac{\bar\l - \eta \rL}{1 - \eta})^2 }{\sigma}
            \frac{\Ex{\vO} - R^\star}{\rL} T
            +
            \frac{\bar\l}{\eta} \frac{\Ex{\vO} - R^\star}{\rL}
        }{}
        \\
        \Line{}{=}{
            T R^\star
            +
            \Theta\qty(
                \sigma \, T
                +
                \frac{\eta}{\sigma} T
                +
                \frac{1}{\eta}
            )
        }{}
    \end{alignat*}
    where the second inequality follows by $G_\sigma(0) \le \bar\l \gs(0)$ and \cref{lem:dynamic:g_at_zero}; in the last equality we notice that $\bar\l$, $\mu$, $\rL$, $R^\star$, and $\Ex{\vO}$ are constants with respect to $T$, $\eta$, $\sigma$, since their definitions are independent of them.
    Using that the total expected reward the budgeted optimizer gets is at most $R_T(0) + T \mu \rO$ and that $R^\star + \rO = \opt$, we get that the total expected reward of the budgeted optimizer is at most
    \begin{equation*}
        T \opt
        +
        \Theta\qty(
            \sigma \, T
            +
            \frac{\eta}{\sigma} T
            +
            \frac{1}{\eta}
        )
    \end{equation*}
    By taking $\sigma = \Theta(\sqrt\eta)$ and then $\eta = \Theta(T^{-2/3})$ we get the final bounds of \cref{thm:main}.
\end{proof}

We now proceed to prove the lemma.

\begin{proof}[Proof of \cref{lem:dynamic:usage_of_G}]
    We notice that $G_\sigma(\l) \ge 0$ since $\gs_\sigma(x) \le 0$ for all $x \ge 0$.
    This takes care of the first condition of \cref{lem:dynamic:separation}.
    We now prove the second condition.
    Let $\e = \eta\qty\big( \rL - \l \PL\qty(\hatvO, \vL, \vO) )$.
    If $\e \ge 0$, we have that
    \begin{equation*}
        G_\sigma(\l + \e) - G_\sigma(\l)
        =
        \int_{\l}^{\l + \e} \gs_\sigma(x) dx
        \le
        \e \, \gs_\sigma(\l + \e)
        \le
        \e \, \qty(
            \gs_\sigma(\l)
            -
            \frac{\e}{\sigma}\gs(\l)
        )
    \end{equation*}
    where we use $\gs_\sigma(x) \le \gs_\sigma(\l + \e)$ in the first inequality and \cref{lem:dynamic:g_sigma_props}-\ref{item:dynamic:g_sigma_lipschitz} in the second inequality.
    If $\e \le 0$ and $\l + \e \ge 0$, we have
    \begin{equation*}
        G_\sigma(\l + \e) - G_\sigma(\l)
        =
        - \int_{\l+\e}^{\l} \gs_\sigma(x) dx
        \le
        \e \, \gs_\sigma(\l + \e)
        \le
        \e \, \qty(
            \gs_\sigma(\l)
            -
            \frac{\e}{\sigma}\gs(\l + \e)
        )
    \end{equation*}
    where we use $\gs_\sigma(x) \ge \gs_\sigma(\l + \e)$ in the first inequality and $\gs_\sigma(\l) - \gs_\sigma(\l + \e) \le \frac{\e}{\sigma}\gs(\l + \e)$ (which again comes from
    \cref{lem:dynamic:g_sigma_props}-\ref{item:dynamic:g_sigma_lipschitz}) in the second inequality.
    We summarize the bound for any $\e$ as
    \begin{equation*}
        G_\sigma(\l + \e) - G_\sigma(\l)
        \le
        \e \, \gs_\sigma(\l)
        -
        \frac{\e^2}{\sigma}\gs\qty\big( \l - \e^- )
    \end{equation*}

    We now use the above inequality in the r.h.s. of \cref{eq:dynamic:prop2}:
    \begin{align*}
        &
        \sup_{\hatvO: \Rz^{[0, 1]}}\qty(
            \UO\qty(\hatvO)
            -
            \mu \l \PO\qty(\hatvO)
            +
            \frac{1}{\eta}\Ex[\vL, \vO]{\bigg.
                G_\sigma\qty\Big(
                    \l + \eta\qty\big( \rL - \l \PL\qty(\hatvO, \vL, \vO) )
                ) - G_\sigma(\l)
            }
        )
        \\&\quad\le
        \sup_{\hatvO: \Rz^{[0, 1]}}\Bigg(
            \UO\qty(\hatvO)
            -
            \mu \l \PO\qty(\hatvO)
            +
            \frac{1}{\eta}
            \E_{\vL, \vO} \bigg[
                \eta\qty\big( \rL - \l \PL\qty(\hatvO, \vL, \vO) ) \gs_\sigma(\l)
        \\&\hspace{180pt}
                -
                \frac{\eta^2\qty\big( \rL - \l \PL\qty(\hatvO, \vL, \vO) )^2}{\sigma}
                \gs\qty\Big( 
                    \l
                    -
                    \eta\qty\big( \rL - \l \PL\qty(\hatvO, \vL, \vO) )^-
                )
            \bigg]
        \Bigg)
        \\&\quad\le
        \sup_{\hatvO: \Rz^{[0, 1]}}\qty(
            \UO\qty(\hatvO)
            -
            \mu \l \PO\qty(\hatvO)
            +
            \gs_\sigma(\l) \qty\big( \rL - \l \PL\qty(\hatvO) ) 
            -
            \frac{\eta\qty\big( \rL^2 + \l^2 )}{\sigma}
            \gs_\sigma\qty\Big( 
                \l
                -
                \eta\qty\big( \rL - \l  )^-
            )
        )
        \\&\quad=
        f\qty\big( \l, \gs_\sigma(\l) )
        -
        \frac{\eta\qty\big( \rL^2 + \l^2 )}{\sigma}
        \gs_\sigma\qty\Big( 
            \l
            -
            \eta\qty\big( \rL - \l  )^-
        )
        \le
        R^\star
        +
        \mu\, \sigma
        -
        \frac{\eta\qty\big( \rL^2 + \l^2 )}{\sigma}
        \gs_\sigma\qty\Big( 
            \l
            -
            \eta\qty\big( \rL - \l )^-
        )
    \end{align*}
    where in the second inequality we used the definition $\Ex{\PL(\hatvO, \vL, \vO)} = \PL(\hatvO)$ and that $\PL(\hatvO, \vL, \vO) \le 1$; in the last inequality we used \cref{lem:dynamic:g_sigma_props}-\ref{item:dynamic:g_sigma_at_f}.
    When $\l \ge \frac{\bar\l - \eta \rL}{1 - \eta}$ and $\bar\l \ge \rL$, we have that $\l - \eta\qty\big( \rL - \l )^- \ge \bar\l$, implying that $\gs_\sigma(\l - \eta\qty\big( \rL - \l )^-) = 0$ by $\gs(\bar \l) = 0$ and \cref{lem:dynamic:g_sigma_props}-\ref{item:dynamic:g_sigma_becomes_zero}.
    When $\l \le \frac{\bar\l - \eta \rL}{1 - \eta}$, by taking $\gs_\sigma(\cdot) \ge \gs(0)$ by \cref{lem:dynamic:g_sigma_props}-\ref{item:dynamic:g_sigma_bounds} and bounding $\l^2$ we get the lemma.
    This completes the proof.
\end{proof}

\printbibliography{}

\clearpage
\appendix
\section{Extended Related Work} \label{sec:app:related}

Here we include a more detailed version of our related work in \cref{sec:related}, focusing on works in budgeted auctions.

\paragraph{Pacing Multipliers in Static Settings}
Pacing multipliers have been studied extensively, since they simplify the action space of bidders in second-price auctions from an arbitrary mapping from values/context to bids.
\cite{BBW15} use pacing multipliers to prove the existence of a Fluid Mean-field Equilibrium in budgeted second price auctions.
\cite{DBLP:journals/ior/ConitzerKSM22} extend this idea and introduce the Second Price Pacing Equilibrium (SPPE), an equilibrium with a finite number of players, all of which are best responding.
\citet{DBLP:journals/mansci/ConitzerKPSMSW22} use pacing multipliers in first-price auctions, where they come up with a weaker equilibrium notion where the players are not necessarily best responding with respect to any bidding function, but are using the optimal pacing multiplier.
\cite{CKK24} prove that computing an SPPE is PPAD-hard.
\cite{BKK23} show that in budgeted first-price auctions, one can get an equilibrium by composing a pacing multiplier with a unbudgeted Bayes-Nash Equilibrium.

\paragraph{Learning in Budgeted Second-price Auctions}
This area was started by the seminal work of \citet{BG19}, who studied budget-constrained learners trying to maximize their total quasi-linear utility.
Their proposed algorithm is to find the optimal pacing multiplier.
From the perspective of a single learner, they prove $\order*{\sqrt T}$ regret in Bayesian environments and $B/T$ competitive ratio in adversarial environments.
They also show that all the bidders follow their algorithm, under rather strong assumptions on the value distributions, this process converges to an SPPE.
Finally, they show that this behavior is an equilibrium (i.e., players do not want to use a different algorithm), as long as the probability that any two players simultaneously have positive values is small.~\cite{BLM23} show how dual mirror descent based algorithms can simultaneously achieve good guarantees in stochastic and adversarial settings (with adversarial guarantees degrading with how much the adversary is changing the environment).~\cite{BLMS23} uncover a deep connection between dual mirror descent and PID controller algorithms, and provide the first regret guarantees for PID controllers for online resource allocation problems.~\cite{BKMSW23} study budget pacing in settings where the bidder's value distribution changes across rounds, and show how having just one sample per distribution is good enough to achieve $O(\sqrt{T})$ regret.~\cite{FPW23}, examine the setting where the learning player also has an ROI constraint, and show how to achieve $\orderT*{\sqrt T}$ regret in Bayesian environments.~\cite{BBFLMSW24} extend this further by showing how to achieve the same with strategies that are fully decoupled across the budget and ROS constraints.

\paragraph{Learning in Budgeted First-price Auctions}
Learning in untruthful auctions poses major challenges compared to truthful ones, since the optimal strategy is no longer a simple pacing multiplier.
Instead, the optimal strategy might involve an arbitrarily complicated mapping from values to bids.
\cite{WYDK23} study budgeted settings when the value and the highest competing bid are independent, and achieve $\orderT*{\sqrt T}$ in Bayesian environments.
\cite{CCK24} study first-price auctions with budget and ROI constraints, and show no-regret guarantees in Bayesian and adversarial environments when there is a finite number of possible values and bids.
\cite{AFZ25} extend these results in Bayesian settings for uncountable value and bidding spaces, when the goal is to learn the optimal Lipschitz mapping from values to bids.

\paragraph{Learning in Budgeted First-price Auctions with Pacing Multipliers}
While not optimal, lots of past work has focused on learning pacing multipliers in repeated budgeted auctions and the implications of doing so.
\cite{DBLP:conf/innovations/GaitondeLLLS23} show that if all the players are learning pacing multipliers with a generalized version of the algorithm of~\cite{BG19}, the resulting welfare is at least half of the optimal one in either first- or second-price auctions.
\cite{DBLP:journals/mor/FikiorisT25} generalize this result for first-price auctions and show that, as long as every player has low regret with respect to the optimal pacing multiplier, then the resulting welfare is high.
\cite{DBLP:conf/colt/LucierPSZ24} extend the results of \cite{DBLP:conf/innovations/GaitondeLLLS23} to the case when players also have ROI constraints. 
They also show a simple algorithm similar to ours has $o(T)$ regret against the following benchmark: the value achieved by the optimal sequence of multipliers that observes the average budget constraint in each round in expectation, as long as this sequence does not change too much between rounds.

\paragraph{Learning in Budgeted Environments}
The study of learning in budgeted environments pre-dates the study of learning in budgeted auctions.
\cite{BKS18} introduce \textit{Bandits with Knapsacks} a generalization of Multi-armed Bandits with $m$ budget constraints, and propose algorithms with $\orderT*{\sqrt T}$ regret in Bayesian environments.
\cite{AD16} offer similar guarantees in the contextual setting.
\cite{ISSS22} extend this study to adversarial environments and show how sublinear regret is not achievable here; instead, they offer $\order*{\log T}$ competitive ratio guarantees that are tight when the budget is sublinear in $T$.
\cite{KS20} improve the competitive ratio of~\cite{ISSS22} by improving the dependence on the number of budget constraints.
\cite{CCK22} offer $\order*{B/T}$-competitive ratio that is optimal when $B = \Omega(T)$ and also extend these results when the budgets are replenishable (which models an ROI constraint that can both increase or decrease) in \cite{CCK24}.
Finally, \cite{BLMSX25} develop a weaker benchmark against which they achieve sublinear regret.

\section{Budgeted Stackelberg Equilibrium for multiple optimizer constraints} \label{sec:app:defs}

In this section, we offer the generalized definition of \cref{def:defs:budgeted_stackelberg}, where the Leader has multiple constraints.

\begin{definition}[Budgeted Stackelberg Equilibrium] \label{def:defs:budgeted_stackelberg_mult_constraints}
    Consider strategy spaces of two players, $\calA$ for the optimizer and $\mathcal B$ for the learner.
    For actions $a \in \calA$ and $b \in \mathcal B$ let
    \begin{itemize}
        \item $U(a, b) \in \R$ be the utility of the optimizer,
        \item $P_i(a, b) \in \R$ be the payment of the optimizer for a resource $i \in [m]$ with average budget $\rho_i \in \R$, and
    \end{itemize}

    Also, for a distribution of the Optimizer's actions $A \in \Delta(\calA)$, let $BR(A) \sub \Delta(\mathcal B)$ be the distributions of actions of the learner that are best responses to $A$.
    Then, a \textit{Budgeted Stackelberg Equilibrium} is a distribution $\nu$ over $\Delta(\calA) \times \Delta(\mathcal B)$, such that the expected utility of the Optimizer is maximized while satisfying all the constraints on expectation, while the learner is best responding for every random action of the Optimizer.
    Formally, the value of the Budgeted Stackelberg Equilibrium is
    \begin{optimization} \label{eq:defs:budgeted_stackelberg_mult_constraints}
        \sup_{\nu \in \Delta\qty\big( \Delta(\calA) \times \Delta(\mathcal B) )}
        \quad &
        \Ex[(A, B) \sim \nu]{
            \Ex[a \sim A, b\sim B]{U(a, b)}
        }
        \\
        \textrm{\textup{such that }}
        \quad &
        \Ex[(A, B) \sim \nu]{
            \Ex[a \sim A, b\sim B]{P_i(a, b)}
        } \le \rho_i \qquad \forall i \in [m]
        \\[10pt] & \quad
        B \in BR(A) \qquad\qquad \forall (A, B) \in \textrm{supp}(\nu)
    \end{optimization}
\end{definition}

And we now offer the generalization of \cref{lem:defs:budgeted_stackelberg_simple}, that we can limit the support of distribution $\nu$ to at most $m + 1$ points.

\begin{lemma} \label{lem:defs:budgeted_stackelberg_simple_mult_constraints}
    The value of \cref{eq:defs:budgeted_stackelberg_mult_constraints} is equal to
    \begin{optimization} \label{eq:defs:budgeted_stackelberg_simple_mult_constraints}
        \sup_{\substack{
            A_j \in \Delta(\calA), \; z_j \ge 0,
            \\
            B_j \in \Delta(\mathcal B)
        }}
        \quad &
        \sum_{j = 1}^{m+1} z_j \Ex[a \sim A_j, b \sim B_j]{U(a, b)}
        \\
        \textrm{\textup{such that }}
        \quad &
        \sum_{j = 1}^{m+1} z_j \Ex[a \sim A_j, b \sim B_j]{P_i(a, b)} \le \rho_i \qquad \forall i \in [m]
        \\[10pt] &
        \sum_{j = 1}^{m+1} z_j = 1
        \\[10pt] &
        B_j \in BR(A_j) \qquad\qquad \forall j \in [m+1]
    \end{optimization}
\end{lemma}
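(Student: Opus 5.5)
The proof reduces the problem to points in $\R^{m+1}$ and applies Carathéodory's theorem. The feasible set of the second program is contained in that of the first: take $\nu$ to put mass $z_j$ on $(A_j,B_j)$. So the value of \cref{eq:defs:budgeted_stackelberg_simple_mult_constraints} is at most that of \cref{eq:defs:budgeted_stackelberg_mult_constraints}, and it remains to show the reverse inequality.

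For that, define the set
\[
S=\Big\{\big(U(A,B),P_1(A,B),\dots,P_m(A,B)\big)\in\R^{m+1} : A\in\Delta(\calA),\ B\in \BR(A)\Big\},
\]
where $U(A,B)$ and $P_i(A,B)$ denote expectations under $a\sim A$, $b\sim B$. For any feasible $\nu$, the vector of expected utility and payments, $x_\nu$, is the $\nu$-average of points of $S$. Hence it lies in the convex hull $\conv(S)$. If $\nu$ has infinite support, I would not appeal to closures of $S$, which need not be closed. Instead I would use the standard fact that the barycenter of a probability measure on a set $S\subseteq\R^{m+1}$ lies in $\conv(S)$. This holds because $S$ need not be closed but the mean of a measure supported on $S$ is always in $\conv(S)$, provable by induction on dimension via supporting hyperplanes. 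Since we only compare suprema, an alternative is to approximate $\nu$ by finitely supported measures; I would state this step briefly, as it is where the measure-theoretic care lies.

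By Carathéodory's theorem, $x_\nu$ is a convex combination of at most $m+2$ points of $S$. This yields a finitely supported feasible $\nu'$ with the same objective and the same payments, and I now reduce the support from $m+2$ to $m+1$ points, mirroring the argument sketched for \cref{lem:defs:budgeted_stackelberg_simple}. Consider the linear program over weights $z\in\Rz^{m+2}$ with $\sum_j z_j=1$ and $\sum_j z_j P_i(A_j,B_j)\le\rho_i$ for all $i$, maximizing $\sum_j z_j U(A_j,B_j)$. The current weights are feasible, and the feasible region is a bounded polytope. Hence an optimal basic feasible solution exists with objective value at least the current one. A basic solution has at most as many nonzero coordinates as there are constraints other than nonnegativity, namely $m+1$: one equality and $m$ inequalities. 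Padding with zero weights then gives a feasible solution of \cref{eq:defs:budgeted_stackelberg_simple_mult_constraints} with value at least that of $\nu$. Taking the supremum over $\nu$ gives the reverse inequality, so the two values are equal.

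The main obstacle is the barycenter step when $\nu$ has infinite support and $S$ is not closed; the LP vertex argument is routine.
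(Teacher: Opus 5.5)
Your proof is correct and follows the paper's route. Carath\'eodory's theorem, applied to the $(m+1)$-dimensional utility--payment vectors of best-responding pairs, gives $m+2$ points, and a basic optimal solution of the weight LP (one equality plus $m$ inequalities) reduces this to $m+1$. The only difference is that you work with the set $S$ itself and a barycenter argument, whereas the paper passes to the convex hull of the closure $\bar F$. Your version is slightly more careful, since limit points in $\bar F$ need not be realized by any pair with $B \in \BR(A)$ and may satisfy the budget constraints only in the limit.
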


\begin{proof}
    Define $U(A, B) = \Ex[a \sim A, b \sim B]{U(a, b)}$ and $P_i(A, B) = \Ex[a \sim A, b \sim B]{P_i(a, b)}$ for $i \in [m]$.
    Consider the following subset of $\R^{m+1}$:
    \begin{equation*}
        F = \qty{
            \qty\big(U(A, B), P_1(A, B), P_2(A, B), \ldots, P_m(A, B))
            :
            B \in \BR(A), A \in \Delta(\calA)
        }
    \end{equation*}
    its closure $\bar F$, and the convex hull of the closure, $\conv(\bar F)$.
    Then, \cref{eq:defs:budgeted_stackelberg_mult_constraints} can be re-written as
    \begin{optimization*}
        \sup_{x \in \conv(\bar F)} \;
        \quad &
        x[1]
        \\
        \textrm{ such that }
        \quad &
        x[i + 1] \le \rho_i \qquad \forall i \in [m]
    \end{optimization*}

    Via Carath\'eodory's theorem, any point in $\conv(\bar F)$ can be written as a convex combination of at most $m+2$ points of $\bar F$, since $F$ has $m+1$ dimensions.
    We now proceed that we actually need $m+1$ points for the optimal distribution.
    Consider a collection of $m+2$ point $x_1, x_2, \ldots, x_{m+2}$ and consider we are trying to find the optimal way to mix them.
    This leads to the following linear program over $m+2$ variables:
    \begin{optimization*}
        \max_{z_1, \ldots, z_{m+2} \ge 0}
        \quad &
        \sum_{j=1}^{m+2} z_j x_j[1]
        \\
        \textrm{ such that }
        \quad &
        \sum_{j=1}^{m+2} z_j x_j[i+1] \le \rho_i \qquad \forall i \in [m]
        \\ &
        \sum_{j=1}^{m+2} z_j = 1
    \end{optimization*}

    Since this is a linear program with $m+1$ constraints, if feasible, its optimal solution can always be supported in $m+1$ variables, i.e., in the optimal solutions one of the $z_j$'s would be $0$.
    This means that we can always drop one variable from any potential solution, i.e., the optimal solution can be written as a convex combination of at most $m+1$ points of $\bar F$.
    This completes the proof.
\end{proof}
\section{Omitted Proofs of Section \ref{sec:auctions}} \label{sec:app:auctions}

In this section, we include the omitted proofs of \cref{sec:auctions}, of \cref{lem:auction:learner_budget,lem:auction:strong_duality}.

\AlgorithmProps*

\begin{proof}
    To prove that the bids are non-negative, we notice that
    \begin{equation}\label{eq:531}
        \l^{(t+1)}
        =
        \l^{(t)}
        +
        \eta\qty(\rL - p_{\mathtt L}^{(t)})
        \ge
        \l^{(t)}
        +
        \eta\qty(\rL - \l^{(t)})
        =
        (1-\eta)\l^{(t)}
        +
        \eta \rL
        \ge
        \min\qty{\l^{(t)}, \rL}
    \end{equation}
    where the first inequality follows from the payment rule in either first- or second-price and that $\vL \le 1$; the second inequality follows from $0 < \eta < 1$.
    Therefore, if $\l^{(1)} \ge 0$, then $\l^{(t)} \ge 0$ in every round $t$.

    By summing \eqref{eq:update} over all $t \in [T]$ we get that $\sum_{t \in [T]} p_{\mathtt L}^{(t)} = \rL T + \frac{\l^{(1)} - \l^{(T+1)}}{\eta}$ which proves the payment part of the claim.
    We get that the learner does not run out of budget by \cref{eq:531}, which recursively proves $\l^{(T+1)} \ge \min\{ \l^{(1)}, \rL \} = \l^{(1)}$, if $\l^{(1)} \le \rL$.
\end{proof}

\StrongDuality*

\begin{proof}
    We first simplify \cref{opt:auctions:stackelberg_eq}.
    Let $F \sub \R_{\ge 0}^2$ be all the pairs of expected utility and payment the optimizer can get while the learner is best responding:
    \begin{equation*}
        F = \qty{
            \qty( \UO(\hatVO), \l\, \PO(\hatVO) )
            :
            \hatVO \in \Delta\qty([0, 1]^{[0,1]}), \l \in \Rz,
            \textrm{ and } \l\, \PL(\hatVO) \ge \rL
        }
    \end{equation*}

    Let $\bar F$ be the closure of $F$ and let $\conv(\bar F)$ be its convex hull.
    Then \cref{opt:auctions:stackelberg_eq} corresponds to
    \begin{optimization} \label{eq:auction:opt_prob_simple}
        \opt =
        \max_{x \in \conv(\bar F)}
        \quad & x[1]
        \\
        \textrm{such that }
        \quad &
        x[2] \le \rO
    \end{optimization}

    Then, if strong duality holds and the above problem always has a feasible solution, we have that the dual and the primal have the same value, and there exists a finite Lagrange multiplier that achieves this, i.e.,
    \begin{optimization*}
        \opt =
        \min_{\mu \ge 0} \max_{x \in \bar F}
        \quad x[1] - \mu \, x[2] + \mu \, \rO
        .
    \end{optimization*}

    We get that we can maximize $x$ over $F$ instead of $\conv(\bar F)$ by noticing that the problem becomes linear and unconstrained, so the optimal value is achieved at a boundary point of $\conv(\bar F)$ which is included in $\bar F$.

    For strong duality to hold, we only need to show that there exists a point $x$ in the relative interior of $\conv(\bar F)$ such that $x[2] \le \rO$ \cite[Proposition 5.3.1]{bertsekas2009convex}.
    Consider the optimizer bidding using
    \begin{equation*}
        \hatVO =
        \begin{cases}
            \e, &\quad\textrm{ w.p. } 1-\e^2 p \\
            2, &\quad\textrm{ w.p. } \e^2 p
        \end{cases}
    \end{equation*}
    for some $\e, p \in (0, 1)$.
    Since the learner's value has a positive expectation, we can always take $\e$ small enough such that $\Pr{\vL \ge \e} > 0$ and therefore $\PL(\e) > 0$, where $\PL(\e)$ is the learner's expected payment when the optimizer bids $\e$ (and similarly for $\PO(\e), \UO(\e)$).
    Then, using $\l = \frac{\rL}{ (1 - \e^2 p) \PL(\e) }$ we get
    \begin{align*}
        \UO(\hatVO) &= \qty\big( (1-\e^2 p) \Pr{\vL < \e} + \e^2 p) \bar\vO
        , \\
        \l\PO(\hatVO) &= \l \qty( (1-\e^2 p) \PO(\e) + \e^2 p \PO(2) )
        \le
        \rL \frac{ (1-\e^2 p) \Pr{0 < \vL < \e} + 2 \e p }{ (1-\e^2 p) \Pr{\e \le \vL} }
        , \\
        \l \PL(\hatVO) &= \l (1-\e^2 p) \PL(\e) = \rL
        .
    \end{align*}
    where the inequality holds due to the rules of first- and second-price auctions: $\PO(\e) \le \e \Pr{0 < \vL < \e}$, $\PO(2) \le 2$, and $\PL(\e) \ge \e \Pr{\e \le \vL}$.
    By taking $\e \to 0$, we can always guarantee that $\l\PO(\hatVO) \to 0$, implying $\l\PO(\hatVO) \le \rO$.
    Now for a fixed (small) $\e$ we notice that the terms $\UO(\hatVO), \l\PO(\hatVO)$ are non-constant in $p$, as long as $\Pr{\vL < \e} \ne 1$.
    This means that small variations in $p$ retain feasibility ($\l\PO(\hatVO) \le \rO$) and generate points that form a line in $\R^2$.
    This means that $\conv(A)$ consists of at least a linear segment that is feasible.
    In addition, because both $\UO(\hatVO), \l\PO(\hatVO)$ vary in $p$, this line is not parallel to the $x$ or $y$ axes.
    This allows us, by slightly increasing $\l$ (retaining both $\l \PL(\hatVO) \ge \rL$ and $\l\PO(\hatVO) \le \rO$) to translate that line parallelly to the $y$ axis, while ensuring feasibility.
    Since the line is not parallel to the $y$ axis, this generates a region of positive measure, ensuring that there is a point in the relative interior of $\conv(A)$ that is feasible.
    This completes the proof for second-price auctions.
\end{proof}
\section{Second-price Budgeted Stackelberg Equilibrium with multiple Optimizer strategies}
\label{sec:app:multiple_dists_example}

In this section, we present an example of a Budgeted Stackelberg Equilibrium where the optimizer gets strictly more utility by mixing between multiple strategies.

We consider the following value distributions: the optimizer's value is $\vO = 1$ with probability $1$, while the learner's value is $\vL = 1/2$ with probability $1/3$ and $\vL = 1$ with probability $2/3$.
Since the optimizer has a deterministic value, her fake value $\hatvO$ is just a non-negative number instead of a function.
This leads to the following functions for the optimizer's expected utility and payment and the learner's payment if the optimizer uses $\hatvO \ge 0$:
\begin{equation*}
    \UO(\hatvO)
    =
    \begin{cases}
        0, &\textrm{ if } \phantom{1 \le\;} \hatvO \le \frac{1}{2} \\
        \frac{1}{3}, &\textrm{ if } \frac{1}{2} < \hatvO \le 1 \\
        1, &\textrm{ if } 1 < \hatvO
    \end{cases}
    ,\quad
    \PO(\hatvO)
    =
    \begin{cases}
        0, &\textrm{ if } \phantom{1 \le\;} \hatvO \le \frac{1}{2} \\
        \frac{1}{6}, &\textrm{ if } \frac{1}{2} < \hatvO \le 1 \\
        \frac{5}{6}, &\textrm{ if } 1 < \hatvO
    \end{cases}
    ,\quad
    \PL(\hatvO)
    =
    \begin{cases}
        \hatvO, &\textrm{ if } \phantom{1 \le\;} \hatvO \le \frac{1}{2} \\
        \hatvO\frac{2}{3}, &\textrm{ if } \frac{1}{2} < \hatvO \le 1 \\
        0, &\textrm{ if } 1 < \hatvO
    \end{cases}
\end{equation*}
which we also show visually in \cref{fig:app:example_details} (left).
We can assume that any optimal strategy by the optimizer uses only $\hatvO \in \{\frac{1}{2}, 1, 2\}$, since $1$ achieves that same utility and payment compared to any other $\frac{1}{2} < \hatvO \le 1$ but maximizes the Learner's payment.
A similar argument proves that $\frac{1}{2}$ is optimal compared to any $\hatvO \le \frac{1}{2}$, and $2$ is picked arbitrarily from all $\hatvO > 1$.

We now analyze how much utility the Optimizer can achieve if they uses a single fixed bidding strategy when their budget is some $\rho \ge 0$ and the learner has a budget of $1$.
If this utility function is strictly convex in any region, then there is a budget for the optimizer where in the Budgeted Stackelberg Equilibrium they are incentivized to randomize between multiple action distributions.
Since the optimizer is picking a distribution over $\hatvO \in \{\frac{1}{2}, 1, 2\}$, the optimization problem we have to solve is
\begin{optimization*}
    \widehat\opt(\rho)
    =
    \max_{\substack{\l, q_1, q_2, q_3 \ge 0\\q_1+q_2+q_3 = 1}}
    \quad &
    p_1\, \UO(\frac{1}{2}) + p_2\, \UO(1) + p_3\, \UO(2)
    \\
    \textrm{ such that }
    \quad &
    \l\,\qty( p_1\, \PO(\frac{1}{2}) + p_2\, \PO(1) + p_3\, \PO(2) ) \le \rho
    \\ &
    \l\,\qty( p_1\, \PL(\frac{1}{2}) + p_2\, \PL(1) + p_3\, \PL(2) ) = 1
\end{optimization*}
which has optimal solution
\begin{equation*}
    \widehat\opt(\rho) =
    \begin{cases}
        \frac{\rho}{1 - \rho}, &\textrm{ if } \rho \le \frac{1}{4}
        \\
        1 - \frac{5}{6(1 + \rho)}, &\textrm{ if } \rho \ge \frac{1}{4}
    \end{cases}
    ,\qquad
    \m{\l\\p_1\\p_2\\p_3} = 
    \begin{cases}
        \qty[ 2(1 - \rho) \quad 1 - \frac{3\rho}{1 - \rho} \quad \frac{3\rho}{1 - \rho} \quad 0 ]^\top, &\textrm{ if } \rho \le \frac{1}{4}
        \\[12pt]
        \qty[ \frac{6(1+\rho)}{5} \quad 0 \quad \frac{5}{4(1 + \rho)} \quad 1 - \frac{5}{4(1 + \rho)} ]^\top, &\textrm{ if } \rho \ge \frac{1}{4}
    \end{cases}
    .
\end{equation*}

We also plot the function $\widehat\opt(\rho)$ in \cref{fig:app:example_details} (right).
We notice that for $\rho < \frac{1}{4}$, the function $\widehat\opt(\rho)$ is strictly convex, since its second derivative is $\frac{2}{(1 - \rho)^3} > 0$.
This means that in that range the optimizer should mix between two bidding strategies to achieve their optimal Budgeted Stackelberg Equilibrium value.
Therefore, for $0 < \rho < \frac{1}{4}$, the optimizer, instead of using the optimal strategy of $\widehat\opt(\rho)$, should do with probability $1 - 4\rho$ the optimal strategy of $\widehat\opt(0)$ and with probability $4\rho$ the optimal strategy of $\widehat\opt(\frac{1}{4})$.

\begin{figure}[t!]
    \centering
    \includegraphics[width=\linewidth]{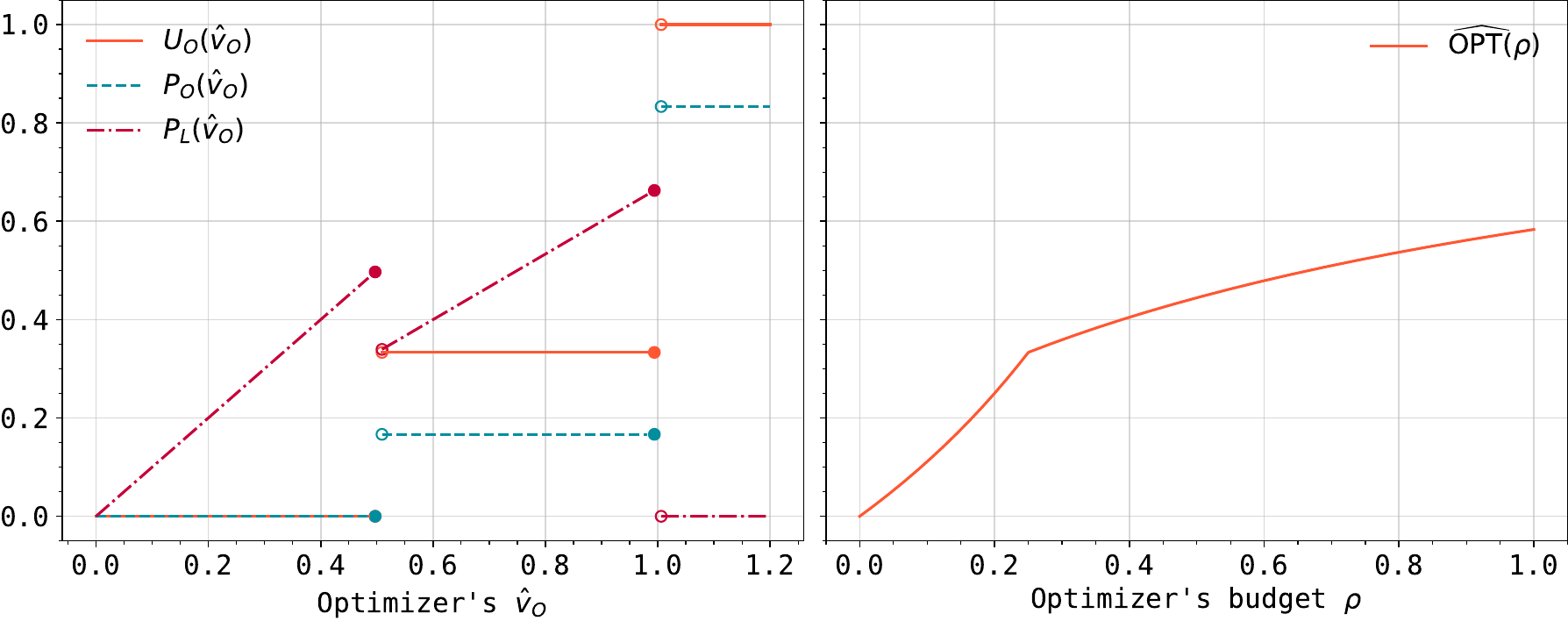}
    \caption{The functions $\UO(\cdot)$, $\PO(\cdot)$, $\PL(\cdot)$, of the example of \cref{sec:app:multiple_dists_example} (left) and the optimizer's optimal value when she uses a single distribution of actions with a budget of $\rho$ (right).}
    \label{fig:app:example_details}
\end{figure}

\section{Example where the Optimizer can get almost linearly more utility than the Equilibrium without the assumption of Theorem \ref{thm:main}}
\label{sec:app:dynamic:example}

In this section, we show an example where the optimizer can get almost $\Theta(T)$ more utility than what the BSE suggests when the learner has large probability mass around value $0$.

The optimizer's value is $\vO = 1$ with probability $1$.
The learner's value is sampled from the following CDF
\begin{equation*}
    F_L(x) =
    \begin{cases}
        \frac{1}{2} (2 x)^\d, & \textrm{ if } x \le \frac{1}{2} \\
        \frac{1}{2}, & \textrm{ if } \frac{1}{2} \le x < 1 \\
        1, & \textrm{ if } 1 \le x
    \end{cases}
\end{equation*}
for some small $\d > 0$.
In other words, with probability $1/2$, $\vL \in [0, 1/2]$ and with probability $1/2$, $\vL = 1$.
When the optimizer bids $\hat v$, we have the following functions
\begin{equation*}
    \UO(\hat v)
    =
    \begin{cases}
        \frac{1}{2} (2 \hat v)^\d, & \textrm{ if } \hat v \le \frac{1}{2} \\
        \frac{1}{2}, & \textrm{ if } \frac{1}{2} \le \hat v \le 1 \\
        1, & \textrm{ if } 1 < \hat v
    \end{cases}
    ,\quad
    \PO(\hat v)
    =
    \begin{cases}
        \frac{\d \, \hat v^{\d+1}}{2^{1-\d}(\d+1)}, & \textrm{ if } \hat v \le \frac{1}{2} \\
        \frac{\d}{4(1 + \d)}, & \textrm{ if } \frac{1}{2} \le \hat v \le 1 \\
        \frac{\d}{4(1 + \d)} + \frac{1}{2}, & \textrm{ if } 1 < \hat v
    \end{cases}
    ,\quad
    \PL(\hat v)
    =
    \hat v \qty\big( 1 - \UO(\hat v) )
\end{equation*}

Let the budget shares be $\rL = \PL(1) = \frac{1}{2}$ and $\rO = \frac{1}{2} \PO(1) = \frac{\d}{8(1 + \d)}$.
We will prove that the following is the optimal solution to the Budgeted Stackelberg equilibrium:
\begin{equation*}
    \begin{cases}
        \qty(\l_1^\star, \hat v_1^\star) = (1, 1), & \textrm{ w.p. } \frac{1}{2}
        \\
        \qty(\l_2^\star, \hat v_2^\star) = (+\infty, 0), & \textrm{ w.p. } \frac{1}{2}
    \end{cases}
\end{equation*}
where by $(\l_2^\star, \hat v_2^\star) = (+\infty, 0)$ we mean the limit of $(\l, \hat v) \to (\infty, 0)$.
We first notice that this is indeed a feasible strategy.
For both pairs the learner is best responding: $\l_1^\star \PL(\hat v_1^\star) = \PL(\hat v_1^\star) = \rL$ and for $(\l_2^\star, \hat v_2^\star)$ for every positive $\e > 0$, it holds $\PL(\e) > 0$, implying that there exists a limiting sequence that satisfies the learner's budget constraint.
As for the optimizer's budget, their spending is $\frac{1}{2} \l_1^\star \PO(\hat v_1^\star) = \frac{1}{2} \PO(1) = \rO$.
Then, the optimizer's utility is
\begin{equation*}
    \opt = \frac{1}{2} \UO(1) = \frac{1}{4}
\end{equation*}

We now show that this is indeed optimal.
To do so, we consider the Lagrangian relaxation of the optimizer's constraint using multiplier $\mu = \frac{\opt}{\rO} = 2 \frac{1 + \d}{\d}$ and its value $R^\star$, as defined in \cref{lem:auction:strong_duality}.
Since $\mu \rO = \opt$, all we have to prove is that $\R^\star \le 0$ (equality is achieved by examining our solution above), where
\begin{align*}
    R^\star &=
    \sup_{
        \hatVO \in \Delta(\Rz), \,
        \l \in \Rz
    }
    \quad
    \UO\qty\big(\hatVO) - \mu \l \PO\qty\big(\hatVO)
    \qquad
    \textrm{\textup{ such that }}
    \qquad
    \lambda \tilde \PL\qty\big(\hatVO) \ge \rL
    \\&=
    \sup_{
        \hatVO \in \Delta(\Rz), \,
        \l \in \Rz
    }
    \UO\qty\big(\hatVO) - \mu \rL \frac{\PO\qty\big(\hatVO)}{\PL\qty\big(\hatVO)}
    .
\end{align*}

Since all we have to show is that $R^\star \le 0$, all we need to show is that for all $\hatVO$ and $\l$ it holds that
\begin{equation*}
    \UO\qty\big(\hatVO) - \mu \rL \frac{\PO\qty\big(\hatVO)}{\PL\qty\big(\hatVO)}
    \le
    0
    \quad\iff\quad
    \UO\qty\big(\hatVO) \PL\qty\big(\hatVO) \le \mu \rL \PO\qty\big(\hatVO)
\end{equation*}

We prove the above inequality using the following case analysis:
\begin{itemize}
    \item If we restrict to $\hatVO \in \Delta([0, 1/2])$, then $\UO(\cdot)$, $\PL(\cdot)$ are concave and $\PO(\cdot)$ is convex.
    This means that what we have to prove is that for every $\hatvO \in [0, 1/2]$ it holds
    \begin{equation*}
        \UO\qty\big(\hatvO) \PL\qty\big(\hatvO)
        \le
        \mu \rL \PO\qty\big(\hatvO)
        \iff
        2^{-2 + \d} \hatvO^{1 + \d} (2 - 2^\d \hatvO^\d) \le 2^{-1 + \d} \hatvO^{1 + \d}
    \end{equation*}
    which we can easily check is true.

    \item Consider that $\hatVO \in [0, 1/2]$ with probability $q_1$, achieving expected value $\hatvO$ condition on that interval.
    Then the only other optimal values are $1$ or anything above $1$, e.g., $2$.
    Assume that $\hatVO = 1$ with probability $q_2$ and $\hatVO = 2$ with probability $q_3$.
    Then, using our observation about convexity/concavity on $[0, 1/2]$, all we have to prove is that
    \begin{equation*}
        \qty\big(q_1 \UO(\hatvO) + q_2 \UO(1) + q_3 \UO(2))
        \qty\big(q_1 \PL(\hatvO) + q_2 \PL(1) + q_3 \PL(2))
        \le
        \mu \rL 
        \qty\big(q_1 \PO(\hatvO) + q_2 \PO(1) + q_3 \PO(2))
    \end{equation*}
    which, using $q_3 = 1 - q_1 - q_2$ and substituting $\UO, \PL, PO$, we can re-write as
    \begin{align*}
        \d \;& \Big( \;
            q_1 q_2 \qty(-1 + 2 \hatvO + (2\hatvO)^\d - 3 \hatvO (2\hatvO)^\d)
            \\ & \;\; +
            (1 - q_1 - q_2) (q_2 - 3 + 4 \hatvO q_1 (1 - (2\hatvO)^\d))
            \\ & \;\; -
            q_1^2 \hatvO (2\hatvO)^{2\d}
        \\ & \; \Big) 
        - 2 (1 - q_1 - q_2)
        \le 0
    \end{align*}

    We prove the above by showing that the term in each line is non-positive.
    First, we notice that $-1 + 2 \hatvO + (2\hatvO)^\d - 3 \hatvO (2\hatvO)^\d$ is concave for $\hatvO \ge 0$, implying the maximum is attained where the derivative is $0$.
    By evaluating the derivative we see it is $0$ when $\hatvO = 1/2$, which makes the maximum $0$ and implies $-1 + 2 \hatvO + (2\hatvO)^\d - 3 \hatvO (2\hatvO)^\d \le 0$.
    
    Second, the term $4 \hatvO (1 - (2\hatvO)^\d)$ is also concave with respect to $\hatvO$ and its maximum is at $\hatvO = \frac{1}{2}(1 + \d)^{-1/\d}$, making the term $4 \hatvO (1 - (2\hatvO)^\d) \le 2 \d (1 + \d)^{- \frac{1+\d}{\d}} \le 1/2$.
    This makes the term of the second line at most $(1 - q_1 - q_2) (q_2 - 3 + q_1/2)$, which is non-positive since $q_1 + q_2 \le 1$.
    
    The third and fourth terms are both non-positive, making the entire expression non-negative.
\end{itemize}

\begin{figure}[t!]
    \centering
    \includegraphics[width=.99\linewidth]{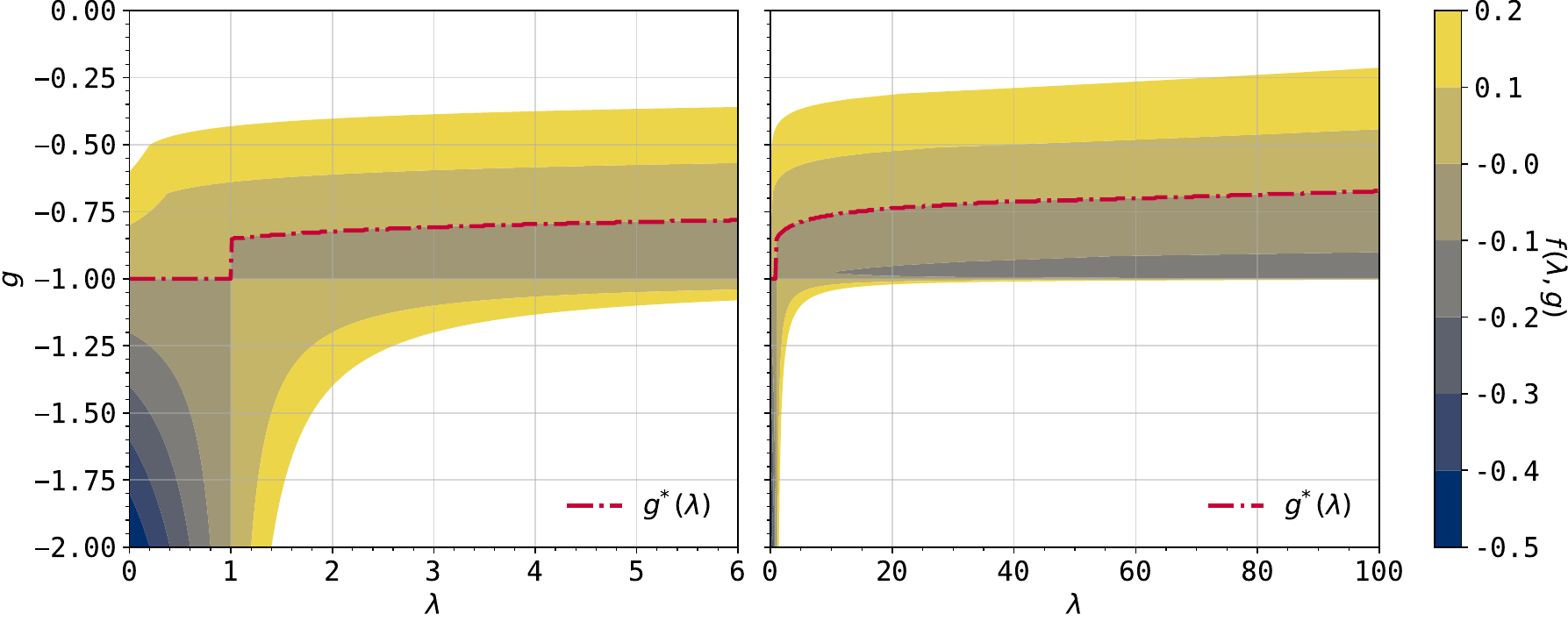}%
    \caption{The dual $f(\l, g)$ function of the example of \cref{sec:app:dynamic:example}, along with the induced $\gs(\l)$ function, for $\d = 0.05$. Plotted for small and large $\l$. We note that $\gs(\l)$ goes to $0$ at a very slow rate, indicating the optimizer's ability to gain substantial utility as $\l^{(t)}$ increases in the repeated game.}
    \label{fig:app:dynamic:f}
\end{figure}

Now consider the repeated setting over $T$ rounds.
As we will show, the optimizer here can achieve more than $\opt \cdot T + \order*{T^{2/3}}$ reward, unlike our guarantee in \cref{thm:main}.
This is because our distributional assumption, that there exists $\e > 0$ such that $\Pr{0 < \vL < \e\vO} = \Pr{0 < \vL < \e}$ fails.
In fact, for $\e \le 1/2$ we have that $\Pr{0 < \vL < \e} = \frac{1}{2} (2\e)^\d$, i.e., this goes to $0$ at a very slow rate.
The more technical reason why this example fails is that \cref{lem:dynamic:g_becomes_zero} fails: $\gs(\l)$ remains positive for all $\l$.
We note that in this example, the second condition of that lemma also fails: it holds that $R^\star = \Ex{\vO \One{\vL = 0}} = 0$.
In \cref{fig:app:dynamic:f} we see that $\gs(\l)$ converges to $0$ at a very slow rate.

To achieve $\opt \cdot T$ reward, the optimizer should bid $\l^{(t)} \hat v_1^\star$ in rounds $t \le T/2$ rounds, and then start bidding $0$.
For rounds $t \le T/2$, the learner's payment is $\l^{(t)} \PL(v_1^\star) = \frac{1}{2} \l^{(t)}$.
This makes the expected change in the learner's $\lambda$ is $\eta( \rL - \l^{(t)} \PL(v_1^\star) ) = \frac{\eta}{2}(1 - \l^{(t)})$, implying the Learner will stabilize around $\l^{(t)} \approx 1 = \l^\star_1$ in the first $T/2$ rounds, as suggested by the Budgeted Stackelberg equilibrium.
This implies that the optimizer is going to approximately satisfy her budget constraint, leading to the desired $\opt \cdot T$ reward.

However, consider the optimizer's strategy to switch from bidding $\l^{(t)} \hat v_1^\star$ to bidding $\nicefrac{1}{\mu}$ (i.e., use $\hat v = \nicefrac{1}{\l^{(t)}\mu}$) (recall the Lagrange multiplier $\mu = 2 \frac{1 + \d}{\d}$) after the first $T/2 - \tau$ rounds for some $\tau$.
We note that the second strategy is not arbitrary: it is the one that maximizes the optimizer's Lagrangian reward when ignoring the learner's budget constraint.
We now analyze this strategy when everything happens in expectation, i.e., the Learner's payment is exactly $\PL(\nicefrac{1}{\l^{(t)}\mu})$ in round $t$; this is accurate, since as $T$ grows larger, $\eta$ grows smaller and deviations due to random events become irrelevant.
We then complement this analysis with experimental data in \cref{ssec:app:dynamic:experiment}.

Let us calculate the learner's spending: when at $\l^{(t)}$, the learner will increase her pacing multiplier in expectation by
\begin{align} \label{eq:app:6}
    \l^{(t+1)} - \l^{(t)}
    & =
    \eta\qty( \rL - \l^{(t)} \PL\qty(\frac{1}{\l^{(t)}\mu}) )
    =
    \eta \qty(
        \frac{\d^{1 + \d}}{4 (\d+1)^{\d + 1} (\l^{(t)})^\d}
        +
        \frac{1}{2 (\d+1)}
    )
\end{align}

Since $\l^{(T/2 - \tau)} = 1$, and the above difference is always non-negative, we have that for round $t \ge T/2 - \tau$, it holds that
\begin{equation*}
    1 + \qty(t - \frac{T}{2} + \tau) \frac{\eta}{2 (\d+1)}
    \le
    \l^{(t)}
    \le
    1 + \qty(t - \frac{T}{2} + \tau) \frac{\eta}{2}\qty(1 - \frac{\d}{2} - \order{\d^2 \log\frac{1}{\d}})
\end{equation*}

We now analyze the optimizer's expected payment
\begin{alignat*}{3}
    \Line{
        \sum_{t = T/2 - \tau + 1}^T \l^{(t)} \PO\qty(\frac{1}{\l^{(t)}\mu})
    }{=}{
        \frac{\d^{2 + \d}}{ 4(1 + \d)^{2+\d}}
        \sum_{t = T/2 - \tau + 1}^T (\l^{(t)})^{-\d}
    }{}
    \\
    \Line{}{\le}{
        \frac{\d^{2 + \d}}{ 4(1 + \d)^{2+\d}}
        \sum_{t = 1}^{T/2 + \tau} \qty(1 + t \frac{\eta}{2 (\d+1)})^{-\d}
    }{}
    \\
    \Line{}{\le}{
        \frac{\d^{2 + \d}}{ 4(1 + \d)^{2+\d}}
        \int_{t = 0}^{T/2 + \tau} \qty(1 + t \frac{\eta}{2 (\d+1)})^{-\d}
    }{}
    \\
    \Line{}{=}{
        \frac{\d^{2 + \d}}{ 4(1 + \d)^{2+\d} }
        \frac{\qty(1 + \qty(\frac{T}{2} + \tau) \frac{\eta}{2 (\d+1)})^{1 - \d} - 1}{ (1 - \d) \frac{\eta}{2 (1 + \d)} }
    }{}
    \\
    \Line{}{\approx}{
        \qty( \frac{\d^2}{8}  - \order{\d^3 \log\frac{1}{\d}} )
        \frac{T^{1-\d}}{\eta^\d}
    }{}
\end{alignat*}
where in the last inequality we assumed that $\tau = o(T)$, $\eta T = \omega(1)$ and take $\delta \approx 0$.
Since the spending of the optimizer in the first $T/2 - \tau$ rounds is $\l_1^\star \PO(1) = \frac{\d}{4(1+\d)}$ per round and her total budget is $\rO T = \frac{\d}{8(1+\d)} T$ we pick $\tau$ so that
\begin{equation*}
    \frac{\d}{4(1+\d)}
    \qty( \frac{T}{2} - \tau)
    +
    \qty( \frac{\d^2}{8}  - \order{\d^3 \log\frac{1}{\d}} )
    \frac{T^{1-\d}}{\eta^\d}
    =
    \frac{\d}{8(1+\d)} T
\end{equation*}
which yields
\begin{equation*}
    \tau
    =
    \frac{4(1 + \d)}{\d}\qty( \frac{\d^2}{8}  - \order{\d^3 \log\frac{1}{\d}} )
    \frac{T^{1 - \d}}{\eta^\d}
    =
    \qty( \frac{\d}{2}  - \order{\d^2 \log\frac{1}{\d}} )
    \frac{T^{1 - \d}}{\eta^\d}
\end{equation*}

We note that this indeed yields $\tau = o(T)$ under our previous assumption that $\eta T = \omega(1)$.
We now calculate the optimizer's utility in the last $T + \tau/2$ rounds:
\begin{alignat*}{3}
    \Line{
        \sum_{t = T/2 - \tau + 1}^T \! \UO\qty(\frac{1}{\l^{(t)}\mu})
    }{=}{
        \frac{\d^\d}{2(1+\d)^\d} \sum_{t = T/2 - \tau + 1}^T \qty( \l^{(t)} )^{-\d}
    }{}
    \\
    \Line{}{\ge}{
        \frac{\d^\d}{2(1+\d)^\d}
        \sum_{t = 1}^{T/2 + \tau} \qty( 
            1
            +
            t
            \frac{\eta}{2}
            \qty(1 - \frac{\d}{2} - \order{\d^2 \log\frac{1}{\d}})
        )^{-\d}
    }{}
    \\
    \Line{}{\ge}{
        \frac{\d^\d}{2(1+\d)^\d}
        \int_{t = 1}^{T/2 + \tau} \qty( 
            1
            +
            t
            \frac{\eta}{2}
            \qty(1 - \frac{\d}{2} - \order{\d^2 \log\frac{1}{\d}})
        )^{-\d}
    }{}
    \\
    \Line{}{=}{
        \frac{\d^\d}{2(1+\d)^\d}
        \frac{
            \qty( 
                1
                +
                \qty( \frac{T}{2} + \tau )
                \frac{\eta}{2}
                \qty(1 - \frac{\d}{2} - \order{\d^2 \log\frac{1}{\d}})
            )^{1-\d}
            \!\!\!-
            \qty( 
                1
                +
                \frac{\eta}{2}
                \qty(1 - \frac{\d}{2} - \order{\d^2 \log\frac{1}{\d}})
            )^{1-\d}
        }{
            \frac{\eta}{2}
            \qty(1 - \frac{\d}{2} - \order{\d^2 \log\frac{1}{\d}})
            (1 - \d)
        }
    }{}
    \\
    \Line{}{\approx}{
        \qty(
            \frac{1}{4} - \order{\d \log\frac{1}{\d}}
        )
        \frac{T^{1-\d}}{\eta^\d}
    }{}
\end{alignat*}

Since in the first $T/2 - \tau$ rounds the optimizer is getting value $\frac{1}{2}$ per round, the above implies that the total value the optimizer gets is
\begin{equation*}
    \frac{T}{4}
    -
    \frac{\tau}{2}
    +
    \qty(
        \frac{1}{4} - \order{\d \log\frac{1}{\d}}
    )
    \frac{T^{1-\d}}{\eta^\d}
    \approx
    \frac{T}{4}
    +
    \qty( \frac{1}{4}  - \order{\d \log\frac{1}{\d}} )
    \frac{T^{1-\d}}{\eta^\d}
\end{equation*}

If we were to set $\eta = \Theta(T^{-\alpha})$ for some $\alpha \in (0, 1)$ (which satisfies all our previous assumptions), the optimizer's benefit would be $\Omega(T^{1 - (1-\alpha)\d})$.
As $\d \to 0$ this converges to linear in $T$ benefit by switching to bidding $1/\mu$ after round $T/2 - \tau$ compared to the BSE strategy to bidding $1$ for the first $T/2$ rounds.

\subsection{Experimental Analysis} \label{ssec:app:dynamic:experiment}

\begin{figure}[t!]
    \centering
    \includegraphics[width=.47\linewidth]{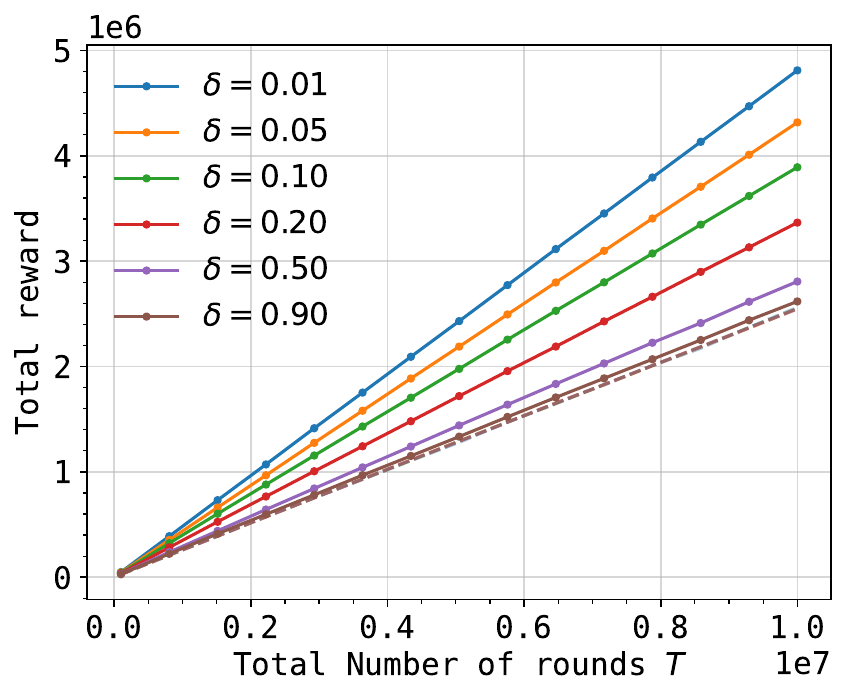}%
    \hfill%
    \includegraphics[width=.5\linewidth]{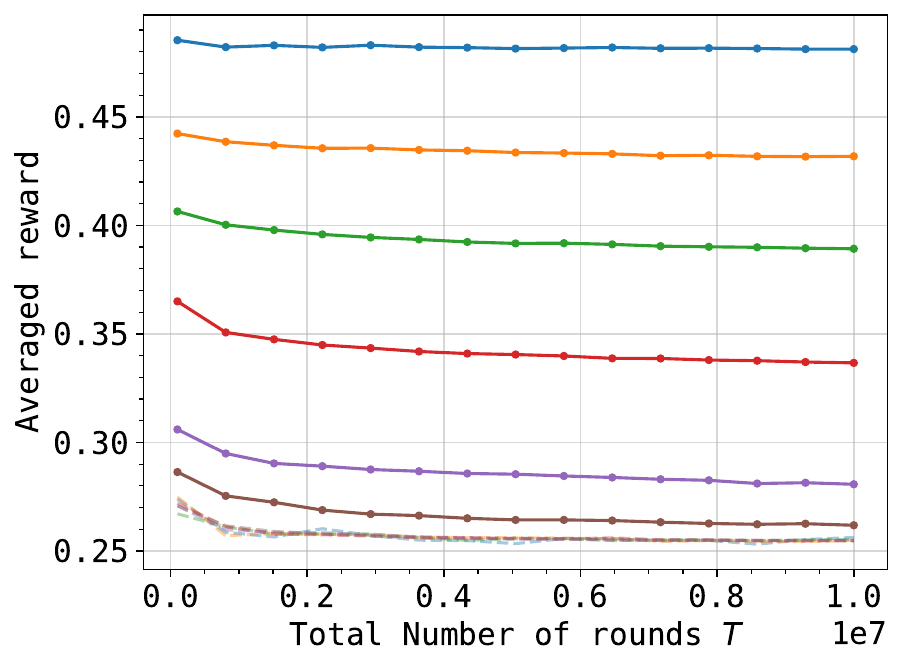}%
    \caption{The experiments of \cref{ssec:app:dynamic:experiment}. For $10^5 \le T \le 10^7$ (x-axis) and different $\d$ (different lines) we compare the optimizer's reward when using strategy \ref{item:app:dynamic:strat1} (dotted lines that collapse to one line) and strategy \ref{item:app:dynamic:strat2} (solid lines).
    }
    \label{fig:app:dynamic:experiment}
\end{figure}

We complement the above analysis of the expected outcome with the following experiment.
First, we fix $\eta = T^{-2/3}$, as in \cref{thm:main}, in which it is guaranteed that the optimizer can get at most $\order*{T^{2/3}}$ above her BSE value.
The learner starts at $\l^{(1)} = 0$ and follows the update \cref{eq:update}.
We then compare the following two strategies by the Optimizer:
\begin{enumerate}
    \item\label{item:app:dynamic:strat1}
    The optimizer bids $\l^{(t)}$ in round $t$, as long as there is budget remaining.

    \item\label{item:app:dynamic:strat2}
    The optimizer bids $\l^{(t)}$ in round $t$, for $t \le \frac{T}{2} - \tau$ and then bids $\frac{1}{\mu} = \frac{\d}{2(1+\d)}$.
    If at any point there is no budget remaining, bid $0$.
\end{enumerate}

We notice that bidding $\l^{(t)}$ and then stopping after half the rounds is the BSE.
In fact, because the learner starts at a low $\l^{(1)} = 0$ and takes $\Theta(1/\eta)$ rounds to converge to $\l = 1$, we expect strategy \ref{item:app:dynamic:strat1} to do slightly higher than the BSE value of $\frac{T}{4}$.
We show the value that the optimizer gets for different values of $T$ and $\d$.
Specifically, we consider $10^5 \le T \le 10^7$ and $0.01 \le \d \le 0.9$.

We present the results of the experiments in \cref{fig:app:dynamic:experiment}.
We make multiple observations.
\begin{enumerate}
    \item Due to the large number of rounds, there is little variation between experiments, implying that random events concentrate very well.

    \item For all the values of $\d$, strategy \ref{item:app:dynamic:strat1} does slightly better than the BSE value of $\frac{T}{4}$.
    In fact, for all the values of $\d$ we examine, this strategy does consistently the same.

    \item Strategy \ref{item:app:dynamic:strat2} does considerably better.
    Specifically, according to our previous analysis and since $\eta = T^{-2/3}$, we expect the optimizer's reward to be $\frac{T}{4}\qty\big(1 + T^{-\d/3})$ for small $\d$.
    For $\d = 0.01$, this reward becomes $\frac{T}{4}\qty\big(1 + T^{-0.003})$.
    For the range $10^5 \le T \le 10^7$, this makes the optimizer's average reward $\frac{1}{4}\qty\big(1 + T^{-0.007}) \in [0.487, 0.491]$.
    This is very close to double the BSE value of $T/4$ and also very close to what we get experimentally.
\end{enumerate}

\end{document}